\def\s{\mathbf{s}}
\def\ba{\mathbf{a}}
\def\bC{\mathbf{C}}
\def\bc{\mathbf{c}}
\def\bmu{\boldsymbol{\mu}}
\def\blam{\boldsymbol{\lambda}}
\def\I{\mathcal{I}}
\def\RR{\mathcal{R}}
\def\I{\mathcal{I}}
\def\R{\mathbb{R}}
\def\C{\mathbb{C}}
\def\bb{\mathbf{b}}
\def\bX{\mathbf{X}}
\def\bW{\mathbf{W}}
\def\bV{\mathbf{V}}
\def\bU{\mathbf{U}}
\def\bQ{\mathbf{Q}}
\def\H{\mathbf{H}}
\def\y{\mathbf{y}}
\def\x{\mathbf{x}}
\def\s{\mathbf{s}}
\def\h{\mathbf{h}}
\def\bA{\mathbf{A}}
\def\bM{\mathbf{M}}
\def\bD{\mathbf{D}}
\def\bH{\mathbf{H}}
\def\bB{\mathbf{B}}
\def\bG{\mathbf{G}}
\def\bh{\mathbf{h}}
\def\bthe{\boldsymbol{\Theta}}
\def\bw{\mathbf{w}}
\def\u{\mathbf{u}}
\def\bw{\mathbf{w}}
\def\bY{\mathbf{Y}}
\newtheorem{theorem}{Theorem}
\newtheorem{lemma}{Lemma}
\newtheorem{remark}{Remark}
\newtheoremstyle{noparens}%
  {}{}%
  {\itshape}{}%
  {\bfseries}{.}%
  { }%
  {\thmname{#1}\thmnumber{ #2}\mdseries\thmnote{ #3}}
\theoremstyle{noparens}
\title{Optimization of Beyond-Diagonal RIS: A Universal Framework Applicable to Arbitrary Architectures}
\author{\IEEEauthorblockN{Zheyu Wu and Bruno Clerckx,  \IEEEmembership{Fellow, IEEE}}
  	\thanks{Received 23 January 2026; revised 2 June 2026; accepted 29 June 2026. This work has been supported in part  by the UK Research and Innovation (UKRI) under Grant EP/Y004086/1, Grant EP/X040569/1, Grant EP/Y037197/1, Grant EP/X04047X/1, and Grant EP/Y037243/1. \emph{(Corresponding author: Bruno Clerckx.)}}

    	\thanks{Z. Wu is with the Department of Electrical and Electronic Engineering, Imperial College London, London, SW7 2AZ, U.K. (email: zheyu.wu@imperial.ac.uk). }
	
	\thanks{ B. Clerckx is with the Department of Electrical and Electronic Engineering, Imperial College London, London, SW7 2AZ, U.K, and also with Department of Electronic Engineering, Kyung Hee University, Yongin-si, Gyeonggi-do 17104, South Korea. (email: b.clerckx@imperial.ac.uk) }

  }
\date{today}
\begin{document}
\maketitle
\begin{abstract}
Reconfigurable intelligent surfaces (RISs) are envisioned as a promising technology for future wireless communication systems due to their ability to control the propagation environment in a hardware- and energy-efficient way. Recently, the concept of RISs has been extended to beyond-diagonal RISs (BD-RISs), which unlock the full potential of RISs thanks to the presence of tunable interconnections between RIS elements.
 %While various algorithms have been proposed for specific BD-RIS architectures, a universal optimization framework applicable to arbitrary architectures is still lacking. 
While various algorithms have been proposed for BD-RIS optimization, they mainly focus on specific  architectures whose scattering matrices exhibit very special structures. A universal optimization framework that can accommodate different BD-RIS circuit topologies is still lacking.
 In this paper, we  bridge this research gap by proposing an architecture-independent framework for BD-RIS optimization,  with the main focus on sum-rate maximization and transmit power minimization in multiuser multi-input single-output (MU-MISO) systems. 
  Specifically, we first incorporate BD-RIS architectures into the models  by connecting the scattering matrix with the admittance matrix and introducing appropriate constraints to the admittance matrix. The formulated problems are then solved by our custom-designed partially proximal alternating direction method of multipliers (pp-ADMM) algorithms. %In particular, the pp-ADMM algorithms are computationally efficient, with each subproblem either admits closed-form solution or can be easily solved.
The pp-ADMM algorithms are computationally efficient, with each subproblem either admitting a closed-form solution or being easily solvable. %They also exhibit desirable convergence properties under mild assumptions.
 %We further explore the extension of the proposed framework to general utility functions and multiuser multi-input multi-output (MU-MIMO) systems. 
 Simulation results demonstrate that the proposed approaches achieve a better trade-off between performance and computational efficiency compared to existing methods. %{\color{black}  We also compare the performance of various BD-RIS architectures in MU-MISO systems using the proposed approach, which has not been explored before due to the lack of an architecture-independent framework.} %%need to revise

% we incorporate BD-RIS architectures into the models by connecting the scattering matrix with the admittance matrix and imposing proper constraints on the admittance matrix. 

\end{abstract}
\begin{IEEEkeywords}
Alternating direction method of multipliers, architecture independent, beyond-diagonal reconfigurable intelligent surface, sum-rate maximization, transmit power minimization
\end{IEEEkeywords}
\section{Introduction}
Reconfigurable intelligent surfaces (RISs) \cite{RIS1,RIS_sumrate,RIS_MM,RIS_tutorial,RIS_survey2} are envisioned as a promising technology to enhance the performance and coverage of future wireless communication systems. A RIS consists of numerous near-passive reconfigurable elements that can be coordinated to dynamically shape the propagation environment. Compared to traditional active devices like antennas or relays, RISs operate with significantly lower cost and power, making them a sustainable solution
 that meets the growing demand for low-cost, energy-efficient, and high-performance wireless networks in 6G and beyond. %Over the past five years, RISs have gained considerable attention from both academia and industry \cite{}. 

In a conventional RIS, each reconfigurable element is individually connected to its own load to ground without any inter-element connection, which results in a diagonal scattering matrix, also referred to as a phase shift matrix in the literature. Very recently, Beyond-diagonal RISs (BD-RISs) have emerged as an advanced generalization of conventional RISs  \cite{BDRIS,BDRIS_survey}. By allowing interconnections between different reconfigurable elements through tunable impedances, BD-RISs are able to achieve scattering matrices not limited to being diagonal, offering greater flexibility in beam control.  Different circuit topologies of inter-element connections result in different BD-RIS architectures.  
% BD-RIS architectures have been proposed based on the circuit topology of inter-element connections. 
The fully-connected RIS, in which every pair of reconfigurable elements is inter-connected, offers the highest design flexibility but also entails the highest circuit complexity \cite{BDRIS}.  To mitigate the high complexity of fully-connected RISs, group-connected architectures have been introduced in \cite{BDRIS,group_conn}. Furthermore, two novel BD-RIS architectures, known as tree-connected and forest-connected RISs, have been developed in \cite{tree} and \cite{forest} using graph theory and have been shown to achieve the performance-complexity Pareto frontier in single-user multi-input single-output (MISO) systems.

A rich body of research has demonstrated the advantages of BD-RISs over conventional RISs in multiple aspects. BD-RIS is needed to achieve the upper bound on power/SNR maximization \cite{BDRIS} and is hence also needed to achieve any information theoretic bound of RIS-aided multiuser or multi-antenna systems. Beyond this, BD-RIS  offers numerous additional benefits, including boosting received signal power and sum-rate \cite{BDRIS,BDRIS_survey}, supporting various operational modes (reflective/transmissive/hybrid) \cite{group_conn}, expanding coverage \cite{coverage}, reducing the number of resolution bits \cite{discrete},  achieving better capability to overcome mutual coupling \cite{mutualcoupling,mutualcoupling2}, etc. The potential of BD-RISs has also been explored in TDD duplex systems (for channel attack) \cite{attack},   wideband communication systems \cite{OFDM,OFDM2,OFDM3}, full-duplex systems \cite{duplex}, dual-function radar-communication (DFRC) systems \cite{DFRC}, unmanned aerial vehicle (UAV) systems \cite{UAV}, and rate-splitting multiple access (RSMA) systems \cite{RSMA0,RSMA}.  Furthermore, several extensions of BD-RIS have been proposed to enhance its performance, such as new grouping strategies adaptive to channel state information (CSI) \cite{grouping,grouping2}, a distributed BD-RIS deployment that is less susceptible to lossy interconnections \cite{lossy}, and a BD-RIS aided stacked intelligent metasurface with improved  wave manipulation capabilities \cite{SIM}.  

Despite the great benefits offered by BD-RISs, their optimization is far more challenging than that of conventional RISs. For conventional RISs, the scattering matrix is diagonal, with a unit-modulus constraint on each diagonal entry. Various optimization methods have been developed to handle this constraint, including semidefinite relaxation (SDR) \cite{RIS1},   manifold optimization \cite{RIS_sumrate}, and majorization-minimization \cite{RIS_MM}.    These algorithms are tailored to the diagonal phase-shift structure and cannot be directly applied to BD-RISs. The more complicated circuit topology of BD-RISs introduces more complicated objective functions and constraints to the corresponding optimization problems, such as unitary and/or symmetric constraints \cite{BDRIS,group_conn,coverage,grouping,grouping2,duplex,channel,RSMA0,RSMA,OFDM2,OFDM3,UAV,DFRC} and constraints/objective functions involving matrix inversion \cite{mutualcoupling,mutualcoupling2,discrete,tree,forest,lossy,OFDM}. 
In addition, compared with conventional RIS, the number of optimization variables increases dramatically due to the significantly larger number of tunable impedances in BD-RISs. For example, in fully-connected BD-RISs, the number of impedances grows quadratically with the number of RIS elements, which imposes higher demands on algorithmic efficiency.
%Second, the number of variables increases dramatically due to the significantly larger number of impedances in BD-RISs. In fully-connected BD-RISs,  for example, the number of impedances grows quadratically with the number of RIS elements. This imposes higher demands on algorithm efficiency.   

Existing optimization strategies for BD-RISs can mainly be catagorized into  closed-form approaches, heuristic approaches, and optimization-based approaches. Closed-form approaches seek the closed-form solution of the corresponding optimization problem \cite{closeform, tree,forest,mutualcoupling2}. However, this  kind of method is applicable only in very limited scenarios, such as single-user single-input single-output (SISO) systems.  Heuristic approaches \cite{twostage} first determine the scattering matrix heuristically and then treat it as fixed in beamforming design. Although simple, such approaches lack theoretical support and do not offer performance guarantees. Optimization-based approaches leverage classical optimization techniques to tackle the complexities  imposed by BD-RISs. Examples include employing manifold optimization for unitary constraints \cite{group_conn,grouping,yang}, applying penalty dual decomposition to handle unitary and symmetric constraints \cite{PDD},  applying Neumann series approximation \cite{mutualcoupling} and quasi-Newton method \cite{RSMA0} to deal with matrix inversion, etc. These approaches generally achieve satisfactory performance, but at the expense of higher computational cost than heuristic approaches. The key to designing optimization-based approaches is to strike a balance between efficiency and effectiveness.

Although various approaches have been proposed for BD-RIS optimization in different scenarios, existing works have the following limitations. First and most importantly, {\color{black} all existing models  for BD-RIS optimization in multiuser systems are formulated based on the scattering matrix. However, there is generally no clear relationship between the scattering matrix and the inherent circuit topology of BD-RISs. Consequently, existing models are fundamentally limited and can only handle very special BD-RIS architectures, e.g., fully-connected and group-connected BD-RIS \cite{group_conn,PDD,twostage} (Please see  Remark \ref{remark1} for more detailed discussions).}
To date, an architecture-independent optimization framework that can accommodate different BD-RIS circuit topologies in a unified manner is still lacking for general multiuser systems.
 Second, existing algorithms are  either heuristic \cite{twostage} or time-consuming \cite{PDD}. In particular, very few works deal with the transmit power minimization problem, possibly due to the technical challenges arising from the coupling of  BD-RIS constraints with users' quality-of-service (QoS) constraints \cite{PDD}. To the best of our knowledge, the only existing algorithm is \cite{PDD}, which, however, is computationally inefficient when the number of RIS elements is large.

Motivated by the aforementioned limitations, this paper aims to develop an efficient architecture-independent framework for BD-RIS  optimization,  considering both sum-rate maximization and transmit power minimization problems. The contributions are summarized as follows. 

%, considering both sum-rate maximization and power minimization problems. The main contributions are summarized as follows. 
%\begin{itemize}
%\item \emph{Architecture-independent models}. 
 \emph{{\color{black}1)  New optimization models that accommodate different BD-RIS architectures. }}
We formulate the sum-rate maximization and transmit power minimization problems for BD-RIS-aided multiuser MISO (MU-MISO) systems in an architecture-independent way such that the models can accommodate different BD-RIS architectures. Specifically, building on microwave network theory and existing BD-RIS modeling \cite{tree,generalmodel}, we relate the scattering matrix with the admittance matrix and incorporate BD-RIS architecture into the model by imposing proper sparsity constraints on the admittance matrix. % This is in sharp contrast to all existing models  for BD-RIS optimization in multiuser systems, which are  {\color{black} formulated based on the scattering matrix and are} specific to group- and fully-connected BD-RISs.    
The admittance matrix directly relates to the circuit topology of the reconfigurable impedance network and provides a sparser and more flexible mathematical representation than the scattering matrix. For example, for group-connected BD-RISs, both the scattering and admittance matrices are block diagonal. However, for more efficient architectures such as tree- and forest-connected BD-RISs \cite{tree,forest}, the admittance matrices are sparse, whereas the corresponding scattering matrices are  full. Therefore, using the admittance matrix as the optimization variable is essential for explicitly capturing different BD-RIS architectures, whereas the scattering matrix generally does not directly reveal the underlying circuit topology. 

{\color{black}\emph{2) Efficient algorithms for solving the formulated problems.} Compared with existing BD-RIS optimization models, the formulated problems are more challenging to solve due to  the complicated non-convex constraint that relates the scattering and admittance matrices and the additional architecture-specific constraint on the admittance matrix. 
We propose efficient partially proximal alternating direction method of multipliers (pp-ADMM) algorithms for solving them. }%formulated problems. 
 The key idea is to first introduce appropriate auxiliary variables to {\color{black}greatly reduce the problem dimension} and make the problems amenable to the ADMM framework.
 This includes transforming the constraint involving matrix inversion into a low-dimensional bilinear constraint, simplifying the sum-rate maximization problem with fractional programming (FP) technique, and decomposing the complicated QoS constraint in the transmit power minimization problem into a bilinear constraint and an easy-to-projection constraint. {\color{black}Then, a pp-ADMM algorithm is designed to solve the problem, where appropriate proximal terms are introduced into the updates of specific variables to enhance stability and guarantee convergence. The proposed algorithm is computationally efficient and enjoys desirable theoretical convergence properties.}
   %Then, the ADMM framework is applied to solve the transformed problems, with proximal terms incorporated for specific variables  to ensure stability and convergence of the algorithms.  

%to deal with the constraints with matrix inversion introduced by the inclusion of admittance matrix, we introduce auxiliary variables to transform it into low-dimensional bi-linear constraints, which are more amenable to ADMM framework and greatly reduce the dimension of the variables. 

\emph{3) Generalization of the proposed approach}. We extract an optimization framework to handle general utility functions and, in particular, explore its applicability to max-min fairness and energy efficiency maximization problems. In addition, we discuss how the proposed framework can be extended to more general multiuser multiple-input multiple-output (MU-MIMO) systems, statistical channel state information (CSI) settings, practical finite-resolution BD-RIS implementations, and other BD-RIS operational modes.
%\end{itemize}

To the best of our knowledge, the proposed approach is the first architecture-independent framework for  BD-RIS optimization { in general multiuser systems}. In addition, simulation results demonstrate that, when applied to group- and fully-connected BD-RISs, our approach achieves a superior trade-off between sum-rate performance and computational efficiency compared to the existing state-of-the-art algorithms in \cite{PDD,twostage}. For the transmit power minimization problem, our approach attains lower transmit power with significantly reduced computational complexity than the only existing algorithm in \cite{PDD}. 

\emph{Organization}: The rest of the paper is organized as follows.  We first introduce the system model and problem formulations in Section \ref{sec:2}. In Sections \ref{sec:3} and \ref{sec:4}, we propose efficient pp-ADMM algorithms for solving the sum-rate maximization and transmit power minimization problems of BD-RIS-aided MU-MISO systems, respectively.  In Section \ref{sec:generalization}, we discuss the generalization of the proposed optimization framework. 
The effectiveness and efficiency of the proposed approaches are verified through extensive simulations in Section \ref{sec:5}. Finally, the paper is concluded in Section \ref{sec:6}. 

\emph{Notations}: Throughout the paper, $\mathbb{C}$ and $\mathbb{R}$ denote the complex space and the real space, respectively. We use $x$, $\x$, $\mathbf{X}$, and $\mathcal{X}$ to represent a scalar, column vector, matrix, and set, respectively. For a matrix $\mathbf{X}$,  $\mathbf{X}_{\mathcal{S}_1,\mathcal{S}_2}$ denotes its submatrix with  rows indexed by $\mathcal{S}_1$ and columns indexed by $\mathcal{S}_2$, where  $\mathbf{X}(\mathcal{S}_1,\mathcal{S}_2)$ is also used in certain context to avoid possible confusion. In particular, $X_{n,m}$ denotes the $(n,m)$-th entry of $\mathbf{X}$. The operators $(\cdot)^T$, $(\cdot)^\dagger$, $(\cdot)^{-1}$, $\RR(\cdot)$, and $\I(\cdot)$ return the transpose, the Hermitian transpose, the inverse, the real part, and the imaginary part of their corresponding argument, respectively.     Given two vectors $\x\in\C^{n\times 1}$ and $\y\in\C^{n\times 1}$, $\left<\x,\y\right>$ represents their inner product, i.e., $\left<\x,\y\right>=\RR(\x^\dagger\y)$. For two sets $\mathcal{X}_1$ and $\mathcal{X}_2$, $\mathcal{X}_1\backslash \mathcal{X}_2$ denotes the difference between $\mathcal{X}_1$ and $\mathcal{X}_2$. 
  The notation  $\|\cdot\|_2$  refers to the  $\ell_2$-norm of a vector or the spectral norm of a matrix, and $\|\cdot\|_F$ denotes the Frobenius norm of a matrix. The operator $|\cdot|$ returns the modulus if applied to a scalar and the number of elements if applied to a set. The symbol $\mathbf{I}_n$  refers to an $n$-dimensional identity matrix. Finally, $\mathsf{i}$ represents the imaginary unit.

 %For example, a dynamic grouping strategy is introduced in \cite{} to enable BD-RIS architectures to better adapt to channel state information (CSI). The author in \cite{} proposed  A distributed BD-RIS is proposed in \cite{} to be more resilient to lossy interconnections. Furthermore, a stacked intelligent metasurface aided by BD-RIS has been proposed in \cite{} to achieve better  manipulation.
%Furthermore, several extensions of BD-RIS have been proposed to enhance its performance, such as a dynamic grouping strategy that enables BD-RIS architectures to adapt to channel state information (CSI) \cite{}, a distributed BD-RIS that is less susceptible to lossy interconnections \cite{}, and a BD-RIS-aided stacked intelligent metasurface with improved wave manipulation capabilities \cite{}.

%and enhancing beam control flexibility  with dynamic grouping and stacked in...

\section{Problem Formulation}\label{sec:2}
\subsection{System Model}
%\begin{figure}[t]
%\includegraphics[scale=0.3]{bdris}
%\centering
%\caption{A BD-RIS aided MU-MISO system.}
%\label{BDRIS_simu}
%\end{figure}
Consider a BD-RIS-aided MU-MISO system\footnote{We discuss the generalization of the proposed approach to the MU-MIMO system in Section \ref{sec:generalization}.}, where a base station (BS) with $N$ transmit antennas serves $K$ single-antenna users simultaneously with the assistance of an $M$-element BD-RIS. As in \cite{BDRIS}, the $M$-element BD-RIS can be modeled as $M$ antennas connected to an $M$-port reconfigurable impedance network, which induces a non-diagonal scattering matrix $\bthe\in\C^{M\times M}$ to control the propagation of the incident waveform.   Denote $\s=[s_1,s_2,\dots, s_K]^T$ as the data symbol vector for the users with $\mathbb{E}\left[\s\s^\dagger\right]=\mathbf{I}_K$, and let $\bW=[\bw_1,\bw_2,\dots,\bw_K]\in\C^{N\times K}$ be the beamforming matrix. %, where $s_k$ and $\bw_k$ is the data symbol for user $k$ and corresponding precoding vector, repsectively.
Then the received signal at the $k$-th user can be expressed as\footnote{In this paper, we assume perfect matching, no mutual coupling,  no structural scattering or specular reflection, and that the unilateral approximation holds \cite{generalmodel}. } 
\begin{equation}\label{sys_model}
r_k=\bh_k^\dagger\bthe\bG\sum_{k=1}^K\bw_ks_k+n_k,~k=1,2,\dots, K, 
\end{equation}
where $\bG\in\C^{M\times N}$ is  the channel matrix between the BS and the BD-RIS, $\bh_k\in\C^{M\times 1}$ denotes  the channel vector  between the BD-RIS and the $k$-th user, and  $n_k\sim\mathcal{CN}(0,\sigma^2)$ is the additive white Gaussian noise. Throughout this paper, we assume that perfect instantaneous  channel state information (CSI) is available at the BS.  For simplicity, we also neglect the direct channel between the BS and the users; however, all results can be straightforwardly generalized to scenarios where the direct channel is taken into account.

Our goal in this paper is to develop a framework for jointly optimizing the beamforming matrix $\bW$ and the scattering matrix $\bthe$, in a universal manner such that the framework is architecture-independent, i.e., it is  applicable to any given BD-RIS architecture; see Section \ref{sec:RIS_arch} for detailed discussions on BD-RIS architectures. 
 For this purpose, we resort to the microwave network theory \cite{microwavebook} and relates the scattering matrix $\bthe$ (i.e., the $S$-parameter) with the admittance matrix $\mathbf{Y}\in\C^{M\times M}$ (i.e., the $Y$-parameter) as 
\begin{equation}\label{theta_Y}
\bthe=(\mathbf{I}+Z_0\mathbf{Y})^{-1}(\mathbf{I}-Z_0\mathbf{Y}),
\end{equation}
where $Z_0$ denotes the reference impedance, usually set as $Z_0=50\,\Omega$.
 The  $Y$-parameter  has the advantage of being able to characterize the circuit topology of BD-RISs. Specifically,  denote the admittance for the $n$-th port of the reconfigurable impedance network as $\bar{Y}_n$,  and denote the admittance connecting the $n$-th and $m$-th ports as $\bar{Y}_{n,m}$. According to \cite{tree}, the admittance matrix $\mathbf{Y}$ of the reconfigurable impedance network  is given by
\begin{equation}\label{admittance}
Y_{n,m}=\left\{
\begin{aligned}
&-\bar{Y}_{n,m},~~~~~~~~~\,~~&\text{if}~n\neq m;\\
&\bar{Y_n}+\textstyle\sum_{j\neq n}\bar{Y}_{n,j},~~&\text{if }n=m.
\end{aligned}\right.
\end{equation}
Clearly, for $n\neq m$, $Y_{n,m}$ is nonzero if and only if $\bar{Y}_{n,m}$ is nonzero. Therefore,  the interconnections between different ports in the reconfigurable impedance network are represented by the non-zero elements in $\mathbf{Y}$. %, allowing  it to  characterize  any arbitrary BD-RIS architecture.

In the following, we review several popular   architectures proposed in the existing literature (see also Fig. \ref{topology}). 
\subsection{Different RIS Architectures}\label{sec:RIS_arch}
\subsubsection{Single-Connected RIS} When there is no interconnection between ports, i.e., $Y_{n,m}=0$ for all $n\neq m$,  the BD-RIS reduces to the traditional diagonal RIS, also known as a single-connected RIS, which has been extensively studied in literature. In this case, $\bY$ and $\bthe$ are both diagonal.
\begin{figure*}
\centering
\includegraphics[scale=0.25]{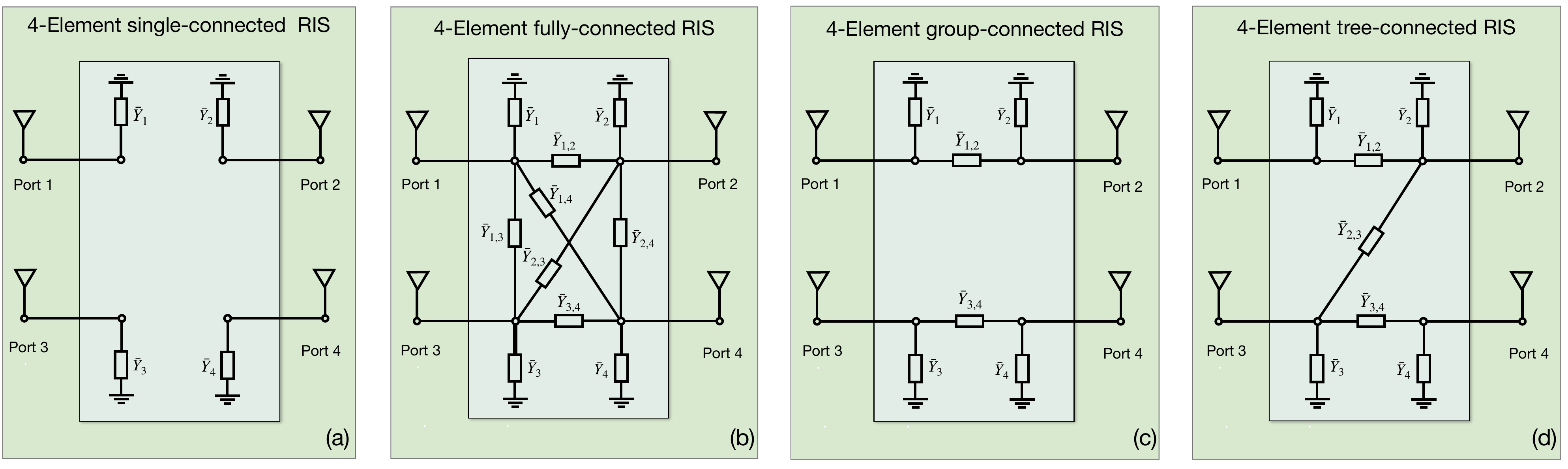}
\caption{Different BD-RIS architectures, where (a) single-connected RIS, (b) fully-connected RIS, (c) group-connected RIS, (d) tree-connected RIS. }
\label{topology}
\end{figure*}
\subsubsection{Fully-Connected RIS} When there is an impedance connecting each pair of ports, the corresponding architecture is referred to as a fully-connected RIS. In this case, both $\bY$ and $\bthe$ are full matrices, i.e., $Y_{n,m}\neq 0$ and $\Theta_{n,m}\neq 0$ for all $n,m\in\{1,2,\dots, M\}$. The fully-connected RIS offers the highest design flexibility but also suffers from the highest circuit complexity.
\subsubsection{Group-Connected RIS}\label{sec:group} The group-connected RIS has been proposed to achieve a favorable trade-off between system performance and circuit complexity \cite{group_conn}. In group-connected RIS, the RIS elements are equally divided into multiple groups, where each group employs a fully-connected architecture, and no connections exist between different groups.
%In group-connected RIS, the RIS elements are divided into several groups,  with fully-connected architectures within groups and no connections among groups.
 Let $G$ denote the number of groups; then each group has a number of $G_s=M/G$ RIS elements, which we refer to as the group size. In this case, both $\bY$ and $\bthe$ exhibit  a block diagonal structure. %:
%$$\begin{aligned}
%\bY&=\text{blkdiag}(\bY_1,\bY_2,\dots, \bY_G),\\
%\bthe&=\text{blkdiag}(\bthe_1,\bthe_2,\dots, \bthe_G).\\
%\end{aligned}$$
In particular, a group-connected RIS becomes a fully-connected RIS when $G=1$ and reduces to a single-connected RIS when $G=M$.
\subsubsection{Tree-Connected RIS}
In \cite{tree}, the circuit topology of  BD-RIS  is characterized by a graph $\mathcal{G}=(\mathcal{V},\mathcal{E})$, where the vertex set $\mathcal{V}:=\{V_1,V_2,\dots, V_M\}$ represents the ports of the reconfigurable impedance network,  and the edge set $\mathcal{E}:=\left\{(V_i,V_j)\mid Y_{i,j}\neq 0,~i\neq j\right\}$ records all the interconnections between  ports. Tree-connected RISs refer to a class of BD-RIS architectures whose corresponding graph is a tree \cite{tree}. One example of tree-connected RISs is the  tridiagonal RIS \cite{tree}, whose circuit topology forms a path graph, and the corresponding admittance matrix is a tridiagonal matrix given by
\begin{equation}\label{tri}\bY=\left[\begin{matrix}
Y_{1,1}&Y_{1,2}&&\vspace{-0.1cm}\\
Y_{1,2}&Y_{2,2}&\ddots&\vspace{-0.0cm}\\
&\hspace{-0.5cm}\ddots&\hspace{-0.2cm}\ddots&Y_{M-1,M}\vspace{0.1cm}\\
&&\hspace{-0.5cm}Y_{M-1,M}&\hspace{-0.3cm}Y_{M,M}
\end{matrix} \right].\end{equation}
Due to the nonlinear relationship between $\bthe$ and $\bY$ in \eqref{theta_Y}, the scattering matrix $\bthe$ of a tree-connected RIS does not exhibit a clear structure and can, in general, be a full matrix.

To illustrate this point, the following remark provides an example and further explains why it is necessary to work with the admittance matrix $\bY$ to characterize arbitrary BD-RIS architectures.
\begin{remark}\label{remark1}
The one-to-one mapping between non-zero entries in  $\bY$  and physical interconnections between ports does not hold for the scattering matrix $\bthe$, i.e., a port interconnection does not necessarily correspond to a non-zero entry in $\bthe$, and vice versa. This is because the mapping from $\bY$ to $\bthe$, given by \eqref{theta_Y}, is inherently nonlinear. As a result, the scattering matrix can only be used to characterize BD-RIS architectures that 
 exhibit very special structures, e.g., group-connected RISs, where a block-diagonal $\bY$ leads to a block-diagonal $\bthe$, as discussed in Section \ref{sec:group}. For general BD-RIS architectures, including tree-connected RIS and the more recent architectures proposed in \cite{graphtit}, the corresponding scattering matrix $\bthe$ does not exhibit any exploitable structure. 

To better illustrate this, we take a $4$-element tridiagonal RIS as an example, where the $i$-th port is connected to the $(i+1)$-th port with admittance $\bar{Y}_{i,i+1}=\frac{1}{50}$ siemens (S),~$i=1,2,3$, and each port has a self-admittance $\bar{Y}_{i}=\frac{1}{50}$ S,~$i=1,2,3,4$. According to \eqref{admittance}, the  admittance matrix 
$\bY$ is$$\bY=\frac{1}{50}\left[\begin{matrix}2&-1&0&0\\-1&2&-1&0\\0&-1&2&-1\\0&0&-1&2\end{matrix}\right],$$ which has a tridiagonal structure in \eqref{tri}, while the scattering matrix, according to \eqref{theta_Y}, is 
$$\bthe\hspace{-0.05cm}=\hspace{-0.05cm}\frac{1}{10}\hspace{-0.1cm}\left[\begin{matrix}-4.0-7.5\mathrm{i}\hspace{-0.25cm}&4.6-1.0\mathrm{i}\hspace{-0.25cm}&2.1+1.0\mathrm{i}\hspace{-0.25cm}&0.7+0.8\mathrm{i}\\4.6-1.0\mathrm{i}\hspace{-0.25cm}&-0.9-6.6\mathrm{i}&5.2-0.2\mathrm{i}\hspace{-0.25cm}&2.1+1.0\mathrm{i}\hspace{-0.25cm}\\2.1+1.0\mathrm{i}\hspace{-0.25cm}&5.2-0.2\mathrm{i}\hspace{-0.25cm}& -1.9 - 6.5\mathrm{i}\hspace{-0.25cm}& 4.6 - 1.0\mathrm{i}\\0.7+0.8\mathrm{i}\hspace{-0.25cm}&2.1+1.0\mathrm{i}\hspace{-0.2cm}&4.6 - 1.0\mathrm{i}\hspace{-0.25cm}&-4.0 - 7.5\mathrm{i}\hspace{-0.05cm}\end{matrix}\right]\hspace{-0.05cm},$$ 
which is a full matrix without any exploitable structure. \end{remark}

{To maximize the reflected power of BD-RISs, the admittance matrix $\mathbf{Y}$ of the reconfigurable impedance network should be a purely imaginary matrix,  denoted as  $\bY=\mathsf{i}\bB$, where $\bB\in\R^{M\times  M}$ is the susceptance matrix. % of the reconfigurable impedance network.
    Equivalently, this amounts to a unitary constraint on   $\bthe$, i.e., $\bthe^{\dagger}\bthe=\mathbf{I}_M$. As commonly assumed in the literature,  the reconfigurable impedance network is considered reciprocal, leading to symmetric constraints:
$\bthe=\bthe^T$,  $\bY=\bY^T$, and $\bB=\bB^T$.  }%This paper focus on lossless and reciprocal BD-RIS architectures.  }

It has been proved in \cite{tree} that the tree-connected RIS is optimal for single-user MISO systems under the lossless and reciprocal setup, which achieves the same channel gain as the fully-connected RIS while maintaining the lowest circuit complexity. However, its performance in multiuser systems remains unknown due to the lack of a universal optimization framework applicable to arbitrary architectures. Throughout this paper, an arbitrary BD-RIS architecture refers to any given circuit topology of the  lossless and reciprocal reconfigurable impedance network, which is encoded through the sparsity pattern of the admittance/susceptance matrix; see \eqref{arch_constraint}.

 \subsection{Problem Formulation}
 With the system model in \eqref{sys_model}, the signal-to-interference-plus-noise ratio (SINR) of the $k$-th user is 
 \begin{equation}\label{SINR}
 \text{SINR}_k(\bthe,\bW)=\frac{|\bh_k^\dagger\bthe\bG\bw_k|^2}{\sum_{j\neq k}|\bh_k^\dagger\bthe\bG\bw_j|^2+\sigma^2},~k=1,2,\dots, K.
 \end{equation}
To account for different BD-RIS architectures, we introduce the following  notation: 
\begin{equation}\label{arch_constraint}
 \begin{aligned}\mathcal{B}=\{\mathbf{B}\in\R^{M\times M}\mid& B_{i,j}=0,~\text{if there is no interconnection}\\
&\text{between the $i$-th and $j$-th ports},~i\neq j\},\end{aligned}
\end{equation}
i.e., the set $\mathcal{B}$ collects all susceptance matrices corresponding to a specified BD-RIS circuit topology.

 In this paper, we mainly focus on  two different problem formulations:  sum-rate maximization under the total transmit power constraint and transmit power minimization under users' QoS constraints. The proposed framework, however, can  be generalized to accommodate other utility functions; see discussions in  Section \ref{sec:generalization}. 

\subsubsection{Sum-rate Maximization} Based on the above discussions, the sum-rate maximization problem can be formulated as 
\begin{subequations}\label{multiuser}
\begin{align}
\max_{\bW,\bthe,\bB}~&\sum_{k=1}^K\log\left(1+\text{SINR}_k(\bthe,\bW)\right)\\
\text{s.t.}~~ &\bthe=\left(\mathbf{I}+\mathrm{i}Z_0\mathbf{B}\right)^{-1}\left(\mathbf{I}-\mathrm{i}Z_0\mathbf{B}\right),~~\label{con:theta}\\
~~&\bB=\bB^{T},~\bB\in\mathcal{B},\label{con:B}\\
&\|\bW\|_F^2\leq P_T,\label{con:power}
\end{align}
\end{subequations}
where constraint \eqref{con:theta} relates the scattering matrix $\bthe$ with $\bB$, constraint  \eqref{con:B} is to ensure that the reconfigurable impedance network is reciprocal\footnote{For non-reciprocal RISs, there is no symmetric constraint, which simplifies the corresponding optimization problem. Our proposed framework is directly applicable in this case.} and complies with the specified architecture, and \eqref{con:power} is the transmit power constraint with $P_T$ being the maximum transmit power. 
\subsubsection{Transmit Power Minimization} The transmit power minimization problem is formulated as 
\begin{subequations}\label{min_power}
\begin{align}
\min_{\bW,\bthe,\bB}~&\|\bW\|_F^2\\
\text{s.t. }~~&\text{SINR}_k(\bthe,\bW)\geq \Gamma_k,~k=1,2,\dots, K,\label{con:QoS}\\
&\bthe=(\mathbf{I}+\mathrm{i}Z_0\bB)^{-1}(\mathbf{I}-\mathrm{i}Z_0\bB),\label{con:theta2}\\
&\bB=\bB^T,~\bB\in\mathcal{B},\label{con:B2}
\end{align}
\end{subequations}
where  $\Gamma_k>0$ is the pre-specified SINR threshold for user $k$.
 
  Both the sum-rate maximization and  transmit power minimization problems have been extensively studied in traditional MU-MISO systems \cite{sumrate2,WMMSE,SOCP,duality} and diagonal RIS-aided MU-MISO systems \cite{RIS1,RIS_sumrate,RIS_MM,RIS_tutorial,RIS_survey2}. {\color{black}Recent studies have also developed optimization algorithms for BD-RIS-aided systems \cite{group_conn,grouping,yang,PDD,twostage}.  Compared with these works, our formulated models are considerably more universal, as they accommodate arbitrary BD-RIS architectures, whereas prior works \cite{group_conn,grouping,yang,PDD,twostage} cannot because they are directly modeled using the scattering matrix (see the discussion in Remark \ref{remark1}).}
  
  {\color{black}  However, the new problem formulations also introduce new challenges for algorithm design. Specifically, incorporating arbitrary BD-RIS architectures requires formulating the problems with the admittance (susceptance) matrix. This results in constraints involving matrix inversion, as shown in \eqref{con:theta} and \eqref{con:theta2}, which is further coupled with the architecture-specific constraint \eqref{con:B} and \eqref{con:B2}, making our problems significantly more complex than existing BD-RIS optimization models.}
   %The large dimensions of $\bthe$ and $\bB$, arising from the numerous impedances in BD-RIS,  exacerbate the challenges.  }

   In the following two sections, we will propose two efficient algorithms for solving problems \eqref{multiuser} and \eqref{min_power} by carefully exploiting the structure of the problems.

  %Both the above two problems have been extensively studied for traditional MU-MISO systems and diagonal RIS-aided MU-MISO systems. However, there still lacks efficient approaches for solving these two problems applicable to any given BD-RIS architecture.
 \section{Sum-Rate Maximization Problem}\label{sec:3}
In this section, we propose an efficient algorithm for solving the sum-rate maximization problem in \eqref{multiuser}. First, we transform \eqref{multiuser} into a more amenable form in Section \ref{sec:algorithm1}, which is then solved by a custom-designed partially proximal ADMM (pp-ADMM) algorithm in Section \ref{sec:algorithm2}. The complexity and convergence analysis of the proposed pp-ADMM algorithm is given in Section {\ref{sec:algorithm3b}.}
\subsection{An Equivalent Formulation}\label{sec:algorithm1}
{\color{black} An important observation we can draw from the SINR expression in \eqref{SINR} is that $\text{SINR}_k$ depends on $\bthe$ only through the  vector $\bthe^{\dagger}\h_k\in\C^{M\times 1}$.} {\color{black} Since the number of users is typically much smaller than the number of RIS elements, this motivates us to replace $\bthe$ with a low-dimensional variable $$\bU=\bthe^\dagger \bH\in\C^{M\times K},$$ where $\bH=[\h_1,\h_2,\dots,\h_K]$ and  $\bU=[\u_1,\u_2,\dots, \u_K]$.} With this new variable, the SINR of user $k$ can be expressed as 
a function of $\u_k$ and $\bW$ as 
\begin{equation}\label{SINR:ukW}
\text{SINR}_k(\u_k,\bW)=\frac{|\u_k^\dagger\bG\bw_k|^2}{\sum_{j\neq k}|\u_k^\dagger\bG\bw_j|^2+\sigma^2}.
\end{equation}
{\color{black}In addition, rewriting \eqref{con:theta} (and \eqref{con:theta2}) in terms of variable $\bU$ yields
$$\left(\mathbf{I}-\mathrm{i}Z_0\mathbf{B}\right)\bU=\left(\mathbf{I}+\mathrm{i}Z_0\mathbf{B}\right)\bH,$$
which reduces the dimension of the constraint from $M^2$ to $MK$.}

With the new variable $\bU$, problem \eqref{multiuser} is now transformed into the following equivalent form:
 \begin{subequations}\label{eqn:reformulate1}
\begin{align}
\max_{\bW,\bB,\bU}~&R(\bW,\bU):=\sum_{k=1}^K\log\left(1+\text{SINR}_k(\u_k,\bW)\right)\label{obj}\\
\text{s.t.}~\,~~ &\left(\mathbf{I}-\mathrm{i}Z_0\mathbf{B}\right)\bU=\left(\mathbf{I}+\mathrm{i}Z_0\mathbf{B}\right)\bH,\label{con:bilinear}\\
~~&\bB=\bB^{T},~\bB\in\mathcal{B}, \label{con:symmetric2} \\
&\|\bW\|_F^2\leq P_T.\label{con:power2}
\end{align}
\end{subequations}
{\color{black} Compared with \eqref{multiuser}}, the advantage of this new formulation is twofold. First,  there is no longer any matrix inversion in the constraints. The complicated constraint involving matrix inversion in \eqref{con:theta} is transformed into the bilinear constraint in \eqref{con:bilinear}, which is more trackable. Second, it greatly reduces the dimension  of both the optimization variables and constraints. {\color{black}By working with the lower-dimensional problem in \eqref{eqn:reformulate1}, the per-iteration computational complexity of the proposed approach is upper bounded by \( \mathcal{O}(M^3 K^3) \) (as will be discussed in Section~\ref{sec:algorithm3b}), whereas directly handling the original formulation would result in a complexity of $\mathcal{O}(M^6)$.}
%: (i) the variable $\bthe\in\C^{M\times M}$ is replaced by the low-dimensional auxiliary variable $\bU\in\C^{M\times K},$ and (ii) the $M^2$-\,dimensional constraint in \eqref{con:theta} is replaced by the $MK$-\,dimensional constraint in \eqref{con:bilinear}.%, where we note that  in practice, $M\gg K$.

Since the objective function \eqref{obj} is still complicated, we employ the FP technique \cite{FP,FP2} to rewrite it into the following equivalent form:
\begin{equation}\label{FP:sumrate}
R(\bW,\bU)=\max_{\y,\boldsymbol{\gamma}}\tilde{R}(\y,\boldsymbol{\gamma},\bW,\bU),
\end{equation}
where 
\begin{equation*}\label{def:F}
\begin{aligned}
&\tilde{R}(\y,\boldsymbol{\gamma},\bW,\bU)=\sum_{k=1}^K \log(1+\gamma_k)-\gamma_k\\
&+(1+\gamma_k)\bigg(2\left<y_k,\u_k^\dagger\bG\bw_k\right>-|y_k|^2\big(\sum_{j=1}^K|\u_k^\dagger\bG\bw_j|^2+\sigma^2\big)\bigg).
\end{aligned}\end{equation*}
With the above transformation, problem \eqref{eqn:reformulate1}  can be equivalently expressed as 
    \begin{equation}\label{eqn:reformulate2}
\begin{aligned}
\max_{\y, \boldsymbol{\gamma},\bW,\bB,\bU}~&\tilde{R}(\y,\boldsymbol{\gamma},\bW,\bU)\\\
\text{s.t.}~~~~~~&\eqref{con:bilinear}-\eqref{con:power2}.
\end{aligned}
\end{equation}
Problems \eqref{eqn:reformulate1} and \eqref{eqn:reformulate2} are equivalent in the sense that given any optimal solution $(\bW^*,\bB^*,\bU^*)$ of problem \eqref{eqn:reformulate1}, $(\y^*,\boldsymbol{\gamma}^*,\bW^*,\bB^*,\bU^*)$ with
$$y_k^*=\frac{{\u_k^*}^\dagger\bG\bw^*_k}{\sum_{j=1}^K|{\u^*_k}^\dagger\bG\bw^*_j|^2+\sigma^2}$$ and $$\gamma_k^*=\frac{|{\u^*_k}^\dagger\bG\bw^*_k|^2}{\sum_{j\neq k}|{\u^*_k}^\dagger\bG\bw^*_j|^2+\sigma^2}$$
is an optimal solution  of problem \eqref{eqn:reformulate2}, and vise versa \cite{FP2}.
%\begin{remark}
%Let $$R_k(\{\bw_k\},\u_k):=\log\left(1+\frac{|\u_k^\dagger\bG\bw_k|^2}{\sum_{j\neq k}|\u_k^\dagger\bG\bw_j|^2+\sigma^2}\right)$$ and $${F}_k(y_k,\gamma_k,\{\bw_k\},\u_k)=\sum_{k=1}^K \log(1+\gamma_k)-\gamma_k+(1+\gamma_k)\left(2\RR(y_k^\dagger\u_k^\dagger\bG\bw_k)-|y_k|^2\left(\sum_{j=1}^K|\u_k^\dagger\bG\bw_j|^2+\sigma^2\right)\right).$$ To show the relationship in stationary points, it suffices to show  
%$$\nabla_{{\u}_k}R_k(\bw_k^*,\u_k^*)=\nabla_{{\u}_k}{F}_k(y_k^*,\gamma_k^*,\bw_k^*,\u_k^*)$$ and 
%$$\sum_{k=1}^K\nabla_{{\bw}_k}R_k(\bw_k^*,\u_k^*)=\sum_{k=1}^K\nabla_{{\bw}_k}{F}_k(y_k^*,\gamma_k^*,\bw_k^*,\u_k^*).$$
%\end{remark}
\subsection{pp-ADMM Algorithm for Solving \eqref{eqn:reformulate2}}\label{sec:algorithm2}
{Compared to the original problem \eqref{multiuser}, problem \eqref{eqn:reformulate2} is much easier to tackle. Its objective function $\tilde{R}(\y,\boldsymbol{\gamma},\bW,\bU)$ is concave and straightforward to  optimize with respect to each variable block when the other blocks are fixed. In addition, constraints \eqref{con:symmetric2} and \eqref{con:power2} are simple constraints on $\bB$ and $\bW$, respectively, and do not include coupling among different variable blocks.   The only remaining challenge lies in the bilinear constraint \eqref{con:bilinear} that couples variables $\bB$ and $\bU$. This structure naturally motivates the use of an ADMM-type framework \cite{boyd2011distributed,survey}. By incorporating the bilinear equality constraint into the augmented Lagrangian, the coupling between $\mathbf{B}$ and $\mathbf{U}$ can be handled through a quadratic penalty term.  Consequently, the resulting subproblems with respect to different variable blocks are tractable and either admit closed-form solutions or can be solved efficiently, which will be detailed later.}

In the following, we propose a partially proximal ADMM (pp-ADMM) algorithm for solving \eqref{eqn:reformulate2}. First, the Lagrangian function of \eqref{eqn:reformulate2} is given by
   \begin{equation}\label{lagrangian}
    \begin{aligned}
&\mathcal{L}_{\rho}(\y,\boldsymbol{\gamma},\bW,\bB,\bU,\blam)\\
=\,&\tilde{R}(\y,\boldsymbol{\gamma},\bW,\bU)-\left<\blam,(\mathbf{I}-\mathrm{i}Z_0\bB)\bU-(\mathbf{I}+\mathrm{i}Z_0\bB)\bH\right>\\
&-\frac{\rho}{2}\left\|(\mathbf{I}-\mathrm{i}Z_0\bB)\bU-(\mathbf{I}+\mathrm{i}Z_0\bB)\bH\right\|_F^2,
\end{aligned}
\end{equation}
where $\blam\in\C^{M\times K}$ is the Lagrange multiplier and $\rho$ is the penalty parameter. The proposed pp-ADMM algorithm is given as follows: 
\begin{subequations}\label{ppADMM}
\begin{align}
\y^{t+1}&\in\arg\max~\mathcal{L}_{\rho}(\y,\boldsymbol{\gamma}^t,\bW^t,\bB^t,\bU^{t},\blam^t),\label{suby}\\
\boldsymbol{\gamma}^{t+1}&\in\arg\max~\mathcal{L}_{\rho}(\y^{t+1},\boldsymbol{\gamma},\bW^t,\bB^t,\bU^{t},\blam^t),\label{subgamma}\\
\bW^{t+1}&\in\arg\max_{{\scriptsize\|\bW\|_F^2\leq P_T}}\mathcal{L}_{\rho}(\y^{t+1},\boldsymbol{\gamma}^{t+1},\bW,\bB^t,\bU^{t},\blam^t)\notag\\
&\hspace{2.6cm}-\frac{\tau}{2}\|\bW-\bW^t\|_F^2,\label{subW}\\
\bB^{t+1}&\in\arg\max_{\bB=\bB^T, \bB\in\mathcal{B}}\mathcal{L}_{\rho}(\y^{t+1},\boldsymbol{\gamma}^{t+1},\bW^{t+1},\bB,\bU^{t},\blam^t)\notag\\
&\hspace{2.8cm}-\frac{\xi}{2}\|\bB-\bB^t\|^2_{\text{sym}},\label{subB}\\
\bU^{t+1}&\in\arg\max~\mathcal{L}_{\rho}(\y^{t+1},\boldsymbol{\gamma}^{t+1},\bW^{t+1},\bB^{t+1},\bU,\blam^t),\label{subU}\\
\boldsymbol{\lambda}^{t+1}&=\boldsymbol{\lambda}^t+\rho((\mathbf{I}-\mathrm{i}Z_0\bB^{t+1})\bU^{t+1}-(\mathbf{I}+\mathrm{i}Z_0\bB^{t+1})\bH).\label{sublambda}
\end{align}
\end{subequations}
In \eqref{subB}, $\|\cdot\|_{\text{sym}}$ denotes the symmetric norm defined for a symmetric matrix. Specifically, for a symmetric matrix $\mathbf{X}\in\C^{M\times M},$ $\|\mathbf{X}\|_{\text{sym}}^2:=\sum_{i=1}^M|X_{i,i}|^2+\sum_{i<j}|X_{i,j}|^2,$which assigns equal weight to all distinct entries of $\mathbf{X}$. Different from classical ADMM, we incorporate proximal regularization terms in the updates of $\bW$ and $\bB$, controlled by the parameters $\tau$ and $\xi$, respectively. These proximal terms are essential for guaranteeing numerical stability and for enabling a rigorous theoretical convergence analysis of the proposed algorithm. Further details are provided  in Section \ref{sec:W} and Section \ref{sec:B}.  We next detail the update of each variable. 
\subsubsection{Update of $\y$} The $\y$-subproblem \eqref{suby} is unconstrained and quadratic, admitting a closed-form solution as  \begin{equation}\label{updatey} y_k^{t+1}=\frac{{(\u_k^t)}^\dagger\bG\bw_k^t}{\sum_{j=1}^K|{(\u_k^t)}^\dagger\bG\bw_j^t|^2+\sigma^2},~~~k=1,2,\dots, K.\end{equation}
\subsubsection{Update of $\boldsymbol{\gamma}$} The $\boldsymbol{\gamma}$-subproblem \eqref{subgamma} is strictly concave. By setting the derivative of $\tilde{R}(\y^{t+1},\boldsymbol{\gamma},\bW^t,\bU^t)$ with respect to $\boldsymbol{\gamma}$ to zero and solving the corresponding equation, it is easy to obtain the following closed-form solution:
\begin{equation}\label{updategamma}
\gamma_k^{t+1}=\frac{|(\u_k^t)^\dagger\bG\bw_k^t|^2}{\sum_{j\neq k}|(\u_k^t)^\dagger\bG\bw_j^t|^2+\sigma^2},~~k=1,2,\dots, K.
\end{equation}
\subsubsection{Update of $\bW$} \label{sec:W}
The $\bW$-subproblem can be written  as
 \begin{equation}\label{updatewk}
\min_{\|\bW\|_F^2\leq P_T}\,\sum_{k=1}^K\left(\bw_k^\dagger\bQ\bw_k-2\left<\bc_k,\bw_k\right>\right)+\frac{\tau}{2}\|\bW-\bW^t\|_F^2,
\end{equation}
where $\bQ=\sum_{j=1}^K(1+\gamma_j^{t+1})|y_j^{t+1}|^2\bG^\dagger\u_j^t({\u_j^t})^\dagger\bG$ and $\bc_k=y_k^{t+1}(1+\gamma_k^{t+1})\bG^\dagger\u_k^t$ (in the notations $\bQ$ and $\bc_k$, we omit their dependance on $t$ for simplicity).  Note that $\bQ\in \C^{N\times N}$ is a summation of $K$ rank-one matrices, and thus has a rank of at most $K$. This implies that when $N>K$, i.e., the number of transmit antennas is larger than the number of users in the system, $\bQ$ is rank deficient. The proximal term is introduced to ensure that the objective function of \eqref{updatewk} is  strongly convex, which is important {\color{black} for establishing the theoretical} convergence of the algorithm. 

Problem \eqref{updatewk} is a quadratically constrained quadratic programming (QCQP), which can be efficiently solved via a one-dimensional bisection search. Based  on the first-order optimality condition of \eqref{updatewk}, its solution  can be expressed as  
\begin{equation*}\label{wsolution}
\bw_k^{t+1}=\bigg(\bQ+\left(\frac{\tau}{2}+\eta^*\right)\mathbf{I}_N\bigg)^{-1}\left(\bc_k+\frac{\tau}{2}\bw_k^t\right),~k=1,\dots, K,\end{equation*}
where $\eta^*\geq 0$ is the Lagrange multiplier associated with the total transmit power constraint and is determined according to the complementary slackness condition. Specifically,  let 
$$\bw_k(\eta)=\left(\bQ+\left(\frac{\tau}{2}+\eta\right)\mathbf{I}_N\right)^{-1}\left(\bc_k+\frac{\tau}{2}\bw_k^t\right).$$ Then 
$\eta^*=0$ if $\sum_{k=1}^K\|\bw_k(0)\|^2\leq P_T$ and $\eta^*$ is the solution to 
\begin{equation}\label{powerP}
\sum_{k=1}^K\|\bw_k(\eta)\|^2=P_T
\end{equation} otherwise.
Equation \eqref{powerP} can be efficiently solved via a one-dimensional bisection search by rewriting it as  $$\sum_{n=1}^N\frac{\|\boldsymbol{\phi}_n\|^2}{(d_n+\tau/2+\eta)^2}=P_T,$$
where $\boldsymbol{\phi}_n^T$ is the $n$-th row of the matrix 
$\bU_Q^\dagger[\bc_1+\tau\bw_1^t/2,~\bc_2+\tau\bw_2^t/2,\dots,\bc_K+\tau\bw_K^t/2]$, and $\bQ=\bU_Q\bD_Q\bU_Q^\dagger$ is the eigenvalue decomposition of $\bQ$ with $\bD_Q=\text{diag}(d_1,d_2,\dots,d_N)$; see more details in \cite{WMMSE}.

\subsubsection{Update of $\bB$}\label{sec:B}
The $\bB$-subproblem \eqref{subB} has the following form:
\begin{equation}\label{updateB1}
\begin{aligned}
\hspace{-0.1cm}\min_{\bB=\bB^T\atop \bB\in\mathcal{B}}\hspace{-0.05cm}{\color{black}\frac{\rho}{2}}&\left\|\bB(\mathrm{i}Z_0\bU^t\hspace{-0.1cm}+\hspace{-0.05cm}\mathrm{i}Z_0\H)\hspace{-0.05cm}-\hspace{-0.1cm}\left(\bU^t\hspace{-0.1cm}-\hspace{-0.05cm}\H\hspace{-0.05cm}+\hspace{-0.05cm}\frac{\blam^{t}}{\rho}\hspace{-0.05cm}\right)\hspace{-0.05cm}\right\|^2\hspace{-0.15cm}+\hspace{-0.05cm}\frac{\xi}{2}\|\bB\hspace{-0.05cm}-\hspace{-0.05cm}\bB^t\|^2_{\text{sym}}.
\end{aligned}
\end{equation}
Since variable $\bB$ is real-valued, it is convenient to transform \eqref{updateB1} into the real space as
\begin{equation}\label{updateB2}
\min_{\bB=\bB^T, \bB\in\mathcal{B}}~~{\color{black}\frac{\rho}{2}}\|\bB\mathbf{M}-\mathbf{\Gamma}\|_F^2+\frac{\xi}{2}\|\bB-\bB^t\|_\text{sym}^2,
\end{equation}
where 
\begin{equation*}\label{def:MGamma}
\begin{aligned}
\bM&=[\RR(\mathrm{i}Z_0\bU^t+\mathrm{i}Z_0\H),~\I(\mathrm{i}Z_0\bU^t+\mathrm{i}Z_0\H)]\in\R^{M\times 2K}\\
\end{aligned}
\end{equation*}
and $$
\boldsymbol{\Gamma}=\left[\RR\left(\bU^t-\H+{\blam^{t}}/{\rho}\right),\I\left(\bU^t-\H+{\blam^{t}}/{\rho}\right)\right]\in\R^{M\times 2K}.
$$
Due to the symmetry of $\bB$ and the constraint $\bB\in\mathcal{B}$, the unknowns in $\bB$ are actually all its non-zero  upper {triangular}  elements. Let 
\begin{equation}\label{relation:xB}
\x=[\bB_{1,\mathcal{S}_1},\bB_{2,\mathcal{S}_2},\dots,\bB_{M,\mathcal{S}_M}]^T
\end{equation}\text{with }
$
%\begin{aligned}
%\mathcal{S}_i=\{i\}\cup\{j\mid j>i,~\text{there is an impedance connecting}\\\text{the $j$-th and $i$-th ports}\}, 
\mathcal{S}_i=\{j\mid j\geq i,B_{i,j}\neq 0\}, %\{i,i+1,\dots, M\}\setminus\{j\mid j>i,B_{i,j}=0\}, 
%\end{aligned}
$ i.e., 
  $\x$ collects all the non-zero elements in the upper triangular parts of $\bB$. Then problem \eqref{updateB2} can be equivalently expressed as an unconstrained quadratic programming over $\x$. Specifically, let  $\{\ba_{m}^T\}_{m=1}^M$ denote the rows of $\bM$, i.e., $\bM^T=[\ba_1,\ba_2,\dots,\ba_M]$, and let $\bb=\text{vec}(\boldsymbol{\Gamma}^T)$ and  $\mathcal{S}_i=\{i_1,i_2,\dots, i_{|\mathcal{S}_i|}\}.$
Then$$\|\bB\bM-\boldsymbol{\Gamma}\|_F^2=\|\bA\x-\bb\|_2^2,$$
where   \begin{equation*}\label{splitA}\bA=\left[\begin{matrix}\bA_{11}&\bA_{12}&\cdots&\bA_{1M}\\\bA_{21}&\bA_{22}&\cdots&\bA_{2M}\\\vdots&\vdots&\ddots&\vdots\\\bA_{M1}&\bA_{M2}&\cdots&\bA_{MM}
 \end{matrix}\right]\in\R^{2MK\times\sum_{i=1}^M|\mathcal{S}_i|}
 \end{equation*} 
 with $\bA_{ij}\in\R^{2K\times|\mathcal{S}_{\color{black}j}|}$ given by
\begin{equation}\label{def:Ai}
\bA_{ij}(:,q)=\left\{
\begin{aligned} 
\ba_{j_q},~~\,&\text{if }i=j;\\
\ba_j,~~~ &\text{if }i=j_q;\\
\mathbf{0},~~~~&\text{otherwise},~~~
\end{aligned}\right.	q=1,2,\dots, |\mathcal{S}_j|;
\end{equation}
here $\bA_{ij}(:,q)$ denotes the $q$-th column of $\bA_{i j}$. With the above notations, problem \eqref{updateB2} can be written as 

\begin{equation*}\label{updatex}\min_\x~~{\color{black}\frac{\rho}{2}}\|\bA\x-\bb\|^2_2+\frac{\xi}{2}\|\x-\x^t\|^2_2,
\end{equation*}
which admits a closed form solution as 
\begin{equation}\label{x:closedform}
\x^{t+1}=\left({\color{black}\rho}\bA^T\bA+\xi\mathbf{I}\right)^{-1}({\color{black}\rho}\bA^T\bb+\xi\x^t).
\end{equation}
Note that calculating \eqref{x:closedform} becomes computationally expensive when the number of columns in $\bA$ is large. This happens, e.g., for fully-connected RIS, in which case  the number of columns in $\bA$ is $M(M+1)/2$, resulting in a complexity of $\mathcal{O}(M^6)$ for calculating the matrix inversion in \eqref{x:closedform}. To address this issue, we adopt an  equivalent expression of \eqref{x:closedform} when the number of columns in $\bA$ exceeds the number of its rows, i.e., when $\sum_{i=1}^M|\mathcal{S}_i|\geq 2MK$, as follows:
\begin{equation}\label{x:closedform2}
\x^{t+1}=\frac{1}{\xi}\left(\mathbf{I}-{\color{black}\rho}\bA^T\left({\color{black}\rho}\bA\bA^T+\xi\mathbf{I}\right)^{-1}\bA\right)(\rho\bA^T\bb+\xi\x^t).
\end{equation}
With \eqref{x:closedform2}, we only need to compute a matrix inversion of dimension $2MK$ rather than $\sum_{i=1}^M|\mathcal{S}_i|$. 

After obtaining $\x^{t+1}$, we update $\bB^{t+1}$ according to the relationship between $\x$ and $\bB$ defined in \eqref{relation:xB}. Specifically, $\bB^{t+1}$ is  given by
 \begin{equation*}\label{Bsolution}
 B^{t+1}_{i,j}=\left\{
 \begin{aligned}
\x^{t+1}(\textstyle\sum_{q=1}^{i-1}|\mathcal{S}_q|+l),~~&\text{if }j=i_l;\\
\x^{t+1}(\textstyle\sum_{q=1}^{j-1}|\mathcal{S}_q|+l),~~&\text{if }i=j_l;\\
0,~~~~~~~~~~~&\text{otherwise},
 \end{aligned}\right.
 \end{equation*}
where we use the notation {$\x(n)$ to denote the $n$-th entry of $\x$.} 

We remark that the proximal term in \eqref{subB} is introduced to ensure that the problem is strongly convex (and thus admits a unique solution), which is essential for establishing the convergence of the proposed algorithm. In addition, it enforces the variable $\x$ (equivalently, $\bB$) to update gradually. Without this proximal term, the elements of $\x$ may undergo abrupt update and take on excessively large values, especially when $\bA$ has very small singular value, leading to numerical instability and degraded performance.  

\subsubsection{Update of $\bU$} The $\bU$-subproblem \eqref{subU} is separable in $\{\u_k\}_{1\leq k\leq K}$. For each $\u_k$, the problem is an unconstrained  convex quadratic programming, with the closed-form solution given by 
\begin{equation*}\label{updateu}
\begin{aligned}
\u_k^{t+1}=&\bigg((1+\gamma_k^{t+1})|y_k^{t+1}|^2\bG\bW^{t+1}(\bW^{t+1})^\dagger\bG^\dagger\\
&+\frac{\rho}{2}(\mathbf{I}+Z_0^2(\mathbf{B}^{t+1})^2)\bigg)^{-1}\bigg((1+\gamma_k^{t+1})(y_k^{t+1})^\dagger\bG\bw_k^{t+1}\\
&-\frac{1}{2}(\mathbf{I}+\mathrm{i}Z_0\bB^{t+1})\blam_k^t+\frac{\rho}{2}(\mathbf{I}+\mathrm{i}Z_0\mathbf{B}^{t+1})^{2}\h_k\bigg),\\
\end{aligned}
\end{equation*}
where we have used the fact that $(\mathbf{I}-\mathrm{i}Z_0\bB)^{\dagger}=\mathbf{I}+\mathrm{i}Z_0\bB$, and  $\blam_k$ denotes the $k$-th column of $\blam$.

\begin{comment}
The proposed pp-ADMM algorithm for solving problem \eqref{eqn:reformulate2} is summarized in Algorithm \ref{pp-ADMM}.
\begin{algorithm}[t]
\caption{pp-ADMM algorithm for solving  problem \eqref{eqn:reformulate2}}
	\small
	\begin{algorithmic}[1]
		\STATE \textbf{Input:} symmetric matrix $\bB^0\in\mathcal{B}$, $\bU^0=(\mathbf{I}-\mathrm{i}Z_0\bB^0)^{-1}(\mathbf{I}+\mathrm{i}Z_0\bB^0)\bH, \bW^0=\frac{P_T}{\|\bG^{\dagger}\bU^0\|_F}\bG^{\dagger}\bU^0$.
				\STATE \textbf{Initialize}: $t=0$.
						                 \REPEAT
		\STATE \emph{Update primal variables}: alternately update $\y^{t+1}$, $\boldsymbol{\gamma}^{t+1}$, $\bW^{t+1}$, $\bB^{t+1}$ and $\bU^{t+1}$ as in \eqref{updatey}, \eqref{updategamma}, \eqref{wsolution}, \eqref{Bsolution}, and \eqref{updateu}, respectively.		
		\STATE \emph{Update Lagrange multiplier}:  obtain $\boldsymbol{\lambda}^{t+1}$ according to \eqref{sublambda}.
			\STATE Set $t=t+1$.

		\UNTIL Some stopping criterion is satisfied. 
		\STATE \textbf{Output:} $\bB=\bB^t$, $\bW=\bW^t$, and $\bU=(\mathbf{I}-\mathrm{i}Z_0\bB^t)^{-1}(\mathbf{I}+\mathrm{i}Z_0\bB^t)\bH$.	
		\end{algorithmic} \label{pp-ADMM}
	
	\vspace{-0.05cm}
\end{algorithm}
\end{comment}
\subsection{Complexity and Convergence Analysis}\label{sec:algorithm3b}
In this subsection, we give complexity and convergence analysis of the proposed pp-ADMM algorithm. %Algorithm \ref{pp-ADMM}.
\subsubsection{Complexity Analysis} The complexity of updating variables $\y$ and $\boldsymbol{\gamma}$ is $\mathcal{O}(MNK+K^2)$.  Updating variable $\bW$ requires a complexity of $\mathcal{O}(N^2K+N^3)$, where the first term comes from constructing $\bQ$ (note that $\bG^{\dagger}\bU^t$ has already been calculated via the update of $\y$ and $\boldsymbol{\gamma}$), and the second term comes from calculating the eigenvalue decomposition of $\bQ$.   The computational cost of updating variable $\bB$ is dominated by calculating the matrix inversion in  \eqref{x:closedform} or \eqref{x:closedform2}, whose complexity is $\mathcal{O}((\min\{\sum_{i=1}^M|\mathcal{S}_i|, 2MK\})^3)$. The complexity of updating variable $\bU$ is $\mathcal{O}(M^3K+MNK)$. Finally, updating $\blam$ requires a complexity of  $\mathcal{O}(M^2K)$. Since in practice the number of RIS elements $M$ is much larger than the number of transmit antennas $N$ and the number of users $K$, the total per-iteration complexity of the proposed pp-ADMM algorithm is $\mathcal{O}\left(M^3K+(\min\{\sum_{i=1}^M|\mathcal{S}_i|, 2MK\})^3\right)$.

\subsubsection{Convergence Analysis} Classical convergence results for ADMM-type algorithms are typically established for optimization problems with linear constraints \cite{boyd2011distributed}. In our work, however, ADMM is applied to tackle the bilinear constraint in \eqref{con:bilinear}, {\color{black}which falls outside the standard framework}.   Although a few works have also studied the convergence of ADMM on bilinear constrained problems \cite{bilinear1,bilinear2}, the  problems considered therein are much  simpler than  \eqref{eqn:reformulate2}.  {\color{black}Fortunately, by incorporating the additional proximal terms in \eqref{ppADMM} and leveraging the special structure of the bilinear constraint \eqref{con:bilinear}, we can establish the convergence of the proposed pp-ADMM algorithm. }
\begin{theorem}\label{converge}
Assume that the sequences $\{\bU^t\}_{t\geq 0}$ and $\{\blam^t\}_{t\geq 0}$ generated by the pp-ADMM algorithm in \eqref{ppADMM} are bounded. %, and the bound is independent of the penalty parameter $\rho$.  
Then there exists $\rho_0>0$ such that when $\rho>\rho_0$, any limit point $\left(\y^*,\boldsymbol{\gamma}^*,\bW^*,\bB^{*},\bU^*\right)$ of $(\y^{t},\boldsymbol{\gamma}^{t},\bW^{t},$ $\bB^{t},\bU^{t})$   is a stationary point of problem \eqref{eqn:reformulate2}.% In addition, $\left(\bW^*,\bB^{*},\bU^*\right)$  is a stationary point of the original problem \eqref{eqn:reformulate1}.
\end{theorem}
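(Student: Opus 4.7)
\textbf{Proof proposal for Theorem \ref{converge}.}

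The plan is to follow the standard nonconvex ADMM convergence roadmap (as in Wang--Yin--Zeng-type analyses), adapted to the bilinear constraint \eqref{con:bilinear}. The argument will rest on three pillars: (i) a sufficient-decrease property for the augmented Lagrangian $\mathcal{L}_\rho$ along the iterates, (ii) a lower bound on $\mathcal{L}_\rho$ obtained from the boundedness hypothesis, and (iii) a stability estimate that controls $\|\blam^{t+1}-\blam^{t}\|_F$ by the primal successive differences. Combining (i)--(iii) will force all consecutive differences to vanish, after which passing to a limit in the subproblem optimality conditions yields stationarity for \eqref{eqn:reformulate2}.

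First I would record the first-order optimality conditions for each subproblem in \eqref{ppADMM}. The $\bU$-subproblem is unconstrained and quadratic, so its optimality condition lets me write $(\mathbf{I}+\mathrm{i}Z_0\bB^{t+1})^{\dagger}\blam^{t+1}$ explicitly in terms of $(\y^{t+1},\boldsymbol{\gamma}^{t+1},\bW^{t+1},\bU^{t+1})$; since $\bB^{t+1}\in\mathcal{B}$ is symmetric and real, $\mathbf{I}+\mathrm{i}Z_0\bB^{t+1}$ is invertible with its inverse uniformly bounded on any bounded subset of $\mathcal{B}$. Under the hypothesis that $\{\bU^{t}\}$ and $\{\blam^{t}\}$ are bounded, the bilinear constraint together with the $\bU$-optimality condition will force $\{\bB^{t}\}$ to stay in a bounded subset of $\mathcal{B}$, and therefore $\|\blam^{t+1}-\blam^{t}\|_F^{2}\le C\bigl(\|\bU^{t+1}-\bU^{t}\|_F^{2}+\|\bB^{t+1}-\bB^{t}\|_F^{2}+\|\bW^{t+1}-\bW^{t}\|_F^{2}\bigr)$ for some constant $C>0$ independent of $t$.

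Next I would establish sufficient decrease. The $\y$- and $\boldsymbol{\gamma}$-updates are exact maximizers of concave quadratics with curvature bounded below on the bounded trajectory, so each yields $\mathcal{L}_\rho$-improvement of order $\|\y^{t+1}-\y^{t}\|^{2}$ and $\|\boldsymbol{\gamma}^{t+1}-\boldsymbol{\gamma}^{t}\|^{2}$. The proximal terms with parameters $\tau,\xi$ make the $\bW$- and $\bB$-subproblems strongly concave, giving improvements at least $\tfrac{\tau}{2}\|\bW^{t+1}-\bW^{t}\|_F^{2}$ and $\tfrac{\xi}{2}\|\bB^{t+1}-\bB^{t}\|_F^{2}$. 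The $\bU$-update gains $\tfrac{\rho\sigma_{\min}^{2}}{2}\|\bU^{t+1}-\bU^{t}\|_F^{2}$, where $\sigma_{\min}>0$ lower-bounds the smallest singular value of $\mathbf{I}-\mathrm{i}Z_0\bB^{t+1}$ on the bounded region. The multiplier step introduces a penalty of $\tfrac{1}{\rho}\|\blam^{t+1}-\blam^{t}\|_F^{2}$, which via the stability estimate above is dominated by the primal gains provided $\rho$ is chosen larger than some explicit threshold $\rho_{0}$ depending on $\tau,\xi,\sigma_{\min},C$. Summing these bounds delivers $\mathcal{L}_\rho(\cdot^{t})-\mathcal{L}_\rho(\cdot^{t+1})\ge c\,\Delta^{t}$ where $\Delta^{t}$ denotes the sum of all squared primal increments.

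Finally, since the iterates stay bounded, $\mathcal{L}_\rho$ is bounded below; telescoping sufficient decrease gives $\Delta^{t}\to 0$, hence $\|\blam^{t+1}-\blam^{t}\|_F\to 0$, and in particular the primal residual $(\mathbf{I}-\mathrm{i}Z_0\bB^{t+1})\bU^{t+1}-(\mathbf{I}+\mathrm{i}Z_0\bB^{t+1})\bH\to 0$. Taking any convergent subsequence with limit $(\y^{*},\boldsymbol{\gamma}^{*},\bW^{*},\bB^{*},\bU^{*},\blam^{*})$ and passing to the limit in the six optimality conditions of \eqref{ppADMM} (the proximal terms disappear because successive differences vanish) yields exactly the KKT conditions of \eqref{eqn:reformulate2} at $(\y^{*},\boldsymbol{\gamma}^{*},\bW^{*},\bB^{*},\bU^{*})$, proving stationarity. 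The main obstacle I expect is the bilinear coupling in \eqref{con:bilinear}: classical ADMM theory relies on a surjectivity/Lipschitz-gradient condition on the constraint matrix that no longer holds when $\bB$ is a variable, so the stability bound on $\|\blam^{t+1}-\blam^{t}\|_F$ must be wrung out of the $\bU$-optimality condition by exploiting uniform invertibility of $\mathbf{I}+\mathrm{i}Z_0\bB^{t+1}$ along the bounded trajectory---this is where the hypothesis of bounded $\{\bU^{t}\},\{\blam^{t}\}$ and the requirement $\rho>\rho_{0}$ are essential.
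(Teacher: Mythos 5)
Your proposal follows essentially the same route as the paper's proof: a bound on $\|\blam^{t+1}-\blam^{t}\|_F$ in terms of the primal successive differences extracted from the $\bU$-subproblem optimality condition and the update of $\blam$, a sufficient ascent estimate for $\mathcal{L}_\rho$ from the strong concavity moduli $\sigma^2$, $(1+C_\gamma)^{-2}$, $\tau$, $\xi$, $\rho$ of the five subproblems, telescoping under the boundedness hypothesis, and passage to the limit along a convergent subsequence in the six optimality conditions. The only place you work harder than necessary is in asserting (without justification) that $\{\bB^{t}\}$ stays bounded in order to control $(\mathbf{I}+\mathrm{i}Z_0\bB^{t+1})^{-1}$; since $\bB^{t+1}$ is real symmetric, one has $\|(\mathbf{I}+\mathrm{i}Z_0\bB^{t+1})^{-1}\|_2\le 1$ uniformly (and likewise the smallest singular value of $\mathbf{I}-\mathrm{i}Z_0\bB^{t+1}$ is at least $1$), so that claim can simply be dropped.
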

\begin{proof}
See Appendix B in the supplemental material.
 %See  Appendix  \ref{appendix:converge}. %In particular, the explicit expression of $\rho_0$ is given in \eqref{rho0}.
\end{proof}
{\color{black}The boundedness assumptions on the sequences $\{\bU^t\}_{t\geq 0}$ and  $\{\blam^t\}_{t\geq 0}$ are imposed for technical reasons. Although a rigorous proof of these boundedness properties is currently unavailable, such boundedness is observed in all of our numerical experiments.} % and cannot be removed limited by the current analysis technique. 
%Nevertheless, numerous simulation results show that it is always satisfied. A detailed proof of Theorem \ref{converge} is given in the supplemental material. 

%The optimal $\{\bw_k\}$ has the following structure:
%$$\bw_k=\left\{
%\begin{aligned}
%\bQ^{-1}\mathbf{c}_k,~~~~~~~~&\text{if }\sum_{k=1}^K\|\bQ^{-1}\bc_k\|^2\leq P;\\
%(\bQ+\mu\mathbf{I})^{-1}\mathbf{c}_k,~~~~&\text{otherwise.} 
%\end{aligned}\right.$$ where $\mu$ is the solution to 
%$$\sum_{k=1}^K\sum_{m=1}^M\frac{\tilde{c}_{k,m}^2}{(d_m+\mu)^2}=P,$$
%$\tilde{\bc}_k=\mathbf{V}^{\dagger}\bc_k,$ and $\bQ=\mathbf{V}\bD\mathbf{V}^{\dagger}$ is the eigenvalue decomposition of $\bQ$.

%\appendices
\section{Transmit Power minimization problem}\label{sec:4}
In this section, we investigate the transmit power minimization problem in \eqref{min_power}.\vspace{-0.2cm} %We introduce an equivalent formulation of problem \eqref{min_power} and propose a pp-ADMM algorithm for solving the equivalent problem in Sections \ref{sec2:algorithm1} and Sections \ref{sec2:algorithm2}, respectively. 
\subsection{An Equivalent Formulation}\label{sec2:algorithm1}
Similar to Section \ref{sec:algorithm1}, we introduce an auxiliary variable $\bU=\bthe^\dagger\bH$ to deal with the matrix inversion in constraint \eqref{con:theta2}, which yields 
\begin{subequations}\label{minpower2}
\begin{align}
\min_{\bW,\bB,\bU}~&\|\bW\|_F^2\\
\text{s.t. }~~&\text{SINR}_k(\u_k,\bW)\geq\Gamma_k,~k=1,2,\dots, K,\label{con:QoS2}\\
&(\mathbf{I}-\mathrm{i}Z_0\bB)\bU=(\mathbf{I}+\mathrm{i}Z_0\bB)\bH,\label{23c}\\
&\bB=\bB^T,~\bB\in\mathcal{B}.\label{23d}
\end{align}
\end{subequations}
To tackle the QoS constraint in \eqref{con:QoS2}, we  introduce another auxiliary variable  ${\color{black}\bV}=\bU^\dagger\bG\bW\in\C^{K\times K}$. Then  problem \eqref{minpower2} can be transformed into the following equivalent form: 
\begin{subequations}\label{minpower3}
\begin{align}
\min_{\bW,\bB,\bU,\bV}\,&\|\bW\|_F^2\\
\text{s.t. }~~~\,&V_{k,k}\geq\sqrt{\Gamma_k({\sum_{j\neq k}|{\color{black}V_{k,j}}|^2+\sigma^2})},~k=1,2,\dots, K,\label{con:setY}\\
&\I({\color{black}V_{k,k}})=0,~k=1,2,\dots, K, \label{consetY2}\\
&{\color{black}\bV}=\bU^\dagger\bG\bW,\label{con:bilinearY}\\
&\eqref{23c} \text{ and }\eqref{23d}.
\end{align}
\end{subequations}
Note that in \eqref{minpower3}, the non-convex QoS constraint 
\begin{equation}\label{QoS}
|{\color{black}V_{k,k}}|^2\geq\Gamma_k({\sum_{j\neq k}|{\color{black}V_{k,j}}|^2+\sigma^2}),~k=1,2,\dots, K
\end{equation}
has been replaced by the convex constraints in \eqref{con:setY} and \eqref{consetY2}, where   \eqref{con:setY} can be formulated as a second-order cone (SOC) constraint. The rationale behind this is that,  given any feasible pair {$(\bW,{\color{black}\bV})$} satisfying \eqref{con:bilinearY} and \eqref{QoS}, we can rotate $\{\bw_k\}_{1\leq k\leq K}$ to obtain a new feasible pair {$(\bar{\bW},{\color{black}\bar{\bV}})$} such that $\{{\color{black}\bar{V}_{k,k}}\}_{1\leq k\leq K}$ is nonnegative (i.e., {$(\bar{\bW},{\color{black}\bar{\bV}})$} satisfies \eqref{con:setY}\,--\,\eqref{con:bilinearY}) and the  objective value remains the same. This idea is inspired by the classical work \cite{SOCP}, in which a similar transformation was applied to reformulate the traditional transmit power minimization problem as a second-order cone programming (SOCP). 
 %This idea follows that of the classical work \cite{}, in which the power minimization problem for traditional multi-user systems was reformulated as an SOCP using a similar transformation.  

Now, the complicated QoS constraint in \eqref{con:QoS2} has been reformulated as a bilinear constraint \eqref{con:bilinearY} and two simple constraints, \eqref{con:setY} and \eqref{consetY2}, on ${\color{black}\bV}$ that allow for efficient projection. %., which has a favorable easy-to-project property. 
To be more specific, the projection onto set
$${\color{black}\mathcal{V}}:=\left\{{\color{black}\bV}\in\C^{K\times K}\mid{\color{black}\bV} \text{ satisfies \eqref{con:setY}} \text{ and } \eqref{consetY2}\right\}$$
 can be efficiently computed by solving $K$ one-dimensional quartic equations; see further details below.
 
% Similar to Section \ref{sec:algorithm2}, we use an ADMM framework to tackle the bilinear constraints in problem \eqref{minpower3}.
\subsection{pp-ADMM Algorithm for Solving \eqref{minpower3}}
Problem \eqref{minpower3} has a simple objective function with separable constraints on ${\color{black}\bV}$ and $\bB$, as well as  two bilinear constraints that couple variables $(\bB,\bU,\bW,\bV)$, which is suitable to be solved via an ADMM framework.  Next, we custom-build  an efficient pp-ADMM algorithm for solving \eqref{minpower3}. 

First,  the augmented Lagrangian function of \eqref{minpower3} is given by
\begin{equation}\label{Lrho}\begin{aligned}
&\mathcal{L}_{\boldsymbol{\rho}}({\color{black}\bV},\bW,\bB,\bU,\blam,\boldsymbol{\mu})\\
=\,&\|\bW\|_F^2+\left<\blam,(\mathbf{I}-\mathrm{i}Z_0\mathbf{B})\bU-(\mathbf{I}+\mathrm{i}Z_0\bB)\bH\right>\\
&+\frac{\rho_{\lambda}}{2}\|(\mathbf{I}-\mathrm{i}Z_0\mathbf{B})\bU-(\mathbf{I}+\mathrm{i}Z_0\bB)\bH\|_F^2\\
&+\left<\boldsymbol{\mu},{\color{black}\bV}-\bU^\dagger\bG\bW\right>+\frac{\rho_\mu}{2}\left\|{\color{black}\bV}-\bU^\dagger\bG\bW\right\|_F^2,
\end{aligned}
\end{equation}
where $(\blam,\bmu)$ are the Lagrange multipliers associated with the bilinear constraints \eqref{23c} and \eqref{con:bilinearY}, respectively, and $\boldsymbol{\rho}=(\rho_\lambda,\rho_\mu)$ are the corresponding penalty parameters. 

The proposed pp-ADMM algorithm is given below:
\begin{subequations}\label{ppADMM2}
\begin{align}
{\bV}^{t+1}&\in\arg\min_{\bV\in\mathcal{V}}~\mathcal{L}_{\boldsymbol{\rho}}(\bV,\bW ^t,\bB^t,\bU^t,\blam^t,\boldsymbol{\mu}^t),\label{subY2}\\
\bW^{t+1}&\in\arg\min~\mathcal{L}_{\boldsymbol{\rho}}({\color{black}\bV}^{t+1},\bW,\bB^t,\bU^t,\blam^t,\boldsymbol{\mu}^t),\label{subW2}\\
\bB^{t+1}&\in\arg\min_{\bB=\bB^T,\bB\in\mathcal{B}}\mathcal{L}_{\boldsymbol{\rho}}({\color{black}\bV}^{t+1},\bW^{t+1},\bB,\bU^t,\blam^t,\boldsymbol{\mu}^t)\notag\\
&\hspace{2.8cm}+\frac{\xi}{2}\|\bB-\bB^t\|^2_{\text{sym}},\label{subB2}\\
\bU^{t+1}&\in\arg\min~\mathcal{L}_{\boldsymbol{\rho}}({\color{black}\bV}^{t+1},\bW^{t+1},\bB^{t+1},\bU,\blam^t,\boldsymbol{\mu}^t),\label{subU2}\\
\blam^{t+1}&=\blam^t\hspace{-0.05cm}+\hspace{-0.05cm}\rho_\lambda\left((\mathbf{I}-\mathrm{i}Z_0\bB^{t+1})\bU^{t+1}-(\mathbf{I}+\mathrm{i}Z_0\bB^{t+1})\bH\right),\label{sublambda2}\\
\boldsymbol{\mu}^{t+1}&=\boldsymbol{\mu}^{t}+\rho_\mu(\mathbf{V}^{t+1}-(\bU^{t+1})^\dagger\bG\bW^{t+1}).\label{submu2}
\end{align}
\end{subequations}

 The only difference between the above algorithm and  classical ADMM is  an extra proximal term for updating $\bB$, which is introduced to ensure the stability and convergence of the algorithm, as discussed in Section \ref{sec:algorithm2}. The update of $\bB$ follows exactly the same procedure as in \eqref{updateB1}. %, with $\rho$ replaced by $\rho_\lambda$. 
In addition, the $\bW$- and $\bU$-subproblems in \eqref{subW2} and \eqref{subU2} are both unconstrained convex quadratic programming, which admit closed-form solutions.   We omit the details for brevity. Next, we discuss the solution of the {\color{black}$\bV$}-subproblem \eqref{subY2}.

%\subsubsection{Update of $\mathbf{Y}$}\label{sec:updateY} 
According to \eqref{Lrho} and \eqref{subY2},  $\bV^{t+1}$ is the projection of $(\bU^t)^\dagger\bG\bW^t-\boldsymbol{\mu}^t/\rho_\mu$ onto set $\mathcal{V}$. % the objective function of the $\mathbf{Y}$-subproblem is separable in each entry of $\bY$. 
 Since constraint \eqref{con:setY} is separable across the rows of $\bV$, the $\bV$-subproblem can be divided into  $K$  independent problems as follows (where $k=1,2,\dots, K$):
\begin{equation}\label{y-problem}
\begin{aligned}
\min_{\bV_{k,:}}~&\sum_{j=1}^K\left|V_{k,j}-(\u_k^t)^{\dagger}\bG\bw_j^{t}+\frac{\mu_{k,j}^t}{\rho_\mu}\right|^2\\
\text{s.t. }~&V_{k,k}\geq\sqrt{\Gamma_k(\sum_{j\neq k}|V_{k,j}|^2+\sigma^2)},~~\I(V_{k,k})=0.
\end{aligned}
\end{equation}
Clearly, the constraint in \eqref{y-problem} depends only on the magnitude of  $\{V_{k,j}\}_{ j\neq k}$ and is irrelevant to their phases.  To minimize the objective function, the phase of $V_{k,j}$ should align with that of  {\small$(\u_k^t)^\dagger\bG\bw_j^t-{\mu}_{k,j}^t/\rho_\mu$}, where $j\neq k$. Due to this, and by further squaring the constraint and noting that $V_{k,k}$ is real-valued and positive, %, i.e., 
%$$\arg(Y_{k,j}^{t+1})=\arg\left((\u_k^t)^{\dagger}\bG\bw_j^{t}-{\mu_{k,j}^t}/{\rho_\mu}\right).$$
we can rewrite problem \eqref{y-problem} into the following equivalent  form:
\begin{equation}\label{y-problem2}
\begin{aligned}
\min_{\{r_{k,j}\}_{j=1}^K}~&\sum_{j=1}^K(r_{k,j}-a_{k,j})^2\\
\text{s.t. }~\,~~&r_{k,k}^2\geq \Gamma_k(\sum_{j\neq k}r_{k,j}^2+\sigma^2),~~r_{k,k}\geq 0.
%&r_j\geq 0,~~j=1,2,\dots, K.
\end{aligned}
\end{equation}
where
$$r_{k,j}=
\left\{
\begin{aligned}
|{\color{black}V_{k,j}}|,~~&\text{if } j\neq k;\\
{\color{black}V_{k,k}},\,~~&\text{if }j=k,
\end{aligned}\right.$$ and 
$$
a_{k,j}=\left\{
\begin{aligned}
|(\u_k^t)^{\dagger}\bG\bw_j^{t}-{\mu_{k,j}^t}/{\rho_\mu}|,~~~\text{if }j\neq k;\\
\RR((\u_k^t)^{\dagger}\bG\bw_k^{t}-{\mu_{k,k}^t}/{\rho_\mu}),~~\text{if }j=k.
\end{aligned}\right.$$
 Strictly speaking, there should be nonnegative constraints on  $\{r_{k,j}\}_{j\neq k}$ in \eqref{y-problem2}, as they represent the magnitudes of $\{{V}_{k,j}\}_{j\neq  k}$. However, we  can  ignore these constraints since the optimal solution to \eqref{y-problem2} will naturally be nonnegative due to the nonnegativity of $\{a_{k,j}\}_{j\neq  k}$.
By analyzing the KKT condition of \eqref{y-problem2}, we claim that the optimal solution to \eqref{y-problem2} is given by 
\begin{equation}\label{rjstar}
r_{k,j}^*=\left\{
\begin{aligned}
\left({\Gamma_k}\sum_{j\neq k} \frac{a_{k,j}^2}{(1+\Gamma_k)^2}+\Gamma_k\sigma^2\right)^{\frac{1}{2}}\hspace{-0.2cm},~&\text{if }j=k,~ a_{k,k}=0;\\
\frac{a_{k,j}}{1-\eta^*},\hspace{2.2cm}&\text{if }j=k,~a_{k,k}\neq 0;\\
\frac{a_{k,j}}{1+\Gamma_k\eta^*},\hspace{2cm}&\text{if }j\neq k,
\end{aligned}\right.
\end{equation}
where
\begin{equation}\label{eta2}
\eta^*=\left\{
\begin{aligned}
0,~~~~~&\text{if }a_{k,k}\geq({\Gamma_k\sum_{j\neq k} a_{k,j}^2+\Gamma_k\sigma^2})^{{1}/{2}};\\
\eta_1^*,~~~~&\text{if }0<a_{k,k}<({\Gamma_k\sum_{j\neq k} a_{k,j}^2+\Gamma_k\sigma^2})^{{1}/{2}};\\
1,~~~~~&\text{if }a_{k,k}=0;\\
\eta_2^*,~~~~&\text{if }a_{k,k}<0,
\end{aligned}\right.
\end{equation}
with $\eta_1^*\in(0,1)$ and $\eta_2^*\in(1,\infty)$ being the unique roots of the following equation in the intervals  $(0,1)$ and $(1,\infty)$, respectively:
\begin{equation}\label{eta}
\frac{\Gamma_k\sum_{j\neq k}a_{k,j}^2}{(1+\Gamma_k\eta)^2}-\frac{a_{k,k}^2}{(1-\eta)^2}+\Gamma_k\sigma^2=0.
\end{equation}
 A rigorous proof of \eqref{rjstar} is given in {\color{black} Appendix A in the supplemental material}. Eq. \eqref{eta} can  be further transformed into a quartic equation and solved in closed-form.  Combining the above, we get
\begin{equation*}\label{Ysolution_2}
V_{k,j}^{t+1}=\left\{
\begin{aligned}
&r_{k,j}^*e^{\mathrm{i}\arg\left((\u_k^t)^\dagger\bG\bw_j^t-{\mu}_{k,j}^t/\rho_\mu\right)},~\text{if }j\neq k;\\
&r_{k,k}^*,~\hspace{3.45cm}\text{if }j=k.
\end{aligned}\right.
\end{equation*}
We remark that the technique of introducing an auxiliary variable \( \mathbf{V} \) to deal with the SINR constraint can be applied more broadly to optimization problems involving SINR constraints beyond the BD-RIS scenario. Furthermore, the resulting closed-form projection onto $\mathcal{V}$, derived via a careful KKT-based analysis, is scenario-independent and can be directly extended to other related problems. 
\subsection{Complexity and Convergence Analysis}\label{sec:algorithm3}
We next give complexity and convergence analysis of the above algorithm. %Algorithm \ref{pp-ADMM_minpower}.
\subsubsection{Complexity Analysis} The complexity for updating variables $\bV$, $\bW$, $\bB$, $\bU$, $\blam$, and $\boldsymbol{\mu}$ are $\mathcal{O}(K^2), \mathcal{O}(N^2K+NK^2+N^3),$ $\mathcal{O}((\min\{\sum_{i=1}^M|\mathcal{S}_i|, 2MK\})^3)$,  $\mathcal{O}(M^3+M^2K+MNK+MK^2),$ $\mathcal{O}(M^2K),$ and $\mathcal{O}(MNK)$, respectively\footnote{We remark that the complexity of calculating $(\bU^t)^{\dagger}\bG\bW^t$ is not included in the complexity of updating $\bV^{t+1}$, as this computation has already been performed in updating $\boldsymbol{\mu}^{t}$.  Similarly, the complexity of calculating $\bG^\dagger\bU^t$ is  excluded from the complexity of updating $\bW^{t+1}$.}.  To conclude, the total per-iteration complexity of the pp-ADMM algorithm in \eqref{ppADMM2} is $\mathcal{O}\left(M^3+(\min\{\sum_{i=1}^M|\mathcal{S}_i|, 2MK\})^3\right)$.%, given that $M$ is much larger than  $N$ and $K$ in practice.
\subsubsection{Convergence Analysis}Similar to Theorem \ref{converge}, we can establish the convergence of the pp-ADMM algorithm in \eqref{ppADMM2}. 
\begin{theorem}\label{converge2}
Assume that 
\begin{itemize}
\item[(a)] The sequences $\{\bW^t\}_{t\geq 0}$, $\{\bU^t\}_{t\geq 0}$,  $\{\blam^t\}_{t\geq 0}$, and $\{\boldsymbol{\mu}^t\}_{t\geq 0}$ generated by the pp-ADMM algorithm in \eqref{ppADMM2} are bounded.
\item[(b)] The minimum singular value of $\{\bG^{\dagger}\bU^t\}_{t\geq 0}$ has a uniformly positive lower bound.  
\end{itemize}
 Then there exists $c_0>0$ and $\rho_0>0$ such that when $\rho_\lambda>c_0\rho_{\mu}>\rho_0$, any limit point $\left(\bV^*,\bW^*,\bB^{*},\bU^*\right)$ of $(\bV^{t},\bW^{t},$ $\bB^{t},\bU^{t})$   is a stationary point of problem \eqref{minpower3}.
\end{theorem}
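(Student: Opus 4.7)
The plan is to mirror the proof of Theorem \ref{converge}, but carefully accommodate the two bilinear constraints in \eqref{23c} and \eqref{con:bilinearY}, the corresponding two multipliers $\blam$ and $\bmu$, and the extra $\bY$-block. I would adopt the classical ``sufficient decrease + bounded below + passing to the limit'' recipe for non-convex ADMM, where the key quantity is the augmented Lagrangian $\mathcal{L}_{\boldsymbol{\rho}}$ defined in \eqref{Lrho}.

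First, I would establish a one-iteration sufficient-decrease inequality of the form
\[
\mathcal{L}_{\boldsymbol{\rho}}(\cdot^{t})-\mathcal{L}_{\boldsymbol{\rho}}(\cdot^{t+1})\ \ge\ c_Y\|\bY^{t+1}-\bY^t\|_F^2+c_W\|\bW^{t+1}-\bW^t\|_F^2+c_B\|\bB^{t+1}-\bB^t\|_F^2+c_U\|\bU^{t+1}-\bU^t\|_F^2 - D_\lambda-D_\mu,
\]
where $D_\lambda=\rho_\lambda^{-1}\|\blam^{t+1}-\blam^t\|_F^2$ and $D_\mu=\rho_\mu^{-1}\|\bmu^{t+1}-\bmu^t\|_F^2$ are the dual ascent contributions. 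The primal descent constants come from strong convexity of each block: the $\bY$-subproblem is strongly convex by the $\rho_\mu$-penalty; the $\bW$-subproblem is strongly convex by the $\|\bW\|_F^2$ term (plus the $\rho_\mu$-penalty); the $\bB$-subproblem is strongly convex thanks to the proximal parameter $\xi$; and the $\bU$-subproblem is strongly convex because its quadratic part equals $\rho_\lambda(\mathbf{I}+Z_0^2\bB^2)+\rho_\mu\bG\bW\bW^\dagger\bG^\dagger\succeq\rho_\lambda\mathbf{I}$.

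The crucial step, and the main obstacle, is to absorb $D_\lambda$ and $D_\mu$ into the primal decrease terms by bounding the multiplier increments by consecutive primal differences. For $\blam$, I would differentiate the $\bU$-subproblem at $\bU^{t+1}$ (using Wirtinger calculus) and substitute \eqref{sublambda2} to obtain an identity of the form
\[
(\mathbf{I}+\mathrm{i}Z_0\bB^{t+1})\blam^{t+1}=\bG\bW^{t+1}(\bmu^{t+1})^{\dagger}+\text{terms linear in }(\bU^{t+1}-\bU^{t}),
\]
which, since $\mathbf{I}+\mathrm{i}Z_0\bB^{t+1}$ is invertible with operator norm controlled via the boundedness of $\bB^{t+1}$ (a consequence of assumption (a) applied to $\blam^t$ and the constraint residual), gives $\|\blam^{t+1}-\blam^t\|_F^2\le C_1(\|\bU^{t+1}-\bU^{t}\|_F^2+\|\bmu^{t+1}-\bmu^t\|_F^2+\|\bW^{t+1}-\bW^t\|_F^2)$. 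For $\bmu$, I would exploit the $\bW$-subproblem optimality condition at $\bW^{t+1}$, which reads (schematically) $2\bW^{t+1}=\bG^\dagger\bU^t[\bmu^t+\rho_\mu((\bU^t)^{\dagger}\bG\bW^{t+1}-\bY^{t+1})^{\dagger}]$, and substitute \eqref{submu2}. This yields $\bG^\dagger\bU^t(\bmu^{t+1}-\bmu^t)^\dagger = \text{primal differences}$, and inverting this system requires the minimum singular value of $\bG^\dagger\bU^t$ to be bounded below --- exactly assumption (b). The outcome is $\|\bmu^{t+1}-\bmu^t\|_F^2\le C_2(\|\bW^{t+1}-\bW^t\|_F^2+\|\bU^{t+1}-\bU^t\|_F^2)$.

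Combining these bounds, the $\bmu$ dual ascent can be absorbed into the $\bW$ and $\bU$ descents once $\rho_\mu$ is sufficiently large, and the $\blam$ dual ascent can then be absorbed into the $\bU$ descent once $\rho_\lambda/\rho_\mu$ is sufficiently large, which is precisely the condition $\rho_\lambda>c_0\rho_\mu>\rho_0$. This gives a genuine monotone decrease $\mathcal{L}_{\boldsymbol{\rho}}(\cdot^{t})-\mathcal{L}_{\boldsymbol{\rho}}(\cdot^{t+1})\ge \tilde{c}\bigl(\|\bY^{t+1}-\bY^t\|_F^2+\|\bW^{t+1}-\bW^t\|_F^2+\|\bB^{t+1}-\bB^t\|_F^2+\|\bU^{t+1}-\bU^t\|_F^2\bigr)$. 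A lower bound on $\mathcal{L}_{\boldsymbol{\rho}}$ follows from $\|\bW\|_F^2\ge 0$ together with the fact that, after substituting the multiplier update identities above, the cross terms $\langle\blam,\cdot\rangle$ and $\langle\bmu,\cdot\rangle$ are dominated by the quadratic penalties and by the bounded multipliers from assumption (a).

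Finally, I would conclude summability of all consecutive iterate differences, hence $\|\bY^{t+1}-\bY^t\|_F,\|\bW^{t+1}-\bW^t\|_F,\|\bB^{t+1}-\bB^t\|_F,\|\bU^{t+1}-\bU^t\|_F\to 0$, and via the multiplier bounds also $\|\blam^{t+1}-\blam^t\|_F,\|\bmu^{t+1}-\bmu^t\|_F\to 0$. The latter is equivalent to saying the constraint residuals of \eqref{23c} and \eqref{con:bilinearY} vanish. Along any convergent subsequence converging to $(\bY^*,\bW^*,\bB^*,\bU^*,\blam^*,\bmu^*)$, passing to the limit in the first-order optimality conditions of \eqref{subY2}--\eqref{subU2} (noting that the proximal terms vanish because $\|\bB^{t+1}-\bB^t\|_F\to 0$) yields the KKT conditions of \eqref{minpower3}, completing the proof.
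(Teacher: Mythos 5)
Your overall strategy matches the paper's: bound the multiplier increments by primal increments using the block optimality conditions (with assumption (b) used exactly where you use it, to invert $(\bU^t)^\dagger\bG\bG^\dagger\bU^t$ in the $\bmu$-identity), establish sufficient decrease, absorb the dual ascent, and pass to the limit. However, there is a genuine gap in your bound on $\|\bmu^{t+1}-\bmu^t\|_F^2$. The $\bW$-subproblem \eqref{subW2} is solved with the \emph{stale} iterate $\bU^t$, while the multiplier update \eqref{submu2} uses $\bU^{t+1}$. Eliminating $\bmu^t$ from the optimality condition of \eqref{subW2} therefore gives
\[
\bmu^{t+1}=2\bigl((\bU^t)^{\dagger}\bG\bG^{\dagger}\bU^t\bigr)^{-1}(\bU^t)^{\dagger}\bG\bW^{t+1}-\rho_{\mu}(\bU^{t+1}-\bU^t)^{\dagger}\bG\bW^{t+1},
\]
and subtracting the analogous identity at iteration $t$ (which involves $\bU^{t-1}$ and $\bU^t$) produces both $\|\bU^{t+1}-\bU^t\|_F^2$ and the \emph{delayed} difference $\|\bU^{t}-\bU^{t-1}\|_F^2$, each multiplied by $\rho_\mu^2$. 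Your claimed bound $\|\bmu^{t+1}-\bmu^t\|_F^2\le C_2(\|\bW^{t+1}-\bW^t\|_F^2+\|\bU^{t+1}-\bU^t\|_F^2)$ omits both the delayed term and the $\rho_\mu^2$ scaling, and as a consequence the one-step monotone decrease of $\mathcal{L}_{\boldsymbol{\rho}}$ that you assert does not follow: the decrease inequality at step $t$ carries an irreducible positive contribution proportional to $\|\bU^t-\bU^{t-1}\|_F^2$ from the \emph{previous} step.

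The paper closes this gap by replacing $\mathcal{L}_{\boldsymbol{\rho}}^t$ with the augmented potential $\tilde{\mathcal{L}}_{\boldsymbol{\rho}}^{t}=\mathcal{L}_{\boldsymbol{\rho}}^t+\bigl(C_0C_1\rho_\mu^2/\rho_\lambda+C_1\rho_\mu\bigr)\|\bU^t-\bU^{t-1}\|_F^2$, which telescopes the delayed term and \emph{does} decrease sufficiently; summability of the squared increments then follows from the lower boundedness of $\tilde{\mathcal{L}}_{\boldsymbol{\rho}}^t$. Your argument can be repaired either by introducing this Lyapunov correction or by summing the raw inequality over $t$ and reindexing the delayed terms, but as written the key per-iteration descent claim is not established. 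A secondary inaccuracy: because the correct $\bmu$-increment bound scales like $\rho_\mu^2\|\bU^{t+1}-\bU^t\|_F^2$, the term $\rho_\mu^{-1}\|\bmu^{t+1}-\bmu^t\|_F^2$ grows like $\rho_\mu\|\bU^{t+1}-\bU^t\|_F^2$ and must itself be dominated by the $\rho_\lambda/2$ coefficient of the $\bU$-descent; so the condition $\rho_\lambda>c_0\rho_\mu$ is forced by the $\bmu$-ascent as well, not only by absorbing $D_\lambda$ as you suggest, and ``taking $\rho_\mu$ sufficiently large'' alone does not neutralize $D_\mu$.
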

\hspace{-0.35cm}% The assumptions in Theorem \ref{converge2} are more restrictive than that  in Theorem \ref{converge}. The reason is that the variable $\bU$ is coupled in  
 {\color{black}Note that in problem \eqref{minpower3}, there are two bilinear constraints \eqref{23c} and \eqref{con:bilinearY} that couple variables $(\bU,\bB,\bW,\bV)$. Due to this, the assumptions required in Theorem \ref{converge2} are more restrictive than those in Theorem \ref{converge}.} The proof of Theorem \ref{converge2} is given in Appendix C in the supplemental material. %follows a similar procedure to that of Theorem \ref{converge}. Interested readers are referred to the supplementary material for further details.   

\begin{remark}\label{remark}
It is worth mentioning a related work \cite{PDD}, in which  a penalty dual decomposition (PDD)-based optimization framework was developed to solve both sum-rate maximization and transmit power minimization problems for BD-RIS-aided MU-MISO communication systems. Our proposed approach outperforms \cite{PDD} in that it is applicable to any BD-RIS architecture, while the framework in \cite{PDD} is limited to only fully- and group-connected architectures.
%However, the proposed framework in \cite{} is limited to  fully- and group-connected BD-RIS architectures, while our approach is applicable to any  BD-RIS architecture.
 
 %Moreover, compared to the PDD algorithm in \cite{PDD}, our proposed pp-ADMM algorithms  are significantly more computationally efficient. The reasons are as follows. First, the PDD algorithm in \cite{PDD} is double loop, where an inner loop is required to tackle the symmetric and unitary constraints on the scattering matrix, whereas the proposed pp-ADMM framework is single-loop. Second,  in the PDD algorithm, the complexity of updating the scattering matrix scales with $M^6$ \cite[Eq. (59)]{PDD} at each iteration, while our approach reduces this to $K^3M^3$ by {\color{black}replacing $\bthe\in\C^{M\times M}$ with the the auxiliary variable $\bU\in\C^{M\times K}$} and leveraging  \eqref{x:closedform2}. 
 %Finally,  for the transmit power minimization problem, the PDD algorithm formulates a SOCP in each inner iteration to deal with the QoS constraint, which is solved by directly calling CVX, resulting in a high computational cost. In contrast, our proposed approach  introduces appropriate auxiliary variables to simplify the QoS constraint, allowing for closed-form updates in each iteration. 
 
{Moreover, compared with the PDD algorithm in \cite{PDD}, the proposed  algorithms are computationally more efficient due to both the simpler algorithmic structure and the reduced per-iteration complexity. First, the PDD algorithm in \cite{PDD} is double loop, where an inner loop is required to tackle the symmetric and unitary constraints on the scattering matrix, whereas the proposed algorithms are single-loop. This avoids nested inner-loop updates and can reduce the practical computational burden. Second, for the sum-rate maximization problem, the per-iteration complexity of the proposed approach is significantly lower than that of PDD. 
   Taking the fully-connected RIS as an example. The scattering-matrix update in PDD requires solving a linear system whose dimension is $\mathcal{O}(M^2)$ \cite[Eq. (59)]{PDD}, leading to a per-iteration complexity of $\mathcal{O}(M^6)$. In contrast, the proposed approach reduces the per-iteration complexity to $\mathcal{O}(K^3M^3)$ by replacing $\bthe\in\C^{M\times M}$ with the auxiliary variable $\bU\in\C^{M\times K}$ and leveraging \eqref{x:closedform2}. Third, for the transmit power minimization problem, the PDD algorithm formulates an SOCP in each scattering matrix subproblem to handle the QoS constraints, where the variable dimension scales as $\mathcal{O}(M^2)$. The subproblem is solved by  calling CVX, which is computationally expensive. In contrast, the proposed approach introduces appropriate auxiliary variables to simplify the QoS constraints, so that the resulting subproblems admit closed-form updates. The practical computational advantage of the proposed approaches is further verified by the CPU-time results in Section \ref{sec:comparison_algorithm}.} %Beyond the dominant per-iteration complexity, the practical computational cost also depends on the algorithmic structure and the number of iterations required for convergence. In particular, the PDD algorithm adopts a double-loop procedure with an inner loop for updating the scattering matrix, whereas the proposed pp-ADMM framework is single-loop. This difference in algorithmic structure may also contribute to the lower CPU time observed in the simulations.
 \end{remark}
 \section{Generalizations}\label{sec:generalization}
In previous sections, we focus on sum-rate maximization and transmit power minimization problems for MU-MISO systems, assuming  {perfect instantaneous  CSI and a BD-RIS operating in the reflective mode} with infinite-resolution admittances (i.e., the entries of $\bB$ are continuous). However, the technique for dealing with  BD-RIS-related variables (i.e., $\bthe$ and $\bB$) is general, and the proposed optimization framework can be extended {\color{black} in several aspects}. %to other utility functions and MU-MIMO systems. 
In this section, we explore the broader applicability of our proposed approach.
 \subsection{General Utility Functions}
Consider a utility optimization problem in the following  form:
\begin{equation}\label{general}
\begin{aligned}
\max_{\bW,\bthe,\bB}~&F(\{\text{SINR}_{k}(\bthe,\bW)\}_{k=1}^K)\\
\text{s.t. }~~ &\bthe=\left(\mathbf{I}+\mathrm{i}Z_0\mathbf{B}\right)^{-1}\left(\mathbf{I}-\mathrm{i}Z_0\mathbf{B}\right),
\\
~~&\bB=\bB^{T},~\bB\in\mathcal{B},\\
&\bW\in\mathcal{W},
\end{aligned}
\end{equation}
where $F(\{\text{SINR}_{k}(\bthe,\bW)\}_{k=1}^K)$ is a general utility function and $\bW\in\mathcal{W}$ represents the constraint imposed on the transmitter side (e.g., the transmit power constraint).  The above model encompasses the sum-rate maximization problem in \eqref{multiuser} with 
$F(\{\text{SINR}_{k}(\bthe,\bW)\}_{k=1}^K)=\sum_{k=1}^K\log\left(1+\text{SINR}_{k}(\bthe,\bW)\right)$ and $\mathcal{W}=\{\|\bW\|_F^2\leq P_T\}$, and the transmit power minimization problem in \eqref{min_power} with
 $$
 \begin{aligned}
& F(\{\text{SINR}_{k}(\bthe,\bW)\}_{k=1}^K)\\
&=-\|\bW\|_F^2-\sum_{k=1}^K\mathbb{I}_{\{\text{\normalfont{SINR}}_k(\bthe,\bW)\geq \gamma_k\}}(\bthe,\bW)
 \end{aligned}$$
 and $\mathcal{W}=\C^{N\times K},$ where $\mathbb{I}_{\mathcal{X}}(x)$ refers to the indicator function of set $\mathcal{X}$, defined as $\mathbb{I}_{\mathcal{X}}(x)=0$ if $x\in\mathcal{X}$, and  $\mathbb{I}_{\mathcal{X}}(x)=\infty$ otherwise.  % The minus sign is introduced as our goal is to minimize the transmit power.
\begin{comment}
\begin{remark}By specifying $F(\cdot)$ in \eqref{general}, we can obtain different performance metric. Some common examples are given as follows. 
\begin{itemize}
\item[(a).]  Sum-rate maximization:$$F=\sum_{k=1}^K\log_2\left(1+\text{SINR}_{k}(\bthe,\bW)\right).$$
\item[(b).] Transmit power minimization under users' QoS requirement: $$F=-\|\bW\|_F^2-\sum_{k=1}^K\mathbb{I}_{\{\text{\normalfont{SINR}}_k(\bthe,\bW)\geq \gamma_k\}},$$
where $\mathbb{I}_{\mathcal{X}}$ refers to the indicator function of set $\mathcal{X}$. The minus sign is introduced as our goal is to minimize the transmit power.
\item[(c).] Max-min faireness maximization: $$F=\min_k~\text{SINR}_{k}(\bthe,\bW).$$
\item[(d).] Energy efficiency maximization: 
$$F=\hspace{-0.1cm}\frac{\sum_{k=1}^K\log_2\left(1\hspace{-0.05cm}+\hspace{-0.05cm}\text{\normalfont{SINR}}_k(\bthe, \bW)\right)}{\|\bW\|_F^2+P_c},$$
where  $P_c$ is the  energy expenditure required by the device operating.
\end{itemize}
On the other hand, $\bW\in\mathcal{W}$ can account for different types of power constraints.
\begin{itemize}
\item [(a).]Total transmit power constraint: $\mathcal{W}=\{\|\bW\|_F^2\leq P_T\}$;
\item[(b).]  Per-antenna transmit power constraint: $\mathcal{W}=\{\sum_{k=1}^K|w_{k,n}|^2\leq {P_T}/{N},~\forall~n=1,2,\dots, N\}$;
\item[(c).] No power budget: $\mathcal{W}=\C^{N\times K}.$
\end{itemize}
\end{remark}
\end{comment}

Inspired by the previous sections, we extract the following procedures for solving \eqref{general}.

\emph{\text{Step 1}}: 
 The first step deals with the difficulties introduced by BD-RIS. Specifically, we introduce auxiliary variables $\u_k=\bthe^\dagger \h_k,~k=1,2,\dots, K$, to transform \eqref{general} into the following equivalent form:
\begin{subequations}\label{general2}
\begin{align}
\max_{{\bW,\bB,\bU}}~&F(\{\text{SINR}_{k}(\u_k,\bW)\}_{k=1}^K)\label{general2:0}\\
\text{s.t. }~~ &\left(\mathbf{I}-\mathrm{i}Z_0\mathbf{B}\right)\bU=\left(\mathbf{I}+\mathrm{i}Z_0\mathbf{B}\right)\bH,\label{general2:1}\\
~~&\bB=\bB^{T},~\bB\in\mathcal{B},\label{general2:2}\\
&\bW\in\mathcal{W}\label{general2:3},
\end{align}
\end{subequations}
where $\text{SINR}_k(\u_k,\bW)$ is given in \eqref{SINR:ukW}. This transformation is the key of our proposed framework, which offers a simpler formulation with reduced dimensionality; see the discussions below \eqref{eqn:reformulate1}.% The benefits of the new formulation in  \eqref{general2} has been discussed under ...

\emph{\text{Step 2}}: The second step is to further simplify the objective function in \eqref{general2:0},  typically by introducing appropriate auxiliary variables. 
%We write the reformulated problem  as 
%\begin{equation}\label{general3}
%\begin{aligned}
%\max_{\bU,\bW,\mathcal{A}}~&\tilde{F}(\bU,\bW,\mathcal{A})\\
%\text{s.t.}~~~&\eqref{general2:1}-\eqref{general2:3}
%\end{aligned}
%\end{equation}
 %where $\mathcal{A}$ denotes the set of auxiliary variables.  
% For sum-rate maximization, $\mathcal{A}=\{\y, \boldsymbol{\gamma}\}$; for  transmit power minimization, $\mathcal{A}= \bY$. The method for this step varies according to the utility function, and it is essential to leverage the unique structure of each case.
The approach for this step relies on the specific structure of the utility function under consideration. %, and it is essential to exploit the unique structure of each case. 
For example,  FP is employed  to simplify the sum-rate objective function in \eqref{FP:sumrate}, and an  auxiliary variable ${\color{black}\bV}=\bU^\dagger \bG\bW$ is introduced in \eqref{minpower3} to simplify the QoS requirement. %Existing methods for traditional utility optimization problems (such as fractional programming for sum-rate maximization) can be applied.% As this step is unrelated to BD-RIS, e
 
\emph{\text{Step 3}}: The final step is to employ an ADMM framework to deal with the bilinear constraint \eqref{general2:1}. 
In particular, appropriate proximal terms can be introduced for specific variables to ensure {\color{black}theoretical convergence and improve numerical  stability}. For efficient implementation, the objective function formulated in Step 2 should be simple with respect to each variable block.

In addition to the sum-rate maximization and transmit power minimization problems considered in the previous sections, we  next briefly discuss the application of the above framework  to two other widely used  utilities: max-min fairness and energy efficiency maximization.
\subsubsection{Max-min fairness} Max-min fairness aims to maximize the utility of the user with the worst performance. Mathematically, the utility function is  given by   
$$F(\{\text{SINR}_k(\u_k,\bW)\}_{k=1}^K)=\min_k~\text{SINR}_k(\u_k,\bW).$$
 Introducing  an auxiliary variable $\Gamma=\min_k\,\text{SINR}_k(\u_k,\bW)$ and employing the transformation in {Step 1}, the max-min fairness model can be written as 
\begin{equation}\label{maxmin}
\begin{aligned}
\max_{{\bW,\bB,\bU}, \Gamma}~&\Gamma\\
\text{s.t. }~~~ &\text{SINR}_k(\u_k,\bW)\geq \Gamma,~~k=1,2,\dots, K,\\
&\eqref{con:bilinear} - \eqref{con:power2},%\left(\mathbf{I}-\mathrm{i}Z_0\mathbf{B}\right)\bU=\left(\mathbf{I}+\mathrm{i}Z_0\mathbf{B}\right)\bH,\\
%~~&\bB=\bB^{T},~\bB\in\mathcal{B},\\
%&\|\bW\|_F^2\leq P_T,
\end{aligned}
\end{equation}
where we consider the total transmit power constraint at the BS, i.e., $\mathcal{W}=\{\bW\mid\|\bW\|_F^2\leq P_T\}$. The above model is similar to \eqref{minpower3},  but with the following two differences. First, the total transmit power is imposed as a constraint in \eqref{maxmin}, while  it appears as  the objective function in \eqref{minpower2}. Second, the SINR threshold $\Gamma$ in  \eqref{maxmin} is treated as a variable, while $\{\Gamma_k\}_{k=1}^K$ in \eqref{minpower2} are fixed  problem parameters. As in \eqref{minpower2}, we further simplify the QoS constraint with an  auxiliary variable ${\color{black}\bV}=\bU^\dagger\bG\bW$,  after which an ADMM framework similar to \eqref{ppADMM2} can be employed to solve the problem.  %The generalization is straightforward. 
{The only differences with \eqref{ppADMM2} lie in the $\bW$-subproblem and the $\Gamma$-subproblem. In particular, the $\bW$-subproblem is a QCQP and can be efficiently solved via a bisection search, following similar steps as in Section \ref{sec:W}. 
At the $(t+1)$-th iteration, the $\Gamma$-subproblem has the following form:
\begin{equation*}
\begin{aligned}
\max_{\Gamma}~&\Gamma\\
\text{s.t.}~~&V_{k,k}^t\geq\sqrt{\Gamma(\sum_{j\neq k}|V_{k,j}^t|^2+\sigma^2)},~k=1,2,\dots, K,
\end{aligned}
\end{equation*}
and  is updated as 
$$\Gamma^{t+1}=\min_k\left\{\frac{|V_{k,k}^t|^2}{\sum_{j\neq k}|V_{k,j}^t|^2+\sigma^2}\right\}.$$ }

\subsubsection{Energy efficiency maximization} The energy efficiency of the system is given as 
$$F(\{\text{SINR}_k(\u_k,\bW)\}_{k=1}^K)=\frac{\sum_{k=1}^K\log\left(1+\text{\normalfont{SINR}}_k(\u_k, \bW)\right)}{\mu_0\|\bW\|_F^2+P_c},$$
where $\mu_0$ is the power amplifier efficiency and  $P_c$ represents the total circuit power. Applying the FP technique, %the energy efficiency can be %rewritten as  %$$(\max_{Y}2Y\sum_{k=1}^K\log(1\hspace{-0.05cm}+\hspace{-0.05cm}\text{\normalfont{SINR}}_k(\u_k, \bW))\hspace{-0.07cm}-\hspace{-0.05cm}Y^2(\mu_0\|\bW\|_F^2+P_c)^2)^{\frac{1}{2}}$$
 %Applying the fractional programming technique \cite{FP} and the transformation in Step 1, Hence,
 the energy efficiency maximization problem can be equivalently formulated  as   
\begin{equation}\label{effiency}
\begin{aligned}
\max_{{\bW, \bB, \bU}, v}\,&v\sum_{k=1}^K\log(1+\text{\normalfont{SINR}}_k(\u_k, \bW))-\frac{v^2}{2}(\mu_0\|\bW\|_F^2+P_c)^2\\
\text{s.t. }~~ &\eqref{con:bilinear} - \eqref{con:power2}.%\left(\mathbf{I}-\mathrm{i}Z_0\mathbf{B}\right)\bU=\left(\mathbf{I}+\mathrm{i}Z_0\mathbf{B}\right)\bH,\\
%~~&\bB=\bB^{T},~\bB\in\mathcal{B},\\
%&\|\bW\|_F^2\leq P_T.
\end{aligned}
\end{equation}
{Specifically, given $(\bU, \bW)$, the optimal $v$ to problem \eqref{effiency} is 
\begin{equation}\label{update:v}
v=\frac{\sum_{k=1}^K\log(1+\text{\normalfont{SINR}}_k(\u_k, \bW))}{(\mu_0\|\bW\|_F^2+P_c)^2}.
\end{equation}
Substituting this optimal $v$ into the objective function of \eqref{effiency} yields half the square of the energy efficiency. Since the energy efficiency is nonnegative, maximizing it is equivalent to maximizing its square,  hence \eqref{effiency} is equivalent to the original energy efficiency maximization problem.   }
%where we consider the total transmit power constraint at the BS. 

The sum-rate expression in the objective function can further be simplified as in \eqref{FP:sumrate},  enabling the application of an efficient ADMM framework {similar to \eqref{ppADMM}. The corresponding algorithm differs from \eqref{ppADMM} only in the formula of the  $\bW$-subproblem and the inclusion of an additional variable $v$. At each iteration, the variable $v$ is updated in closed-form as in \eqref{update:v}, which is obtained by maximizing the objective function with all other variables fixed. Due to the additional power-consumption term in the objective function, the $\mathbf{W}$-subproblem involves a quartic objective function under the same quadratic transmit power constraint, and is therefore different from the QCQP in \eqref{subW}. Specifically, the $\bW$-subproblem takes the form 
\begin{equation}
\begin{aligned}
\min_{\|\bW\|_F^2\leq P_T} &\sum_{k=1}^K\big(\bw_k^\dagger\bQ\bw_k-2\langle \mathbf{c}_k,\mathbf{w}_k\rangle\big)
\\
&+\frac{v^t}{2}(\mu_0\|\bW\|_F^2+P_c)^2+\frac{\tau}{2v^t}\|\bW-\bW^t\|_F^2,
\end{aligned}
\end{equation} 
where $\bQ$ and $\{\mathbf{c}_k\}$ are defined below \eqref{updatewk}. The above problem can still be efficiently solved using a one-dimensional bisection search. For brevity, we omit the detailed derivations here and refer interested readers to Appendix D of the supplemental material.}
  \subsection{MU-MIMO Systems}
  The proposed framework can also be extended to MU-MIMO systems. Denote $\H_k\in\C^{N_k\times M}$ as the channel from BD-RIS to the $k$-th user, where $N_k$ represents the number of antennas at the $k$-th user. 
  In addition, let   $\s_k\in\C^{d_k\times 1}$ be the data streams intended for the $k$-th user, and  let $\bW_k\in\C^{N\times d_k}$ denote the corresponding beamforming matrix.   The performance of the $k$-th user is typically characterized by its achievable rate, given by
 $$
  \begin{aligned}
R_k = \log\det \bigg( \mathbf{I} + 
& \, \mathbf{H}_{k,\text{eff}} \mathbf{W}_k \mathbf{W}_k^\dagger \mathbf{H}_{k,\text{eff}}^\dagger \nonumber \\
& \times \bigg( \sum_{j \neq k} \mathbf{H}_{k,\text{eff}} \mathbf{W}_j \mathbf{W}_j^\dagger \mathbf{H}_{k,\text{eff}}^\dagger + \sigma^2 \mathbf{I} \bigg)^{-1} \bigg),
\end{aligned}
$$
where $\H_{k,\text{eff}}:=\bH_k\bthe\bG\in\C^{N_k\times N}$ represents the effective channel from the BS to user $k$. Following a similar technique as in the MU-MISO case, we introduce auxiliary variables
\begin{equation}\label{Uk}
\bU_k=(\H_k\bthe)^\dagger\in\bC^{M\times N_k},~k=1,2,\dots, K,
\end{equation} with which the effective channel can be written as $\bH_{\text{eff,k}}=\bU_k^\dagger\bG,$ and the constraint $\bthe=\left(\mathbf{I}+\mathrm{i}Z_0\mathbf{B}\right)^{-1}\left(\mathbf{I}-\mathrm{i}Z_0\mathbf{B}\right)$ is transformed into 
\begin{equation}\label{bilinear:U}\left(\mathbf{I}-\mathrm{i}Z_0\mathbf{B}\right)\bU_k=\left(\mathbf{I}+\mathrm{i}Z_0\mathbf{B}\right)\bH_k^\dagger,~k=1,2,\dots, K.\end{equation}
The difference from the MU-MISO case is that the auxiliary variable introduced for each user, i.e., $\bU_k$, is now a matrix rather than a vector, due to 
%In contrast to the MU-MISO case where the auxiliary variable introduced for each user, i.e., $\u_k$, is a vector, the auxiliary variable $\bU_k$ in \eqref{Uk} is a matrix due to 
 the presence of multiple antennas at the user side. Nevertheless, $\bU_k$ remains related to $\bB$ through a bilinear constraint as shown in \eqref{bilinear:U}, which can still be tackled using an ADMM framework. For example, the sum-rate maximization problem for MU-MIMO systems can be solved by first applying the FP technique (in matrix form) \cite{FP3} to simplify the objective function and then employing a similar pp-ADMM algorithm as in \eqref{ppADMM}.
\subsection{Statistical CSI}
In previous sections, we assume perfect instantaneous CSI to focus on the proposed architecture-independent BD-RIS optimization framework. The framework can nevertheless be extended to statistical CSI settings. Under statistical CSI, the instantaneous utility can be replaced by its expectation over channel realizations. For example, the ergodic sum-rate maximization problem reads
\[
\begin{aligned}
\max_{\bW,\boldsymbol{\Theta},\bB}\quad
&\mathbb{E}_{\mathbf H,\mathbf G}
\left[
\sum_{k=1}^{K}
\log\left(1+\text{SINR}_k(\boldsymbol{\Theta},\mathbf W;\mathbf H,\mathbf G)\right)
\right]\\
\text{s.t. }\quad~ &\eqref{con:theta} - \eqref{con:power},
\end{aligned}
\]
where $\text{SINR}_k(\boldsymbol{\Theta},\mathbf W;\mathbf H,\mathbf G)$ denotes the SINR of user $k$ under a given channel realization $(\mathbf H,\mathbf G)$; see \eqref{SINR}. A straightforward way to handle the expectation in the objective function is to approximate it using sample-average approximation, which gives the following approximate problem:
\[
\begin{aligned}
\max_{\bW,\boldsymbol{\Theta}, \bB}\quad
&\frac{1}{N_s}\sum_{s=1}^{N_s}\sum_{k=1}^{K}
\log\left(1+\text{SINR}_k(\boldsymbol{\Theta},\mathbf W;\mathbf H^{(s)},\mathbf G^{(s)})\right)\\
\text{s.t. }~\quad &\eqref{con:theta} - \eqref{con:power},
\end{aligned}
\]
where $N_s$ is the number of samples. 
For each channel sample $s$, one can introduce
\(
\mathbf U^{(s)}=\boldsymbol{\Theta}^{\dagger}\mathbf H^{(s)},
\)
leading to
\[
(\mathbf I-\mathrm{i}Z_0\mathbf B)\mathbf U^{(s)}
=
(\mathbf I+\mathrm{i}Z_0\mathbf B)\mathbf H^{(s)},
\qquad s=1,\ldots,N_s.
\]
 Hence, the resulting sample-average problem preserves the same bilinear structure as the instantaneous-CSI formulation and can be handled by a similar pp-ADMM framework.%^, with the $\mathbf B$-subproblem accumulating the contributions from all channel samples. 

The sample-average approximation provides a direct way to approximate the ergodic sum-rate maximization problem. Although this formulation preserves the structure of the proposed framework, it introduces multiple channel samples and sample-dependent auxiliary variables $\{\bU^{(s)}\}_{s=1}^{N_s}$, which may lead to increased computational complexity when $N_s$ is large. Another possible direction is to derive deterministic equivalents or closed-form approximations of the ergodic rate under specific channel distribution assumptions \cite{Zhang_Statistical}. In addition, robust design under imperfect CSI is also practically relevant, where the CSI uncertainty is commonly addressed through worst-case or outage-constrained QoS formulations  \cite{Zhou_robust, Hong_robust}. These extensions require dedicated analysis and are left for future work. % This, however, requires dedicated statistical analysis and is left for future work. In addition, the extension to imperfect CSI, e.g., robust design with worst-case or outage-constrained QoS requirements, requires new problem formulations and is left for future work.}

\subsection{{\color{black}Low-Resolution Admittances}}
{\color{black}In previous discussions, we assumed that the tunable admittances in the reconfigurable impedance network have  infinite resolution such that the entries of the admittance matrix $\bB$ can take arbitrary continuous values (as long as they are nonzero). However, such an assumption incurs high hardware complexity. In practical implementations, the tunable admittances  are realized using finite-resolution components and therefore take values from a discrete set.

The proposed optimization framework can be extended to accommodate finite-resolution admittances by restricting the feasible set of each entry of $B_{i,j}$ to a predefined discrete alphabet, denoted by $\mathcal{D}_{i,j}$, for all $B_{i,j}\neq0$.   Under this setting, the overall algorithmic framework remains unchanged, and the only modification arises in the $\bB$-subproblem. Specifically, the $\bB$-subproblem becomes a discrete quadratic programming, which can be efficiently solved using standard discrete optimization techniques, such as exhaustive search (over small alphabets), branch-and-bound methods \cite{lawler1966branch}, or projected gradient method \cite{discretepgd}. A detailed investigation of the finite-resolution case is left for future work.}

\subsection{{\color{black}Other Modes}}
%The auxiliary variable introduced for the effective channel of each user, i.e., $\U_k$, is now a matrix rather than a vector, as shown in \eqref{}. Nevertheless, the ADMM method for handling this constraint remains unchanged.
{\color{black}Till now, we have assumed a BD-RIS operating in the reflective mode, where the BS and the users are located on the same side of the BD-RIS. As discussed in \cite{group_conn}, a BD-RIS can also work in other modes, such as the transmissive mode, which can be realized using an $M$-cell BD-RIS composed of $2M$ back-to-back placed unidirectional antennas and is useful when the BS and the users are located on opposite sides of the surface.

The proposed algorithmic framework can also be applied to other operating modes by appropriately modifying the channel model. As an example, we consider a transmissive deployment in which the BS is positioned on the side of the first BD-RIS sector, while the users are located on the opposite side of the second sector. Accordingly, the channel between the BS and the BD-RIS is modeled as $\mathbf{G}=[\mathbf{G}_1^T,\mathbf{0}_{M\times N}^T]^T\in\C^{2M\times N}$, where $\mathbf{G}_1\in\C^{M\times N}$ is the channel from the BS to the first sector of the BD-RIS. Similarly, the channel between the BD-RIS and user $k$ is modeled as $\h_k=[\mathbf{0}_{M\times 1}^T,\h_{k,2}^T]^T\in\C^{2M\times 1}$, where $\h_{k,2}\in\C^{M\times 1}$ is the channel between the second layer of the BD-RIS and user $k$; see \cite{group_conn} for more details. Under this channel model, the proposed algorithm can be directly applied to the transmissive BD-RIS without any modification to its algorithmic structure.}
 \section{Simulation Results}\label{sec:5}
In this section, we present simulation results to demonstrate the effectiveness of the proposed algorithms and evaluate the performance of various BD-RIS architectures.

In our simulation, we adopt the same setup as in \cite{group_conn}, %. Specifically, we consider a BD-RIS aided MU-MISO system as shown in Fig. \ref{BDRIS_simu},
 where the distance between the BS and the BD-RIS is $d_{BI}=50$ m, and $K$ single-antenna users are randomly located near the BD-RIS with the same distance $d_{IU}=2.5$ m. Unless otherwise stated, the number of antennas at the BS is $N=4$, and the number of users is $K=4$. The channels from the BS to the BD-RIS and from the BD-RIS to the users are  modeled with both large-scale fading and small-scale fading. For large-scale fading, we use the distance-dependent path loss  $\zeta(d)=\zeta_0d^{-\alpha}$, where $\zeta_0$ represents the signal attenuation at a reference distance $d_0=1\,$m, which is set as $\zeta_0=-30$ dB,  and $\alpha$ is the path-loss exponent, which is set as $\alpha=2.2$ for both the channel from the BS to the BD-RIS and the channel from the BD-RIS to the users.  The small-scale fading is modeled as Rician fading with a  Rician factor of $\kappa=2$ dB for both channels. The noise power  is set as $\sigma^2=-80$ dBm.

\subsection{Comparison with Existing Algorithms}\label{sec:comparison_algorithm}

We first evaluate the effectiveness of the proposed pp-ADMM algorithms by comparing them with the state-of-the-art approaches. Since no existing algorithm is applicable to general BD-RIS architectures, we focus on fully- and group-connected BD-RIS architectures, where the group size is set as $G_s=4$. For the sum-rate maximization problem, we include the PDD algorithm in \cite{PDD} and the two-stage FP algorithm in \cite{twostage} as benchmarks.  For the transmit power minimization problem, we include the PDD algorithm in \cite{PDD} as a benchmark, which, to the best of our knowledge, is the only existing algorithm for solving the transmit power minimization problem in the  BD-RIS literature.

Figs. \ref{sumrate_algorithm} and \ref{sumrate_time} present the simulation results for the sum-rate maximization problem, where Fig. \ref{sumrate_algorithm} shows the sum-rate achieved by different algorithms, and Fig. \ref{sumrate_time} illustrates their CPU times, with the y-axis plotted on a logarithmic scale for better visualization. As shown in the figures, the proposed pp-ADMM algorithm achieves a similar sum-rate performance compared to PDD with a significantly reduced  CPU time.  %{\color{red} The computational cost of PDD grows much more rapidly with $M$ which is consistent with the complexity discussion in Remark \ref{}}. 
Both of these two approaches yield a substantially higher sum-rate than the two-stage FP algorithm in \cite{twostage}. The superiority is particularly prominent when the number of  RIS elements is large, in which cases a $20\%$ gain can be observed. Nevertheless, the two-stage FP algorithm is the most  computationally efficient, whose computational cost remains almost constant as $M$ grows. This is because in two-stage FP,  the scattering matrix $\bthe\in\C^{M\times M}$ is first determined in a heuristic way and then held fixed while solving the sum-rate maximization problem. Though simple and efficient, such a heuristic approach lacks theoretical justification and leads to performance loss. %To conclude, our proposed pp-ADMM algorithm achieves a better trade-off between sum-rate performance and computational efficiency than the SOTAs. 

\begin{figure}[t]
\includegraphics[scale=0.3]{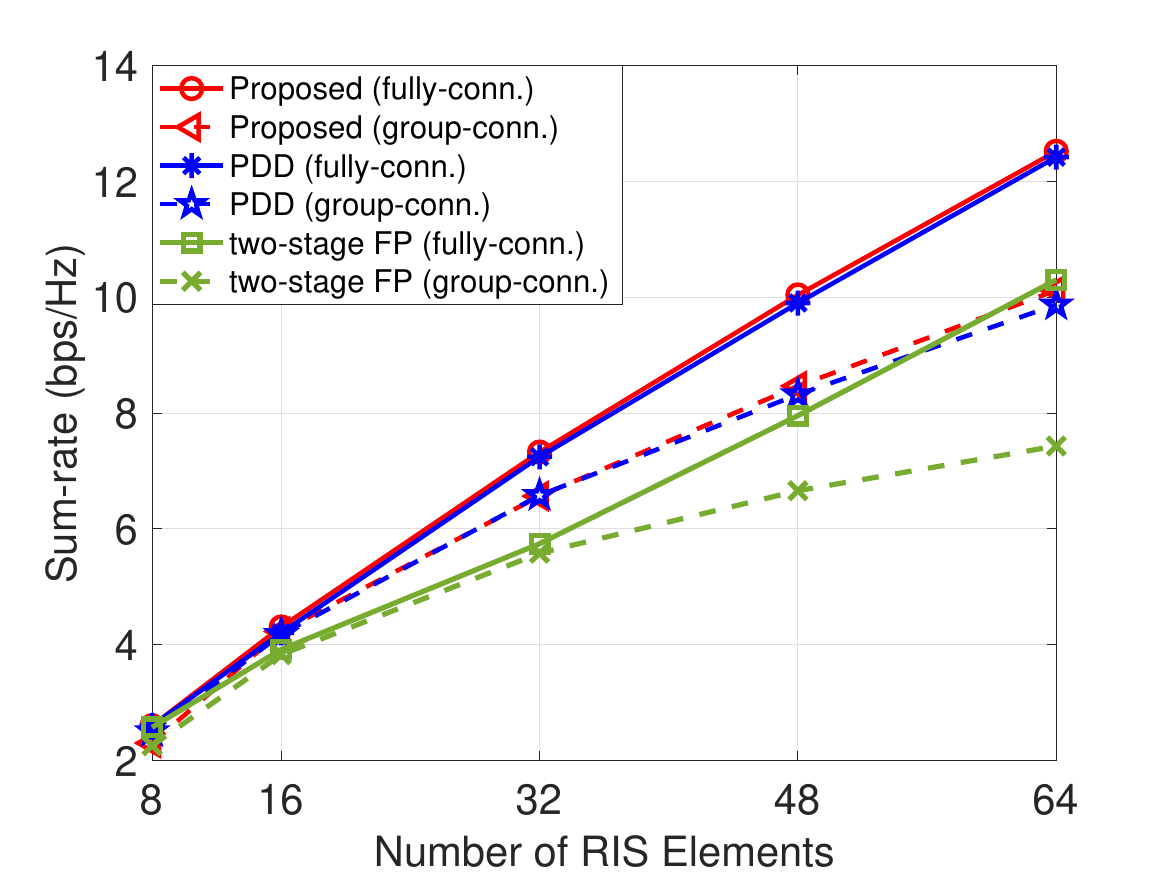}
\centering
\caption{Sum-rate versus number of RIS elements  ($N=K=4, G_s=4,$ $P_T=10$\,dBm).}
\label{sumrate_algorithm}
\end{figure}
\begin{figure}[t]
\includegraphics[scale=0.3]{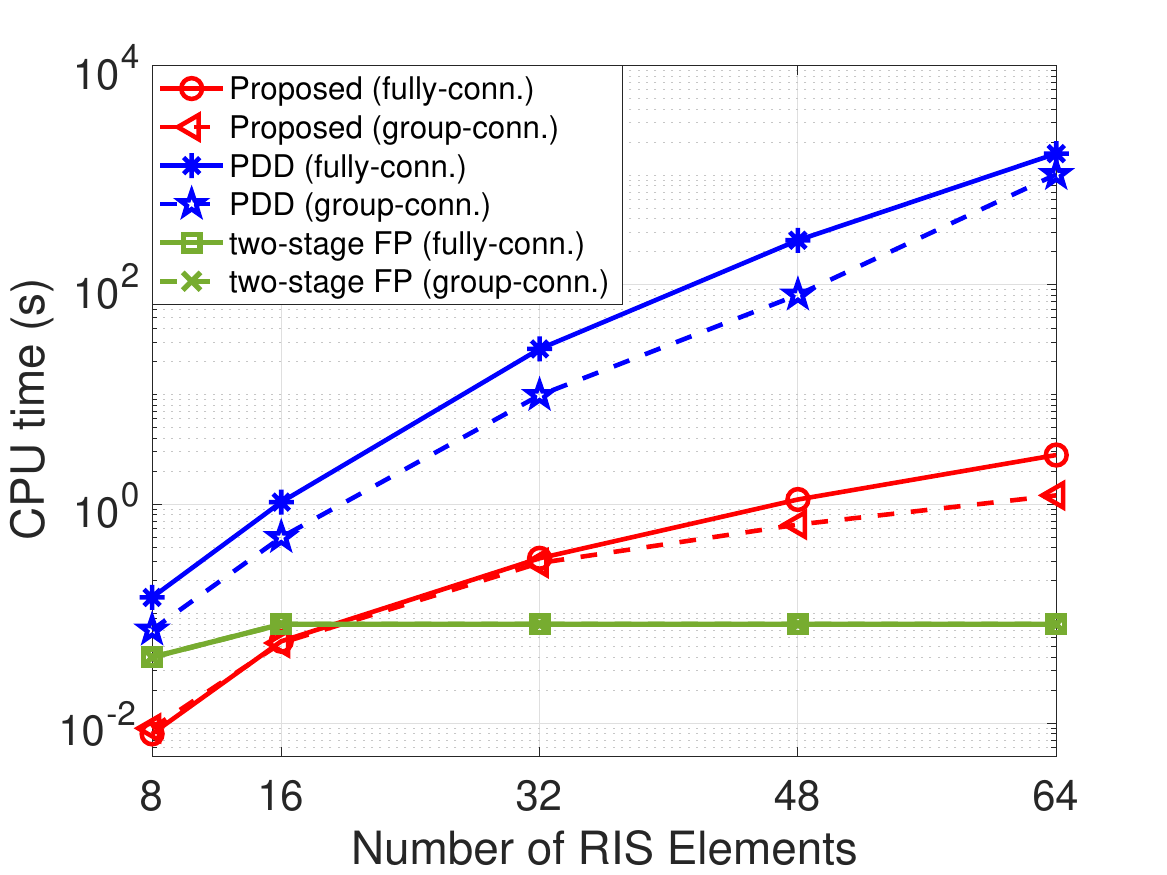}
\centering
\caption{CPU time versus number of RIS elements for sum-rate maximization ($N=K=4, G_s=4$, $P_T=10$\,dBm).}
\label{sumrate_time}
\end{figure}

In Figs. \ref{minpower} and \ref{minpower_time}, we evaluate the transmit power minimization problem,  comparing the performance and CPU time of the proposed pp-ADMM algorithm with the PDD algorithm in \cite{PDD}. The results show that the proposed pp-ADMM algorithm achieves lower transmit power than PDD with substantially reduced CPU time,  demonstrating both its effectiveness and  efficiency.  The computational advantage is even more pronounced than in the sum-rate maximization problem, since PDD needs to call CVX to solve a SOCP at each iteration, as discussed in Remark \ref{remark}, making it computationally inefficient even when the number of RIS elements is small.

We note that, for fully-connected and group-connected RIS, the proposed admittance-matrix formulation can be mathematically viewed as  parameterizing the symmetric and unitary constraints on the scattering matrix through the Cayley transform, which provides an equivalent formulation of the conventional models in  \cite{twostage,PDD}. The observed improvements in sum-rate or transmit power performance and CPU time reflect the effectiveness of the proposed pp-ADMM algorithms that exploit this equivalent reformulation. 

\begin{figure}[t]
\includegraphics[scale=0.3]{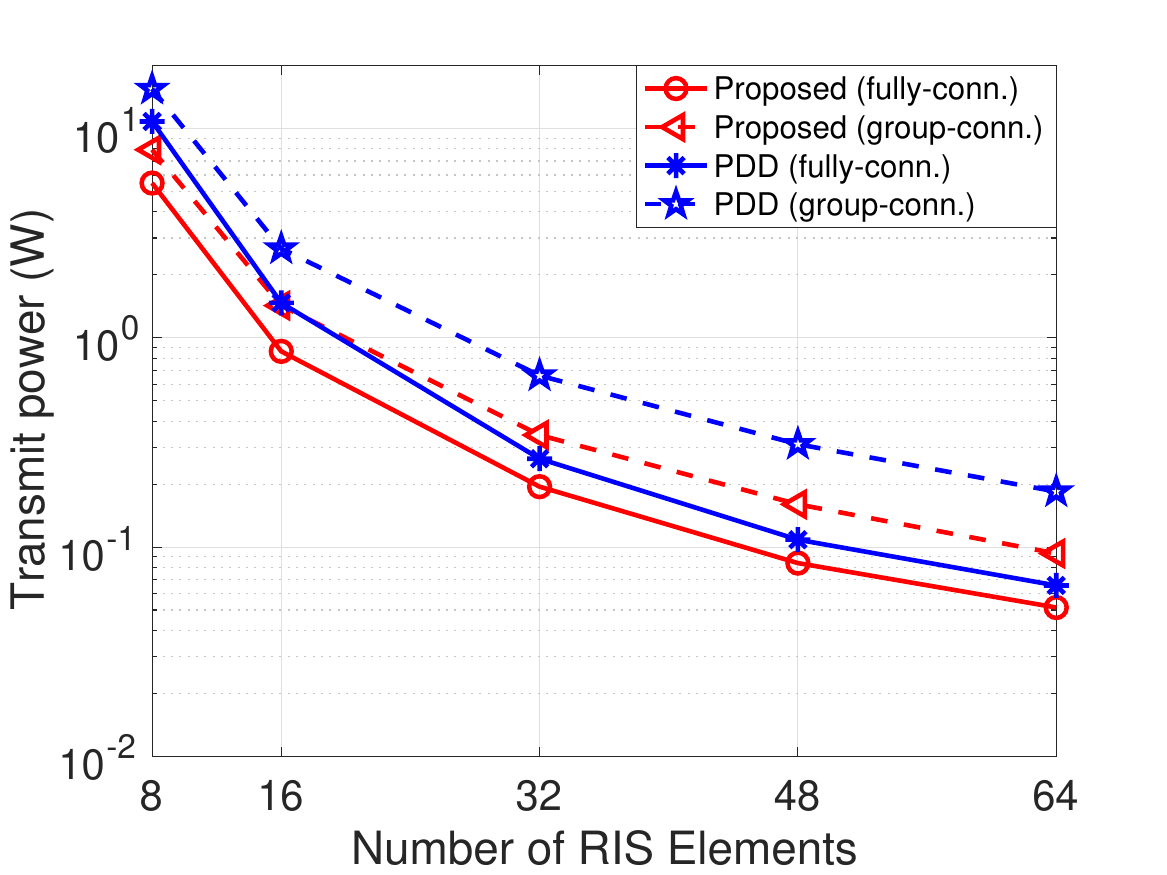}
\vspace{-0.2cm}
\centering
\caption{Transmit power versus number of RIS elements ($N=K=4, G_s=4${\color{black}, $\Gamma_k=10$ dB}).}
\label{minpower}
\end{figure}

\begin{figure}[t]
\includegraphics[scale=0.3]{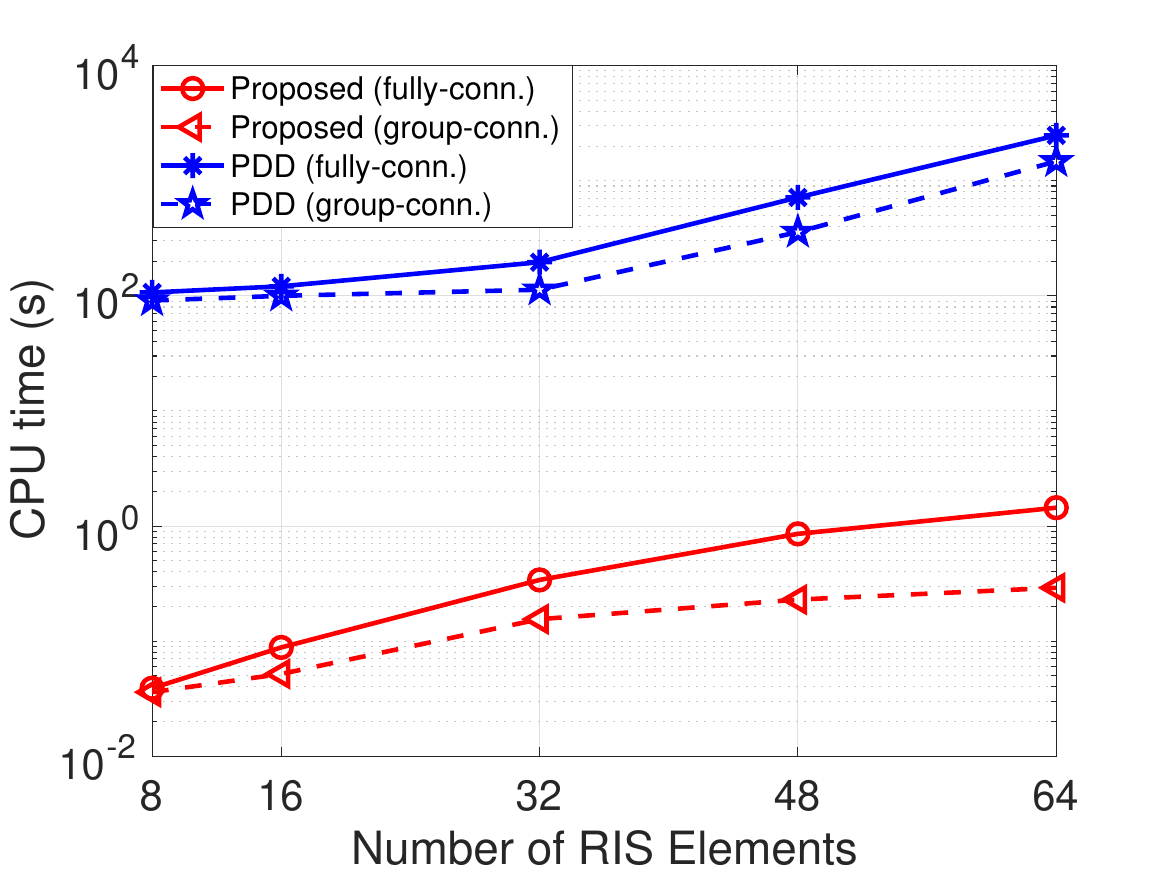}
\vspace{-0.2cm}
\centering
\caption{CPU time versus number of RIS elements for transmit power minimization  ($N=K=4, G_s=4${\color{black}, $\Gamma_k=10\,$dB}).}
\label{minpower_time}
\end{figure}

\begin{figure}[t]
\includegraphics[scale=0.3]{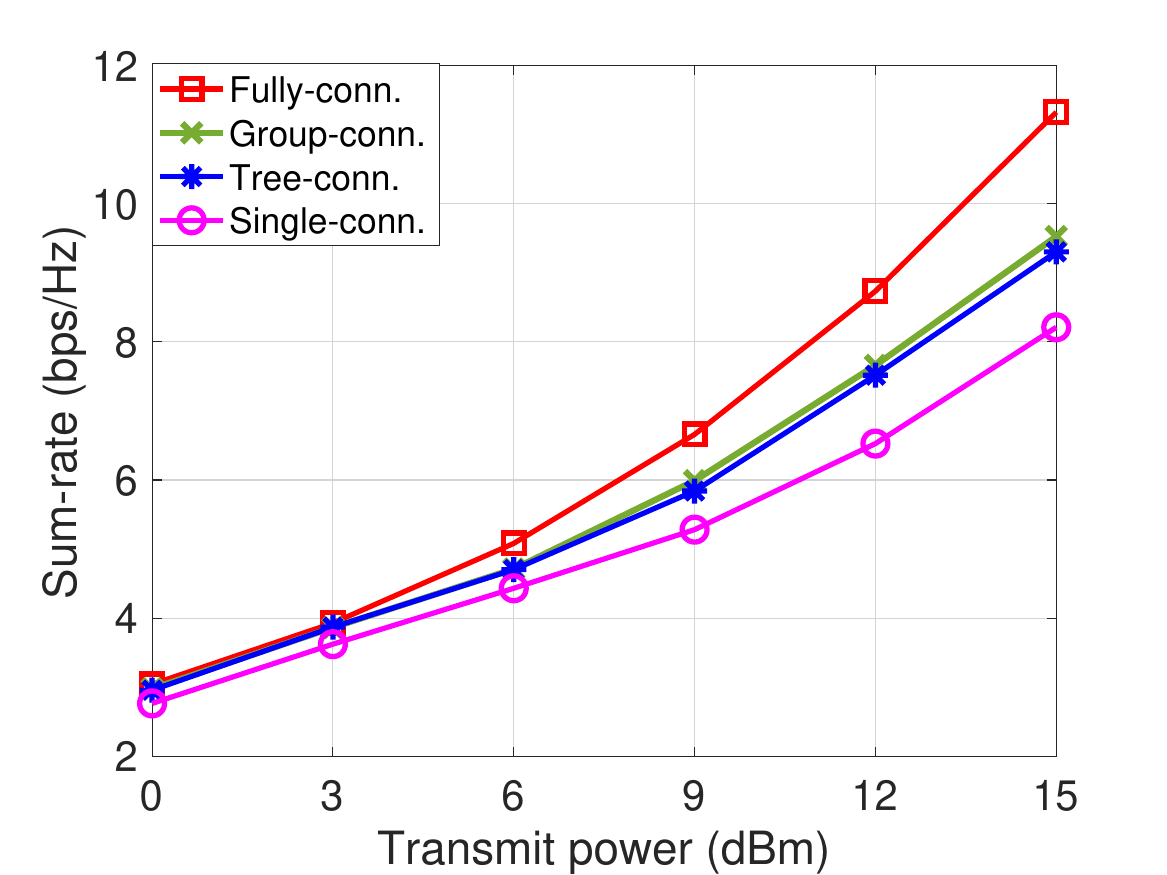}
\centering
\vspace{-0.2cm}
\caption{Sum-rate versus transmit power for different BD-RIS architectures ($N=K=4, M=32, G_s=4$).}
\label{sumrate_structure}
\end{figure}

\begin{figure}[t]
\includegraphics[scale=0.3]{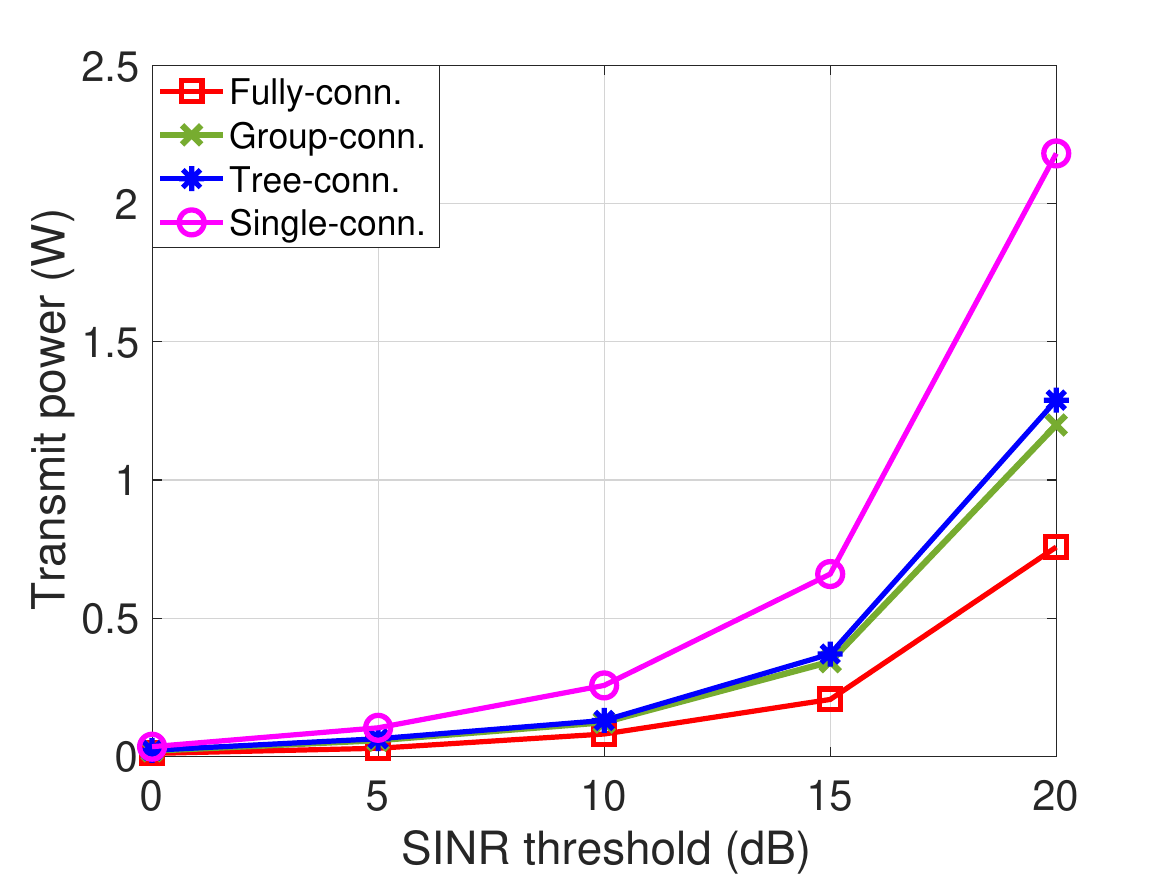}
\centering
\vspace{-0.2cm}
\caption{Transmit power versus SINR threshold for different BD-RIS architectures ($N=K=4, M=32, G_s=4$).}
\label{minpower_structure}
\end{figure}
%\begin{figure}[t]
%\includegraphics[scale=0.36]{J_sumrate_pareto}
%\centering
%\caption{Sum-rate versus the number of impedances  for group- and band-connected BD-RIS architectures ($N=4, M=64$).}
%\label{sumrate_pareto}
%\end{figure}
\subsection{Performance of Different BD-RIS Architectures}
In Figs. \ref{sumrate_structure} and \ref{minpower_structure}, we compare the performance of the four BD-RIS architectures introduced in Section \ref{sec:RIS_arch}, where Fig. \ref{sumrate_structure} presents the result for sum-rate maximization and Fig. \ref{minpower_structure} shows the result for transmit power minimization. As expected, the fully-connected and single-connected RIS demonstrate the best and worst performance, respectively, due to the highest and lowest design flexibility they offer.  The tree-connected RIS, though proven optimal for single-user MISO systems \cite{tree},  suffers from a performance degradation compared to fully-connected RIS in   MU-MISO systems. Nevertheless, we can observe that the tree-connected RIS achieves performance comparable to the group-connected RIS with group size $G_s = 4$, while requiring fewer impedances (the impedances are $2{M}-1$ and $2.5{M}$, respectively, for tree-connected and group-connected RIS with $G_s=4$). This suggests that the tree-connected architecture may still be preferable to the group-connected ones in multiuser systems, possibly due to its ``more connected'' circuit topology.

\section{Conclusion}\label{sec:6}
%This paper studied the optimization of BD-RIS in multi-user systems, investigating both sum-rate maximization and power minimization problems. For each problem, we custom-developed 
This paper studied the optimization of BD-RIS, focusing mainly on sum-rate maximization and transmit power minimization for MU-MISO systems. For each problem, we custom-developed   a  pp-ADMM algorithm. The proposed algorithms offer several advantages. First, they are architecture-independent and thus applicable to any BD-RIS architecture. Second, they are computationally efficient, with each subproblem either admitting a closed-form solution or being solvable with low complexity.  Finally, they exhibit desirable theoretical convergence properties under mild assumptions. The proposed framework can be generalized in various aspects, including accommodating other utility functions, extending to MU-MIMO systems, incorporating statistical CSI, supporting different operating modes, and considering finite-resolution admittances.  Extensive simulation results demonstrate that our approaches achieve better trade-off between performance and CPU time compared to existing state-of-the-art methods. 

Another interesting insight drawn from our simulations is that the tree-connected RIS, known to be optimal in single-user MISO systems, is no longer optimal in multiuser systems. {\color{black}Motivated by this observation, we theoretically identify the optimal BD-RIS architectures in multiuser systems in our recent work \cite{graphtit}.  The current paper has served as an algorithmic tool in \cite{graphtit} to validate its theoretical results.}

\appendices
\section{Proof of \eqref{rjstar}}\label{app:rjstar}
In this appendix, we prove that \eqref{rjstar} is  the optimal solution to  problem \eqref{y-problem2}.

We first examine the KKT condition of problem \eqref{y-problem2}. Let $\eta$ and $\nu$ be the Lagrange multipliers associated with the first and second constraints of problem \eqref{y-problem2}, respectively. The KKT condition is given by 
\begin{subequations}\label{kkt}
\begin{align}
&\left\{
\begin{aligned}
(1-\eta)r_{k,k}=a_{k,k}+\nu/2,~~~~&\text{if }j=k;\\
(1+\Gamma_k\eta)r_{k,j}=a_{k,j},~\,~~~~~~~~&\text{if }j\neq k,
\end{aligned}\right.\label{rkj}\\
&\eta\left(r_{k,k}^2-\Gamma_k(\sum_{j\neq k}r_{k,j}^2+\sigma^2)\right)=0,\label{complementary}\\
&r_{k,k}^2\geq\Gamma_k(\sum_{j\neq k}r_{k,j}^2+\sigma^2), ~~\eta\geq 0,\label{constraint1}\\
&\nu r_{k,k}=0,~\nu\geq 0,~r_{k,k}\geq 0.\label{constraint2}
\end{align}
\end{subequations}
It follows  from \eqref{constraint1} and \eqref{constraint2} that $r_{k,k}>0$ and $\nu=0$. %The remaining task is to determine $\eta$. 
This, combined with the expression of $r_{k,k}$ in \eqref{rkj}, implies that $\eta\in[0,1)$ when $a_{k,k}> 0$, $\eta=1$ when $a_{k,k}=0$,  and $\eta\in(1,\infty)$ otherwise.  Clearly, when $a_{k,k}=0$, the solution given by \eqref{rjstar} is the unique KKT point satisfying  \eqref{rkj}-\eqref{constraint2}.  Next, we investigate the case that $a_{k,k}\neq 0$. Consider the following function:
$$
\begin{aligned}
h(\eta)&=r_{k,k}^2-\Gamma_k(\sum_{j\neq k}r_{k,j}^2+\sigma^2)\\
&=\frac{a_{k,k}^2}{(1-\eta)^2}-\frac{\Gamma_k\sum_{j\neq k}a_{k,j}^2}{(1+\Gamma_k\eta)^2}-\Gamma_k\sigma^2,
\end{aligned}$$
where the second equation is due to \eqref{rkj} and the fact that $\nu=0$. Then \eqref{complementary} and \eqref{constraint1} can be expressed as $$\eta h(\eta)=0, ~\eta\geq 0, ~h(\eta)\geq 0.$$ It is straightforward to check that 
\begin{equation}\label{hproperty}
\begin{aligned}
h(0)&={a_{k,k}^2}-{\Gamma_k\sum_{j\neq k}a_{k,j}^2}-\Gamma_k\sigma^2,\\
\lim_{\eta\to 1}h(\eta)&=\infty,~~\lim_{\eta\to \infty}h(\eta)=-\Gamma_k\sigma^2<0.
\end{aligned}\end{equation}
Therefore, $\eta^*$ given in \eqref{eta2} is well-defined (i.e., $\eta_1^*$ and $\eta_2^*$ exist) and the solution given by \eqref{rjstar} satisfies the KKT condition in \eqref{rkj}-\eqref{constraint2}.  In addition, according to
$$h'(\eta)=2\Gamma_k^2\sum_{j\neq k}a_{k,j}^2(1+\Gamma_k\eta)^{-3}+2a_{k,k}^2(1-\eta)^{-3},$$
 $h(\eta)$ is increasing in $(0,1)$. Since there is at most one solution to $h'(\eta)=0$ in $(1,\infty)$, $h(\eta)$ is either decreasing or first decreasing and then increasing in $(1,\infty)$. This, combined with \eqref{hproperty}, implies the uniqueness of   $\eta_1^*$ and $\eta_2^*$. Therefore,  the solution given by \eqref{rjstar} is the unique KKT point of \eqref{y-problem2}, which is thus the optimal solution to \eqref{y-problem2}. 

\section{Proof of Theorem 1}\label{appendix:converge}\vspace{-0.1cm}
 Our proof follows a similar idea to the classical proof in \cite{ADMM,ADMM2}, which contains three main steps.  First, we show that the difference between the Lagrangian multipliers in successive iterates can be bounded by that between the primal variables. With this result, the second step establishes a sufficient decrease condition for the augmented Lagrangian function. Finally, we investigate the limiting behavior of the sequence generated by the proposed pp-ADMM algorithm, based on the conclusions from the first two steps and the update rules in (14a) -- (14f).

Next, we first present the results for the first and second steps  as two auxiliary lemmas in Appendix \ref{app:AA}, and then conduct the final step to prove Theorem 1 in Appendix \ref{app:AB}.
\vspace{-0.3cm}
\subsection{Auxiliary Lemmas}\label{app:AA}
\begin{lemma}\label{lambdabound}
Under the assumption in Theorem 1, there exists a constant $C_0$ independent of $\rho$ such that \vspace{-0.2cm}

{\small$$
\begin{aligned}\left\|\blam^{t+1}-\blam^t\right\|^2_2\leq C_0&\big(\|\y^{t+1}-\y^t\|_2^2+\|\boldsymbol{\gamma}^{t+1}-\boldsymbol{\gamma}^t\|_2^2\\
&\hspace{-2.7cm}+\|\bB^{t+1}-\bB^t\|_F^2+\|\bW^{t+1}-\bW^t\|^2_F+\left\|\bU^{t+1}-\bU^t\right\|^2_F\big).
\end{aligned}$$}
\end{lemma}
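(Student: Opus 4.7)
The plan is to derive an explicit formula for $\blam^{t+1}$ from the first-order optimality condition of the $\bU$-subproblem, and then control $\blam^{t+1}-\blam^t$ in terms of the change in the other primal variables via Lipschitz continuity. First, from \eqref{subU} the unconstrained $\bU$-update satisfies $\nabla_{\bU}\mathcal{L}_\rho(\y^{t+1},\boldsymbol{\gamma}^{t+1},\bW^{t+1},\bB^{t+1},\bU^{t+1},\blam^t)=\mathbf{0}$. Expanding and using the multiplier update \eqref{sublambda} to absorb the penalty term, I would obtain the identity
\[
(\mathbf{I}+\mathrm{i}Z_0\bB^{t+1})\blam^{t+1}=\nabla_{\bU}\tilde R(\y^{t+1},\boldsymbol{\gamma}^{t+1},\bW^{t+1},\bU^{t+1}).
\]

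Second, I would observe that $\mathbf{I}+\mathrm{i}Z_0\bB$ is always invertible for real symmetric $\bB$: writing $\bB=\mathbf{V}\bD\mathbf{V}^T$ with real diagonal $\bD$, the matrix factors as $\mathbf{V}(\mathbf{I}+\mathrm{i}Z_0\bD)\mathbf{V}^T$, so its eigenvalues $1+\mathrm{i}Z_0 d_n$ have modulus at least one, yielding $\|(\mathbf{I}+\mathrm{i}Z_0\bB)^{-1}\|_2\le 1$. Hence $\blam^{t+1}=(\mathbf{I}+\mathrm{i}Z_0\bB^{t+1})^{-1}\nabla_{\bU}\tilde R^{t+1}$ (and analogously for $\blam^t$), and I can split
\[
\blam^{t+1}-\blam^t = (\mathbf{I}+\mathrm{i}Z_0\bB^{t+1})^{-1}\bigl(\nabla_{\bU}\tilde R^{t+1}-\nabla_{\bU}\tilde R^{t}\bigr) + \bigl[(\mathbf{I}+\mathrm{i}Z_0\bB^{t+1})^{-1}-(\mathbf{I}+\mathrm{i}Z_0\bB^{t})^{-1}\bigr]\nabla_{\bU}\tilde R^{t}.
\]
The second bracket is handled by the resolvent identity, producing a factor $\mathrm{i}Z_0(\bB^{t}-\bB^{t+1})$ sandwiched between two contractions, hence bounded by $Z_0\|\bB^{t+1}-\bB^t\|_F$ times a scalar depending on $\|\nabla_{\bU}\tilde R^t\|_2$.

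Third, I would establish the needed boundedness and Lipschitz properties on the iterate trajectory. The power constraint \eqref{con:power2} bounds $\{\bW^t\}$, the hypothesis in Theorem 1 bounds $\{\bU^t\}$, and the closed-form updates \eqref{updatey}--\eqref{updategamma} together with $\sigma^2>0$ then bound $\{\y^t\}$ and $\{\boldsymbol{\gamma}^t\}$. On this joint compact set $\nabla_{\bU}\tilde R$ is continuously differentiable and hence both bounded (say by $\sqrt{C_{0,1}}$) and Lipschitz (with some constant $\sqrt{C_{0,2}}$) in $(\y,\boldsymbol{\gamma},\bW,\bU)$. Combining these bounds with the inequality $\|\x+\y\|^2\le 2\|\x\|^2+2\|\y\|^2$ and the contraction $\|(\mathbf{I}+\mathrm{i}Z_0\bB^{t+1})^{-1}\|_2\le 1$ yields the lemma with $C_0=2\max\{Z_0^2 C_{0,1},\,C_{0,2}\}$, independent of the penalty parameter $\rho$.

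The main obstacle, and the only non-mechanical part, is justifying the existence of a \emph{uniform} Lipschitz constant for $\nabla_{\bU}\tilde R$ that does not depend on $\rho$; this requires first showing the sub-iterates $\y^t,\boldsymbol{\gamma}^t$ remain in a compact set under the stated boundedness of $\{\bW^t\}$ and $\{\bU^t\}$, after which continuous differentiability suffices. The independence of $C_0$ from $\rho$ is what ultimately allows Lemma 2 to produce a sufficient decrease once $\rho$ is chosen large enough.
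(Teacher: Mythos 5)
Your proposal is correct and follows essentially the same route as the paper's proof in Appendix B: you derive $\blam^{t+1}=(\mathbf{I}+\mathrm{i}Z_0\bB^{t+1})^{-1}\nabla_{\bU}\tilde R^{t+1}$ from the $\bU$-optimality condition and the multiplier update, use the eigendecomposition of the real symmetric $\bB^t$ to get the contraction bound $\|(\mathbf{I}+\mathrm{i}Z_0\bB^t)^{-1}\|_2\leq 1$, split the difference via the resolvent identity, and invoke boundedness of the iterates (power constraint for $\bW$, the Theorem 1 hypothesis for $\bU$, and the closed-form updates for $\y,\boldsymbol{\gamma}$) to obtain uniform bounds and Lipschitz continuity of $\nabla_{\bU}\tilde R$ on a compact set. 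Even your final constant $C_0=2\max\{Z_0^2C_{0,1},C_{0,2}\}$ matches the paper's.
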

\begin{proof}
According to the update rule of variable $\bU$ in (14e), we have \vspace{-0.2cm}

{\small$$
\nabla_{\bU}\mathcal{L}_{\rho}(\y^{t+1},\boldsymbol{\gamma}^{t+1},\bW^{t+1},\bB^{t+1},\bU^{t+1},\blam^t)=\mathbf{0}, $$}which implies that\vspace{-0.15cm}
 
{\small$$
\begin{aligned}
&\nabla_{\bU}{\tilde{R}}(\y^{t+1},\boldsymbol{\gamma}^{t+1},\bW^{t+1},\bU^{t+1})\\
&\hspace{-0.05cm}-\hspace{-0.05cm}(\mathbf{I}\hspace{-0.05cm}+\hspace{-0.05cm}\mathrm{i}Z_0\bB^{t+1})\hspace{-0.05cm}\left(\blam^t\hspace{-0.05cm}+\hspace{-0.05cm}\rho((\mathbf{I}\hspace{-0.04cm}-\hspace{-0.04cm}\mathrm{i}Z_0\bB^{t+1})\bU^{t+1}\hspace{-0.1cm}-\hspace{-0.05cm}(\mathbf{I}+\hspace{-0.05cm}\mathrm{i}Z_0\bB^{t+1})\H)\right)\\
=\,&\nabla_{\bU}{\tilde{R}}(\y^{t+1},\boldsymbol{\gamma}^{t+1},\bW^{t+1},\bU^{t+1})-(\mathbf{I}+\mathrm{i}Z_0\bB^{t+1})\blam^{t+1}=\mathbf{0},\end{aligned}
$$}
\hspace{-0.15cm}where the first equality uses the update rule of the Lagrange multiplier $\blam$ in (14f). We claim that for any $t\geq 0$,  the matrix $\mathbf{I}+\mathrm{i}Z_0\bB^{t}$ is invertible with  \vspace{-0.1cm}

{\small \begin{equation}\label{property_B}
\|(\mathbf{I}+\mathrm{i}Z_0\bB^{t})^{-1}\|_2\leq 1.
\end{equation}}
\hspace{-0.2cm}This can be seen by expressing the matrix  $\mathbf{I}+\mathrm{i}Z_0\bB^{t}$ as \vspace{-0.2cm}

{\small$$\mathbf{I}+\mathrm{i}Z_0\bB^{t}=\mathbf{V}(\mathrm{i}Z_0\bD+\mathbf{I})\mathbf{V}^T,$$}
\hspace{-0.13cm}where $\bB^{t}=\mathbf{V}\bD\mathbf{V}^T$ is the eigenvalue decomposition of $\bB^{t}$. Hence,  $\blam^{t+1}$ can be expressed as    \vspace{-0.2cm}

{\small\begin{equation}\label{lambda}
\hspace{-0.15cm}\blam^{t+1}\hspace{-0.05cm}=\hspace{-0.05cm}(\mathbf{I}\hspace{-0.01cm}+\hspace{-0.01cm}\mathrm{i}Z_0\bB^{t+1})\hspace{-0.05cm}^{-1}\nabla_{\bU}{\tilde{R}}(\y^{t+1}\hspace{-0.05cm},\boldsymbol{\gamma}^{t+1},\bW^{t+1},\bU^{t+1}).\end{equation}}
\hspace{-0.2cm} 
Let
{\small$\nabla_{\bU}\tilde{R}^t:=\nabla_{\bU}\tilde{R}(\y^{t},\boldsymbol{\gamma}^t,\bW^t,\bU^t).$} It follows from \eqref{lambda} that 
{\small\begin{equation}\label{lambda:1}
\begin{aligned}
&\|\blam^{t+1}\hspace{-0.05cm}-\blam^t\|^2_F\\
\leq\,& 2\left\|(\mathbf{I}\hspace{-0.05cm}+\hspace{-0.05cm}\mathrm{i}Z_0\bB^{t+1})^{-1}\left(\nabla_{\bU}{\tilde{R}}^{t+1}\hspace{-0.05cm}-\hspace{-0.05cm}\nabla_{\bU}{\tilde{R}}^t\right)\right\|_F^2\\
 &+2\left\|\left((\mathbf{I}+\mathrm{i}Z_0\bB^{t+1})^{-1}-(\mathbf{I}+\mathrm{i}Z_0\bB^t)^{-1}\right)\nabla_{\bU} \tilde{R}^t\right\|_F^2\\
 \leq \,&2\left\|\nabla_{\bU}{\tilde{R}}^{t+1}\hspace{-0.05cm}-\hspace{-0.05cm}\nabla_{\bU}{\tilde{R}}^t\right\|_F^2+2Z_0^2\|\nabla_\bU \tilde{R}^t\|
_2^2\|\bB^{t+1}-\bB^t\|_F^2,
\end{aligned}
\end{equation}}
\hspace{-0.15cm}where the last inequality uses \eqref{property_B}, the fact that  $\|\mathbf{X}\mathbf{Y}\|_F\leq \|\mathbf{X}\|_2\|\mathbf{Y}\|_F$ for any given matrices $\mathbf{X}$ and $\mathbf{Y}$, and \vspace{-0.2cm}

{\small$$
\begin{aligned}
&(\mathbf{I}+\mathrm{i}Z_0\bB^{t+1})^{-1}-(\mathbf{I}+\mathrm{i}Z_0\bB^t)^{-1}\\
=\,&\mathrm{i}Z_0 (\mathbf{I}+\mathrm{i}Z_0\bB^{t+1})^{-1}(\bB^{t}-\bB^{t+1})(\mathbf{I}+\mathrm{i}Z_0\bB^{t})^{-1}.
\end{aligned}$$}
\hspace{-0.1cm}Note that $\{\bW^t\}_{t\geq 0}$ is bounded due to the total transmit power constraint. Under the assumption in Theorem 1, $\{\bU^t\}_{t\geq 0}$ is  bounded. Then it follows from (15) and (16) that $ \{\y^t\}_{t\geq 0}$ and $\{\boldsymbol{\gamma}^t\}_{t\geq0}$ are bounded.  Since $\nabla_\bU \tilde{R}^t$ is continuously differentiable, it is bounded and Lipschitz continuous on any compact set, i.e., there exists  constants $C_{0,1}$ and $C_{0,2}$ such that $\|\nabla_\bU \tilde{R}^t\|_2^2\leq C_{0,1}$ and \vspace{-0.2cm}

{\small$$\begin{aligned}\left\|\nabla_{\bU}\tilde{R}^{t+1}\hspace{-0.07cm}-\hspace{-0.07cm}\nabla_{\bU}\tilde{R}^t\right\|_F^2&\hspace{-0.05cm}\leq C_{0,2}\hspace{-0.05cm}\left(\|\y^{t+1}-\y^t\|_2^2+\|\boldsymbol{\gamma}^{t+1}-\boldsymbol{\gamma}^t\|_2^2\right.\\
&\left.+\|\bW^{t+1}\hspace{-0.05cm}-\hspace{-0.05cm}\bW^t\|^2_F+\left\|\bU^{t+1}\hspace{-0.05cm}-\hspace{-0.05cm}\bU^t\right\|^2_F\hspace{-0.05cm}\right).
\end{aligned}$$}
\hspace{-0.15cm}Combining the above discussions with \eqref{lambda:1}, we can conclude that Lemma \ref{lambdabound} holds with $C_0=2\max\{Z_0^2C_{0,1},C_{0,2}\}$.
\end{proof}
\begin{lemma}\label{ALbound}
Let 
$\mathcal{L}_{\rho}^t:=\mathcal{L}_{\rho}(\y^{t},\boldsymbol{\gamma}^{t},\bW^{t},\bB^{t},\bU^t,\blam^t)$ denote the value of the augmented  Lagrangian function in \text{\normalfont{(13)}} at the $t$-th iteration. Then the following inequality holds: \vspace{-0.2cm}

{\small\begin{equation}\label{ALMdiff}
\begin{aligned}
&\mathcal{L}_{\rho}^{t+1}-\mathcal{L}_{\rho}^t\\
\geq&\left(\frac{\sigma^2}{2}\hspace{-0.05cm}-\hspace{-0.05cm}\frac{C_0}{\rho}\right)\|\y^{t+1}\hspace{-0.05cm}-\hspace{-0.05cm}\y^t\|_2^2+\hspace{-0.05cm}\left(\frac{1}{2(1+C_{\gamma})^2}\hspace{-0.05cm}-\hspace{-0.05cm}\frac{C_0}{\rho}\right)\|\boldsymbol{\gamma}^{t+1}\hspace{-0.05cm}-\hspace{-0.05cm}\boldsymbol{\gamma}^t\|_2^2\\
&+\left(\frac{\tau}{2}-\frac{C_0}{\rho}\right)\|\bW^{t+1}-\bW^t\|^2_F+\left(\frac{\xi}{4}-\frac{C_0}{\rho}\right)\|\bB^{t+1}-\bB^t\|_F^2\\&+\left(\frac{\rho}{2}-\frac{C_0}{\rho}\right)\left\|\bU^{t+1}-\bU^t\right\|^2_F,
\end{aligned}
\end{equation}}

\hspace{-0.35cm}where $C_\gamma$ is an upper bound\footnote{$C_{\gamma}$ exists due to the boundedness of $\{\bW^t\}$, $\{\bU^t\}$; see discussions in the proof of Lemma \ref{lambdabound}.}  on $\|\boldsymbol{\gamma}^t\|_{\infty}$, $C_0$ is given in Lemma \ref{lambdabound}, $\tau$ and  $\rho$ are algorithm parameters; see \text{\normalfont{(14)}}.  
\end{lemma}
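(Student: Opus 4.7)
The plan is to decompose the one-step change $\mathcal{L}_\rho^{t+1}-\mathcal{L}_\rho^t$ into a telescoping sum across the six block updates \eqref{suby}--\eqref{sublambda}, lower-bound each primal increment via strong concavity/convexity of the corresponding subproblem, and then absorb the non-positive dual increment using Lemma~\ref{lambdabound}. Concretely, by inserting intermediate iterates that replace one variable at a time in the order $\y\to\boldsymbol{\gamma}\to\bW\to\bB\to\bU\to\blam$, I split $\mathcal{L}_\rho^{t+1}-\mathcal{L}_\rho^t$ into five primal pieces plus one dual piece.

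For each primal block I will invoke the standard fact that, if $x^\star\in\arg\max_{x\in\mathcal{C}}f(x)$ with $f$ strongly concave of modulus $\mu$, then $f(x^\star)-f(x)\geq(\mu/2)\|x-x^\star\|^2$ for every $x\in\mathcal{C}$. Inspection of the objectives in \eqref{suby}--\eqref{subU} delivers the required moduli: $\sigma^2$ for the $\y$-block, coming from the $\sigma^2$ term inside the quadratic coefficient $-(1+\gamma_k)(\sum_j|\u_k^\dagger\bG\bw_j|^2+\sigma^2)$ on each $|y_k|^2$; $(1+C_\gamma)^{-2}$ for the $\boldsymbol{\gamma}$-block, from the second derivative $-(1+\gamma_k)^{-2}$ together with the uniform bound $\|\boldsymbol{\gamma}^t\|_\infty\leq C_\gamma$; $\tau$ and $\xi$ for the $\bW$- and $\bB$-blocks, supplied entirely by the proximal terms in \eqref{subW} and \eqref{subB} (which is precisely their purpose); and $\rho$ for the $\bU$-block, from the quadratic penalty $(\rho/2)\|(\mathbf{I}-\mathrm{i}Z_0\bB^{t+1})\bU-(\mathbf{I}+\mathrm{i}Z_0\bB^{t+1})\bH\|_F^2$ combined with the identity $(\mathbf{I}+\mathrm{i}Z_0\bB)(\mathbf{I}-\mathrm{i}Z_0\bB)=\mathbf{I}+Z_0^2\bB^2\succeq\mathbf{I}$ valid for any real symmetric $\bB$. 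Summing the five resulting inequalities produces all the positive coefficients on the right-hand side of \eqref{ALMdiff}.

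The dual piece is immediate: substituting the update \eqref{sublambda} into the definition \eqref{lagrangian} at iterations $t$ and $t+1$ shows that the $\blam$-increment equals $-\|\blam^{t+1}-\blam^t\|_F^2/\rho$. Applying Lemma~\ref{lambdabound} bounds this below by $-(C_0/\rho)$ times the sum of the five primal squared increments, which supplies precisely the $-C_0/\rho$ corrections in \eqref{ALMdiff}. Adding the five primal lower bounds to this dual lower bound gives the stated inequality.

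The main obstacle is justifying the strong-convexity modulus $\rho$ for the $\bU$-subproblem; at first glance the Hessian involves the non-Hermitian factor $\mathbf{I}-\mathrm{i}Z_0\bB^{t+1}$, and the clean bound relies on the identity above for real symmetric $\bB$, an easy-to-overlook consequence of $\bB^T=\bB$ and the fact that $\bB^2\succeq 0$. A secondary technical point is ensuring $C_\gamma<\infty$: boundedness of $\{\bW^t\}$ from the power constraint \eqref{con:power2} and of $\{\bU^t\}$ by the hypothesis of Theorem~\ref{converge}, combined with the closed-form update \eqref{updategamma}, makes $\{\boldsymbol{\gamma}^t\}$ uniformly bounded, so the constant $C_\gamma$ is well-defined.
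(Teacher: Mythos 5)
Your proposal is correct and follows essentially the same route as the paper: a telescoping block-by-block decomposition, the strong-concavity inequality $f(\x^*)-f(\x)\geq\frac{\mu}{2}\|\x-\x^*\|^2$ applied with the same moduli $\sigma^2$, $(1+C_\gamma)^{-2}$, $\tau$, $\xi$, $\rho$, the identity that the $\blam$-increment equals $-\frac{1}{\rho}\|\blam^{t+1}-\blam^t\|_F^2$, and Lemma~\ref{lambdabound} to absorb it. Your explicit justification of the $\bU$-block modulus via $(\mathbf{I}+\mathrm{i}Z_0\bB)(\mathbf{I}-\mathrm{i}Z_0\bB)=\mathbf{I}+Z_0^2\bB^2\succeq\mathbf{I}$ is a detail the paper leaves implicit, but it does not change the argument.
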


\begin{proof}
The proof is based on the fact that given a strongly concave function $f(\x)$ with modulus $\mu$ and a convex set $\mathcal{C}$, the following inequality holds for $\x^*\in\arg\max_{\x\in\mathcal{C}}~f(\x)$: 
$$f(\x^*)-f(\x)\geq \frac{\mu}{2}\|\x-\x^*\|^2_2,~~\forall~\x\in\mathcal{C}.$$ 
When the variable is a matrix, the $\ell_2$ norm on the r.h.s. should be replaced by the Frobenius norm. %Using this result and noting that the $\y$-subproblem in \eqref{suby} is strongly concave with modulus 
%$\mu_{y}\geq \sigma^2,$ we have \vspace{-0.2cm}
%
%{\small\begin{equation}\label{decrease:1}
%\begin{aligned}
%&\mathcal{L}_{\rho}(\y^{t+1},\boldsymbol{\gamma}^{t},\bW^{t},\bB^t,\bU^t,\blam^{t})-\mathcal{L}_{\rho}(\y^t,\boldsymbol{\gamma}^t,\bW^t,\bB^t,\bU^t,\blam^t)\\
%&\geq \frac{\sigma^2}{2}\|\y^{t+1}-\y^t\|^2_2.
%\end{aligned}
%\end{equation}}
%
Applying this result to the $\y$-, $\boldsymbol{\gamma}$-, $\bW$-, $\bB$-, and $ \bU$-subproblems in (14a)--(14e), whose objective functions are strongly concave with modulus ${\sigma^2}$, $(1+C_{\gamma})^{-2}$,  $\tau$, $\frac{\xi}{2}$, and $\rho$, respectively,  we get\vspace{-0.3cm}

{\small\begin{equation}\label{decrease:4}
\hspace{-0.2cm}
\begin{aligned}
&\mathcal{L}_{\rho}(\y^{t+1},\boldsymbol{\gamma}^{t+1},\bW^{t+1},\bB^{t+1},\bU^{t+1},\blam^{t})\\
&-\mathcal{L}_{\rho}(\y^{t},\boldsymbol{\gamma}^{t},\bW^{t},\bB^{t},\bU^t,\blam^t)\\
\geq&\frac{\sigma^2}{2}\|\y^{t+1}-\y^t\|_2^2+\frac{1}{2(1+C_{\gamma})^2}\|\boldsymbol{\gamma}^{t+1}-\boldsymbol{\gamma}^t\|^2_2\\
&\hspace{-0.2cm}+\frac{\tau}{2}\|\bW^{t+1}\hspace{-0.05cm}-\hspace{-0.05cm}\bW^t\|^2_F+ \frac{\xi}{4}\|\bB^{t+1}\hspace{-0.05cm}-\hspace{-0.05cm}\bB^t\|^2_F+ \frac{\rho}{2}\|\bU^{t+1}-\hspace{-0.05cm}\bU^t\|^2_F.
\end{aligned}
\end{equation}}
%\hspace{-0.1cm}The $\bU$-subproblem in \eqref{ppADMM} is separable. For each $\u_k$,   the Hessian of its objective function %satisfies 
%{\small$$
%\begin{aligned}\nabla_{\u_k}^2\mathcal{L}_{\rho}(\y^t,\boldsymbol{\gamma}^t,\bW^t,\bB^t, \bU,\boldsymbol{\lambda}^t)&\preceq-\rho(j\bB^t-\mathbf{I})^{\dagger}(j\bB^t-\mathbf{I})\preceq-\rho\mathbf{I},
%\end{aligned}
%$$}
%i.e., the $\u_k$-subproblem is strongly concave with a modulus larger than $\rho$. Hence,\vspace{-0.2cm}
%{\small \begin{equation}\label{decrease:5}
 %\begin{aligned}
%&\mathcal{L}_{\rho}(\y^{t+1},\boldsymbol{\gamma}^{t+1},\bW^{t+1},\bB^{t+1},\bU^{t+1},\blam^{t})\\
%&-\mathcal{L}_{\rho}(\y^{t+1}\hspace{-0.05cm},\boldsymbol{\gamma}^{t+1}\hspace{-0.05cm},\bW^{t+1}\hspace{-0.05cm},\bB^{t+1}\hspace{-0.05cm},\bU^t,\blam^t)\geq\hspace{-0.05cm} \frac{\rho}{2}\|\bU^{t+1}-\bU^t\|^2_F.
%\end{aligned}
 %\end{equation}}
\hspace{-0.08cm}In addition,  using the formula of the augmented Lagrangian function in (13), we have \vspace{-0.08cm}
{\small\begin{equation}\label{decrease:6}
\begin{aligned}
&\mathcal{L}_{\rho}(\y^{t+1},\boldsymbol{\gamma}^{t+1},\bW^{t+1},\bB^{t+1},\bU^{t+1},\blam^{t+1})\\
&-\mathcal{L}_{\rho}(\y^{t+1},\boldsymbol{\gamma}^{t+1},\bW^{t+1},\bB^{t+1},\bU^{t+1},\blam^t)\\
=\,&-\left<\blam^{t+1}\hspace{-0.05cm}-\blam^t,(\mathbf{I}-\mathrm{i}Z_0\bB^{t+1})\bU^{t+1}-(\mathbf{I}+\mathrm{i}Z_0\bB^{t+1})\bH\right>\\
=\,&-\frac{1}{\rho}\|\blam^{t+1}-\blam^t\|_F^2\\
%\geq\,&-\frac{C_0}{\rho}\left(\|\y^{t+1}-\y^t\|_2^2+\|\boldsymbol{\gamma}^{t+1}-\boldsymbol{\gamma}^t\|_2^2+\|\bB^{t+1}-\bB^t\|_F^2\right.\\
%&~~~~~~~~~~\left.+\|\bW^{t+1}-\bW^t\|^2_F+\left\|\bU^{t+1}-\bU^t\right\|^2_F\right),
\end{aligned}\vspace{-0.2cm} 
\end{equation}}
\hspace{-0.18cm}where the second equality uses the update rule of $\blam$ in (14f). Combining  \eqref{decrease:4} and \eqref{decrease:6} and applying Lemma \ref{lambdabound} to \eqref{decrease:6} gives the desired result in Lemma \ref{ALbound}.
\end{proof}
\subsection{Proof of Theorem 1}\label{app:AB}
Now we are ready to prove Theorem 1.
%\begin{proof}
 Given $T>0$, summing over \eqref{ALMdiff} from $t=0$ to $t=T$ gives
{\small\begin{equation}\label{ALMdiff2}
\begin{aligned}
&\mathcal{L}_{\rho}^{T+1}-\mathcal{L}_{\rho}^0\\
\geq \hspace{-0.05cm}&\left(\frac{\sigma^2}{2}-\frac{C_0}{\rho}\right)\sum_{t=0}^{T}\|\y^{t+1}-\y^t\|_2^2+\hspace{-0.05cm}\left(\frac{\xi}{4}-\frac{C_0}{\rho}\right)\sum_{t=1}^T\|\bB^{t+1}\hspace{-0.05cm}-\hspace{-0.05cm}\bB^t\|_F^2\\
&\hspace{-0.05cm}+\hspace{-0.05cm}\left(\hspace{-0.05cm}\frac{1}{2(1+C_{\gamma})^2}-\frac{C_0}{\rho}\hspace{-0.05cm}\right)\sum_{t=0}^{T}\|\boldsymbol{\gamma}^{t+1}\hspace{-0.05cm}-\hspace{-0.05cm}\boldsymbol{\gamma}^t\|_2^2\\
&\hspace{-0.05cm}+\hspace{-0.05cm}\left(\frac{\tau}{2}\hspace{-0.05cm}-\hspace{-0.05cm}\frac{C_0}{\rho}\right)\hspace{-0.05cm}\sum_{t=0}^{T}\|\bW^{t+1}\hspace{-0.08cm}-\hspace{-0.08cm}\bW^t\|^2_F\hspace{-0.05cm}+\hspace{-0.05cm}\left(\frac{\rho}{2}\hspace{-0.05cm}-\hspace{-0.05cm}\frac{C_0}{\rho}\right)\hspace{-0.05cm}\sum_{t=0}^{T}\left\|\bU^{t+1}\hspace{-0.08cm}-\hspace{-0.08cm}\bU^t\right\|^2{\hspace{-0.2cm}_F}.
\end{aligned}
\end{equation}}
\hspace{-0.2cm}Let
{\small$$
\rho_0=\max\left\{\frac{2C_0}{\sigma^2}, \frac{4C_0}{\xi}, 2(1+C_\gamma)^2C_0,\frac{2C_0}{\tau},\sqrt{2C_0}\right\}.
$$}
\hspace{-0.2cm}It is straightforward to check  that if  $\rho>\rho_0$, the factor in front of each term in \eqref{ALMdiff2} is positive. %, i.e., $\min\left\{\frac{\sigma^2}{2}-\frac{C_0}{\rho}, \frac{1}{2(1+C_{\gamma})^2}-\frac{C_0}{\rho}, \frac{\xi}{2}-\frac{C_0}{\rho},\frac{\tau}{2}-\frac{C_0}{\rho}, \frac{\rho}{2}-\frac{C_0}{\rho}\right\}>0$. 
In addition, $\{\mathcal{L}_{\rho}^t\}_{t\geq 0}$ is bounded from above under boundedness assumption on $\{\bU^t\}_{t\geq 0}$ and $\{\blam^t\}_{t\geq 0}$. %This can be seen by rewriting the augmented Lagrangian function in \eqref{lagrangian} as \vspace{-0.25cm}
%{\small$$
%\begin{aligned}
%\mathcal{L}_{\rho}^t\hspace{-0.05cm}=&F(\y^t,\boldsymbol{\gamma}^t\hspace{-0.05cm},\bU^t\hspace{-0.05cm},\bW^t)\hspace{-0.05cm}-\hspace{-0.05cm}\frac{\rho}{2}\hspace{-0.05cm}\left\|\hspace{-0.02cm}(\mathbf{I}\hspace{-0.05cm}-\hspace{-0.05cm}\mathrm{i}Z_0\bB^t)\bU^t\hspace{-0.08cm}-\hspace{-0.05cm}(\mathbf{I}\hspace{-0.03cm}+\hspace{-0.05cm}\mathrm{i}Z_0\bB^t)\bH\hspace{-0.05cm}+\hspace{-0.05cm}\frac{\blam^t}{\rho}\hspace{-0.03cm}\right\|_F^2\\
%&+\frac{1}{2\rho}\|\blam^t\|_F^2
%\end{aligned}
%$$}
 %\hspace{-0.18cm}and noting that  the sequence $\{\y^{t},\boldsymbol{\gamma}^{t},\bW^{t},\bU^t\}$ and $\{\blam^t\}$ are bounded due to the  Assumption in Theorem \ref{converge}. %\ref{Assumption} and \eqref{lambda}.
Therefore,  letting $T\to\infty$ in \eqref{ALMdiff2} yields\vspace{-0.2cm}

{\small\begin{equation*}\label{to0}
\begin{aligned}
&\|\y^{t+1}-\y^t\|_2\to0,~\|\boldsymbol{\gamma}^{t+1}-\boldsymbol{\gamma}^t\|_2\to 0,~\|\bW^{t+1}-\bW^t\|_F\to 0,\\
&\|\bB^{t+1}-\bB^t\|_F\to0,~\|\bU^{t+1}-\bU^t\|_F\to 0.
\end{aligned}
\end{equation*}}
\hspace{-0.13cm}It further follows from Lemma \ref{lambdabound} that 
$\|\blam^{t+1}-\blam^t\|_F\to 0.$
According to the update rule of the proposed pp-ADMM algorithm in (14), we have

{\small\begin{subequations}\label{iterate}
\begin{align}
&\nabla_{\y}\tilde{R}(\y^{t+1},\boldsymbol{\gamma}^t,\bW^t,\bU^t)=\mathbf{0},\\
&\nabla_{\boldsymbol{\gamma}}\tilde{R}_(\y^{t+1},\boldsymbol{\gamma}^{t+1},\bW^t,\bU^t)=\mathbf{0},\\
&\hspace{-0.05cm}\big<\nabla_\bW \tilde{R}(\y^{t+1},\boldsymbol{\gamma}^{t+1},\bW^{t+1},\bU^t)\notag\\
&~-\tau(\bW^{t+1}\hspace{-0.05cm}-\hspace{-0.05cm}\bW^t),\bW\hspace{-0.05cm}-\hspace{-0.05cm}\bW^{t+1}\big>\hspace{-0.05cm}\leq 0,~\forall\,\bW\hspace{-0.1cm}:\|\bW\|_F^2\hspace{-0.05cm}\leq\hspace{-0.05cm} P_T,\\
&\big<\left(\blam^{t+1}-\rho(\mathbf{I}-\mathrm{i}Z_0\bB^{t+1})(\bU^{t+1}-\bU^t)\right)(\mathrm{i}Z_0\bU^t+\mathrm{i}Z_0\bH)^{\dagger}\notag\\&~-\frac{\xi}{2}(\bB^{t+1}\hspace{-0.1cm}-\hspace{-0.05cm}\bB^t)-\frac{\xi}{2}\text{diag}(\bB^{t+1}-\bB^t),\bB\hspace{-0.05cm}-\hspace{-0.05cm}\bB^{t+1}\big>\le 0,\notag\\
&\hspace{4cm}~\forall\,\bB: \bB=\bB^T,~\bB\in\mathcal{B},\\
&\nabla_{\bU} \tilde{R}(\y^{t+1}\hspace{-0.05cm},\boldsymbol{\gamma}^{t+1}\hspace{-0.05cm},\bW^{t+1}\hspace{-0.05cm},\bU^{t+1}\hspace{-0.03cm})\hspace{-0.05cm}-\hspace{-0.05cm}(\mathbf{I}\hspace{-0.03cm}+\hspace{-0.03cm}\mathrm{i}Z_0\bB^{t+1})\blam^{t+1}\hspace{-0.08cm}=\mathbf{0},\\
&\blam^{t+1}=\blam^t+\rho((\mathbf{I}-\mathrm{i}Z_0\bB^{t+1})\bU^{t+1}-(\mathbf{I}+\mathrm{i}Z_0\bB^{t+1})\bH),
\end{align}
\end{subequations}}
\hspace{-0.25cm}Let {\small$\left(\y^*,\boldsymbol{\gamma}^*,\bW^*,\bB^{*},\bU^*,\blam^*\right)$} be a limit point of {\small$\left(\y^t,\boldsymbol{\gamma}^t,\bW^t,\bB^{t},\bU^t,\blam^t\right)$} with\vspace{-0.3cm}

{\small$$\lim_{j\to\infty}\left(\y^{t_j},\boldsymbol{\gamma}^{t_j},\bW^{t_j},\bB^{t_j},\bU^{t_j},\blam^{t_j}\right)=\left(\y^*,\boldsymbol{\gamma}^*,\bW^*,\bB^{*},\bU^*,\blam^*\right).$$}
Then % it follows immediately from \eqref{to0} that 
%{\small$$
%\begin{aligned}
%\lim_{j\to\infty}&\left(\y^{t_j+1},\boldsymbol{\gamma}^{t_j+1},\bW^{t_j+1},\bB^{t_j+1},\bU^{t_j+1},\blam^{t_j+1}\right)\\
%&=\left(\y^*,\boldsymbol{\gamma}^*,\bW^*,\bB^{*},\bU^*,\blam^*\right).
%\end{aligned}$$}
substituting $t=t_j$ in \eqref{iterate} and letting $j\to\infty$ gives 
{\small\begin{subequations}\label{iterate2}
\begin{align}
&\nabla_{\y}\tilde{R}(\y^{*},\boldsymbol{\gamma}^*,\bW^*,\bU^*)=\mathbf{0},\\
&\nabla_{\boldsymbol{\gamma}}\tilde{R}(\y^{*},\boldsymbol{\gamma}^{*},\bW^*,\bU^*)=\mathbf{0},\\
&\left<\nabla_\bW \tilde{R}(\y^{*},\boldsymbol{\gamma}^{*},\bW^{*},\bU^*),\bW-\bW^*\right>\leq{0},\notag\\
&\hspace{4cm}\forall~\bW:\|\bW\|_F^2\leq P_T,\\
&\left<\blam^{*}(\mathrm{i}Z_0\bU^*+\mathrm{i}Z_0\bH)^{\dagger},\bB-\bB^*\right>\le\mathbf{0},~\forall~\bB: \bB=\bB^T,~\bB\in\mathcal{B},\\
&\nabla_{\bU} \tilde{R}(\y^{*},\boldsymbol{\gamma}^{*},\bW^{*},\bU^{*})-(\mathbf{I}+\mathrm{i}Z_0\bB^*)\blam^{*}=\mathbf{0},\\
&(\mathbf{I}-\mathrm{i}Z_0\bB^{*})\bU^{*}-(\mathbf{I}+\mathrm{i}Z_0\bB^{*})\bH=\mathbf{0},
\end{align}
\end{subequations}}
\hspace{-0.1cm}i.e., $\left(\y^*,\boldsymbol{\gamma}^*,\bW^*,\bB^{*},\bU^*\right)$ is a stationary point of problem (12), with $\blam^*$ being the Lagrangian multiplier associated with the bilinear constraint (10b).

\section{Proof of Theorem 2}

The proof of Theorem 2 follows the same procedure as that of Theorem 1. The main difference is that there are two bilinear constraints  in problem (26), which couple the three variables $(\bW,\bU,\bB)$ and complicate the relationship between the primal and dual variables. As a result, the proof is slightly more involved. 

Following the same structure as the proof of Theorem 1, we begin by presenting two auxiliary lemmas, which, respectively, bound the difference between successive iterations of dual variables with that of primal variables,  and demonstrate the sufficient decrease property of a properly defined potential function. Then, we give   the proof of Theorem 2.

\setcounter{lemma}{2}
\begin{lemma}
Under the assumptions in Theorem 2, there exists constants $C_0>0$ and $C_1>0$ such that \vspace{-0.2cm}

{\small$$\begin{aligned}
\,\|\blam^{t+1}-\blam^t\|_F^2\leq C_0&\left(\|\bW^{t+1}-\bW^t\|_F^2\right.\\
&~~\left.+\|\bB^{t+1}-\bB^t\|_F^2+\|\boldsymbol{\mu}^{t+1}-\boldsymbol{\mu}^t\|_F^2\right)
\end{aligned}$$}
and 
{\small$$
\begin{aligned}\|\bmu^{t+1}-\bmu^t\|_F^2\leq& C_1\left(\|\bW^{t+1}-\bW^t\|_F^2+\|\bU^{t+1}-\bU^t\|_F^2\right)\\
&\hspace{-0.05cm}+C_1\rho_{\mu}^2\left(\|\bU^{t+1}\hspace{-0.05cm}-\hspace{-0.05cm}\bU^t\|_F^2+\|\bU^t\hspace{-0.05cm}-\hspace{-0.05cm}\bU^{t-1}\|_F^2\right).
\end{aligned}$$}
\end{lemma}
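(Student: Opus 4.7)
My plan is to mirror the strategy of Lemma~\ref{lambdabound} in Appendix~\ref{app:AA}, now applied separately to each of the two bilinear constraints in \eqref{minpower3}. The first bound is extracted from the KKT condition of the $\bU$-subproblem \eqref{subU2}, the second from the KKT condition of the $\bW$-subproblem \eqref{subW2}, and assumption~(b) enters only in the latter. First I would set $\nabla_{\bU}\mathcal{L}_{\boldsymbol{\rho}}(\bY^{t+1},\bW^{t+1},\bB^{t+1},\bU^{t+1},\blam^t,\bmu^t)=\mathbf{0}$ and group terms: the $\blam$-part collapses exactly as in Appendix~\ref{app:AA} into $(\bI+\mathrm{i}Z_0\bB^{t+1})\blam^{t+1}$ via the dual update \eqref{sublambda2}, and the $\bmu$-part collapses via \eqref{submu2} into $-\bG\bW^{t+1}(\bmu^{t+1})^\dagger$. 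This yields the closed-form identity
$$\blam^{t+1}=(\bI+\mathrm{i}Z_0\bB^{t+1})^{-1}\bG\bW^{t+1}(\bmu^{t+1})^\dagger.$$
Writing $\blam^{t+1}-\blam^t$ as a telescoping sum that varies $\bmu$, $\bW$, and $\bB$ one at a time, invoking \eqref{property_B} and the resolvent identity $(\bI+\mathrm{i}Z_0\bB^{t+1})^{-1}-(\bI+\mathrm{i}Z_0\bB^t)^{-1}=\mathrm{i}Z_0(\bI+\mathrm{i}Z_0\bB^{t+1})^{-1}(\bB^t-\bB^{t+1})(\bI+\mathrm{i}Z_0\bB^t)^{-1}$, and using the uniform bounds on $\{\bW^t\}$ and $\{\bmu^t\}$ from assumption~(a), the first inequality would follow with a constant $C_0$ depending only on $\|\bG\|_2$, $Z_0$, $\sup_t\|\bW^t\|_F$, and $\sup_t\|\bmu^t\|_F$.

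For the second inequality I would exploit the strong convexity of the unconstrained $\bW$-subproblem: $\nabla_{\bW}\mathcal{L}_{\boldsymbol{\rho}}=\mathbf{0}$ at $\bW^{t+1}$ gives, after grouping the $\bmu$-related terms in the form of the dual update \eqref{submu2}, the structural identity
$$\bG^\dagger\bU^t\tilde{\bmu}^{t+1}=2\bW^{t+1},\quad\tilde{\bmu}^{t+1}:=\bmu^t+\rho_\mu\bigl(\bY^{t+1}-(\bU^t)^\dagger\bG\bW^{t+1}\bigr),$$
where $\tilde{\bmu}^{t+1}$ differs from the true dual $\bmu^{t+1}$ only through the substitution $\bU^t\mapsto\bU^{t+1}$, so that $\bmu^{t+1}-\tilde{\bmu}^{t+1}=\rho_\mu((\bU^t)^\dagger-(\bU^{t+1})^\dagger)\bG\bW^{t+1}$ and analogously $\tilde{\bmu}^{t}-\bmu^{t}=\rho_\mu((\bU^t)^\dagger-(\bU^{t-1})^\dagger)\bG\bW^{t}$. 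I would then decompose
$$\bmu^{t+1}-\bmu^t=\bigl(\bmu^{t+1}-\tilde{\bmu}^{t+1}\bigr)+\bigl(\tilde{\bmu}^{t+1}-\tilde{\bmu}^t\bigr)+\bigl(\tilde{\bmu}^t-\bmu^t\bigr),$$
bound the outer two differences directly by $\rho_\mu\|\bG\|_2\|\bW^{\cdot}\|_F\|\bU^{\cdot+1}-\bU^{\cdot}\|_F$, and for the middle difference subtract the two structural identities to obtain $\bG^\dagger\bU^t(\tilde{\bmu}^{t+1}-\tilde{\bmu}^t)=2(\bW^{t+1}-\bW^t)+\bG^\dagger(\bU^{t-1}-\bU^t)\tilde{\bmu}^t$. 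At this step assumption~(b) becomes essential: it supplies a left inverse of $\bG^\dagger\bU^t$ of spectral norm at most $1/\sigma_0$, which simultaneously delivers the a priori bound $\|\tilde{\bmu}^t\|_F\leq (2/\sigma_0)\|\bW^t\|_F$ and lets us invert the displayed equation to bound $\|\tilde{\bmu}^{t+1}-\tilde{\bmu}^t\|_F^2$ by $\|\bW^{t+1}-\bW^t\|_F^2+\|\bU^t-\bU^{t-1}\|_F^2$ up to constants. Squaring, applying $(a+b+c)^2\leq 3(a^2+b^2+c^2)$, and collecting would then deliver the second inequality with a single constant $C_1$.

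The main obstacle, and the reason this lemma is more delicate than Lemma~\ref{lambdabound}, is that the $\bW$-subproblem does not directly expose $\bmu^{t+1}$ but only the shifted dual $\tilde{\bmu}^{t+1}$ evaluated at the \emph{previous} primal iterate $\bU^t$. Converting this shifted identity into a bound on $\|\bmu^{t+1}-\bmu^t\|_F^2$ therefore unavoidably requires a telescoping step across two consecutive iterations, and this is precisely what introduces the $\|\bU^t-\bU^{t-1}\|_F^2$ history term in the stated inequality. It is also at this inversion step that assumption~(b) is indispensable: without a uniform lower bound on the smallest singular value of $\bG^\dagger\bU^t$ one cannot pass from $\bG^\dagger\bU^t\tilde{\bmu}^{t+1}=2\bW^{t+1}$ to any useful bound on $\tilde{\bmu}^{t+1}$. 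The two-step history appearing in the $\bmu$-bound is exactly what will later force the potential function in the proof of Theorem~\ref{converge2} to be augmented by an extra $\|\bU^t-\bU^{t-1}\|_F^2$ summand so that the sufficient-decrease argument can still close.
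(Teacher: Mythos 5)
Your proposal is correct and follows essentially the same route as the paper: the identity $\blam^{t+1}=(\bI+\mathrm{i}Z_0\bB^{t+1})^{-1}\bG\bW^{t+1}(\bmu^{t+1})^{\dagger}$ from the $\bU$-subproblem optimality plus the dual updates, and the $\bW$-subproblem optimality combined with assumption (b) to control $\bmu^{t+1}-\bmu^t$ across two consecutive iterates. Your shifted dual $\tilde{\bmu}^{t+1}$ is exactly the first term $2((\bU^t)^{\dagger}\bG\bG^{\dagger}\bU^t)^{-1}(\bU^t)^{\dagger}\bG\bW^{t+1}$ in the paper's closed-form expression for $\bmu^{t+1}$, so your three-piece decomposition is a repackaging of the paper's four-term difference, with assumption (b) entering at the same inversion step and producing the same $\|\bU^t-\bU^{t-1}\|_F^2$ history term.
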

\begin{proof}
First, by combining the optimality conditions of (29b) and (29d) with (29e) and (29f), we get \vspace{-0.2cm}

{\small$$
\begin{aligned}
\blam^{t+1}&=\left(\mathbf{I}+\mathrm{i}Z_0\bB^{t+1}\right)^{-1}\bG\bW^{t+1}(\bmu^{t+1})^{\dagger}
\end{aligned}$$}
and
{\small$$\begin{aligned}
\bmu^{t+1}&=2((\bU^t)^{\dagger}\bG\bG^{\dagger}\bU^t)^{-1}(\bU^t)^{\dagger}\bG\bW^{t+1}\\
&\hspace{3cm}-\rho_{\mu}(\bU^{t+1}-\bU^t)^{\dagger}\bG\bW^{t+1}.
\end{aligned}$$}
\hspace{-0.1cm}Then, applying a similar argument as in Lemma \ref{lambdabound} gives the first assertion of Lemma 3. For $\bmu$,  using the expression above, we have\vspace{-0.2cm}

{\small\begin{equation}\label{mu}
\begin{aligned}
\bmu^{t+1}-\bmu^t&=2((\bU^t)^{\dagger}\bG\bG^{\dagger}\bU^t)^{-1}(\bU^t)^{\dagger}\bG\bW^{t+1}\\
&~~~\,-2((\bU^{t-1})^{\dagger}\bG\bG^{\dagger}\bU^{t-1})^{-1}(\bU^{t-1})^{\dagger}\bG\bW^{t}\\
&~~~-\rho_\mu(\bU^{t+1}-\bU^t)^{\dagger}\bG\bW^{t+1}\\
&~~~+\rho_\mu(\bU^{t}-\bU^{t-1})^{\dagger}\bG\bW^t.
\end{aligned}
\end{equation}}Note that for two full-rank matrices $\bX_1$ and $\bX_2$, the following inequality holds: \vspace{-0.2cm}%Consider two matrices of full column rank, denoted by  $\bX_1$ and $\bX_2$. The following inequality holds: \vspace{-0.2cm}

{\small$$
\begin{aligned}
&\|(\bX_1\bX_1^\dagger)^{-1}\bX_1-(\bX_2\bX_2^\dagger)^{-1}\bX_2\|_F\\
\leq &\|(\bX_1\bX_1^\dagger)^{-1}-(\bX_2\bX_2^\dagger)^{-1}\|_F\|\bX_1\|_2\\
&+\|(\bX_2\bX_2^\dagger)^{-1}\|_2\|\bX_1-\bX_2\|_F\\
\leq&\|(\bX_1\bX_1^\dagger)^{-1}\|_2\|(\bX_2\bX_2^\dagger)^{-1}\|_2\|\bX_1\|_2\|\bX_1\bX_1^\dagger-\bX_2\bX_2^\dagger\|_F\\
&+\|(\bX_2\bX_2^\dagger)^{-1}\|_2\|\bX_1-\bX_2\|_F\\
\leq&\left(\|(\bX_1\bX_1^\dagger)^{-1}\|_2\|(\bX_2\bX_2^\dagger)^{-1}\|_2\|\bX_1\|_2(\|\bX_1\|_2+\|\bX_2\|_2)\right.\\
&\left.~~+\|(\bX_2\bX_2^\dagger)^{-1}\|_2\right)\|\bX_1-\bX_2\|_F.
\end{aligned}$$}Applying the above inequality with $\bX_1=(\bU^{t-1})^{\dagger}\bG$ and $\bX_2=(\bU^{t})^{\dagger}\bG$, and utilizing 
\eqref{mu}, we can conclude that under the assumptions in Theorem 2, there exists $C_1>0$ such the second assertion in Lemma 3 holds.   
\end{proof}
 \begin{lemma}\label{lemma4}
 Define\vspace{-0.2cm}
 
  {\small$$\tilde{\mathcal{L}}_{{\boldsymbol{\rho}}}^{t}:=\mathcal{L}_{\boldsymbol{\rho}}^t+(\frac{C_0C_1\rho_\mu^2}{\rho_\lambda}+C_1\rho_\mu)\|\bU^t-{\bU}^{t-1}\|_F^2,$$}
  where $\mathcal{L}_{\boldsymbol{\rho}}^t:=\mathcal{L}_{\boldsymbol{\rho}}({\color{black}\bV^t},\bW^t,\bB^t,\bU^t,\blam^t,\bmu^t)$. It holds that \vspace{-0.2cm}
  
 {\small \begin{equation}\label{tildeL}
  \begin{aligned}
 & \tilde{\mathcal{L}}_{\boldsymbol{\rho}}^{t+1}\hspace{-0.05cm}-\hspace{-0.05cm}\tilde{\mathcal{L}}_{\boldsymbol{\rho}}^t\\
  \leq &-\frac{\rho_\mu}{2}\|{\color{black}\bV^{t+1}}-{\color{black}\bV^t}\|_F^2-(\frac{\xi}{4}-\frac{C_0}{\rho_\lambda})\|\bB^{t+1}-\bB^t\|_F^2\\
&-(1-\frac{C_0+C_0C_1}{\rho_\lambda}-\frac{C_1}{\rho_\mu})\|\bW^{t+1}-\bW^t\|_F^2\\
&-(\frac{\rho_\lambda}{2}-\frac{C_0C_1}{\rho_\lambda}\hspace{-0.05cm}-\hspace{-0.05cm}\frac{C_1}{\rho_\mu}\hspace{-0.05cm}-\hspace{-0.05cm}\frac{2C_0C_1\rho_\mu^2}{\rho_\lambda}\hspace{-0.05cm}-\hspace{-0.05cm}2C_1\rho_\mu)\|\bU^{t+1}\hspace{-0.05cm}-\hspace{-0.05cm}\bU^t\|_F^2.\vspace{-0.1cm}
  \end{aligned}
  \end{equation}}
 \end{lemma}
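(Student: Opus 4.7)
The plan is to follow the same three-stage template as the proof of Lemma~\ref{ALbound}: (i) use strong convexity of each block subproblem in \eqref{ppADMM2} to extract a descent produced by the primal updates, (ii) use the dual update formulas \eqref{sublambda2}--\eqref{submu2} together with Lemma~\ref{lemma3} to bound the ascent produced by the multiplier updates, and (iii) absorb the cross-step term $\|\bU^t-\bU^{t-1}\|_F^2$ that appears in Lemma~\ref{lemma3}'s bound on $\|\bmu^{t+1}-\bmu^t\|_F^2$ by augmenting $\mathcal{L}_{\boldsymbol{\rho}}^t$ with the right multiple of this term, exactly as in the definition of $\tilde{\mathcal{L}}_{\boldsymbol{\rho}}^t$.

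First, I would identify the strong convexity moduli of the four primal subproblems: the $\bY$-subproblem \eqref{subY2} is strongly convex with modulus $\rho_\mu$ (from the quadratic penalty), the $\bW$-subproblem \eqref{subW2} with modulus $2$ (from $\|\bW\|_F^2$), the $\bB$-subproblem \eqref{subB2} with modulus $\xi$ (from the added proximal term), and the $\bU$-subproblem \eqref{subU2} with modulus $\rho_\lambda$ (since $(\mathbf{I}-\mathrm{i}Z_0\bB^{t+1})^\dagger(\mathbf{I}-\mathrm{i}Z_0\bB^{t+1})=\mathbf{I}+Z_0^2(\bB^{t+1})^2\succeq\mathbf{I}$). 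Applying $f(x^\star)\leq f(x)-\frac{\mu}{2}\|x-x^\star\|^2$ at each block yields the chain of descent inequalities taking $\mathcal{L}_{\boldsymbol{\rho}}^t$ down to $\mathcal{L}_{\boldsymbol{\rho}}(\bY^{t+1},\bW^{t+1},\bB^{t+1},\bU^{t+1},\blam^t,\bmu^t)$, with strictly negative coefficients $-\rho_\mu/2,\ -1,\ -\xi/2,\ -\rho_\lambda/2$ on the four squared increments.

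Next, using \eqref{sublambda2}--\eqref{submu2} gives $\mathcal{L}_{\boldsymbol{\rho}}^{t+1}-\mathcal{L}_{\boldsymbol{\rho}}(\bY^{t+1},\bW^{t+1},\bB^{t+1},\bU^{t+1},\blam^t,\bmu^t)=\tfrac{1}{\rho_\lambda}\|\blam^{t+1}-\blam^t\|_F^2+\tfrac{1}{\rho_\mu}\|\bmu^{t+1}-\bmu^t\|_F^2$. Substituting the first inequality of Lemma~\ref{lemma3} into the $\blam$-term, and then using the second inequality of Lemma~\ref{lemma3} to eliminate the resulting $\|\bmu^{t+1}-\bmu^t\|_F^2$ factor as well as to bound the $\bmu$-term directly, produces a single upper bound expressed entirely in terms of $\|\bW^{t+1}-\bW^t\|_F^2,\ \|\bB^{t+1}-\bB^t\|_F^2,\ \|\bU^{t+1}-\bU^t\|_F^2$, and the cross-step remainder $\|\bU^t-\bU^{t-1}\|_F^2$ with coefficient exactly $\tfrac{C_0C_1\rho_\mu^2}{\rho_\lambda}+C_1\rho_\mu$. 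Combining this with the primal descent from the previous paragraph gives a bound on $\mathcal{L}_{\boldsymbol{\rho}}^{t+1}-\mathcal{L}_{\boldsymbol{\rho}}^t$ that already produces the stated coefficients on $\|\bY^{t+1}-\bY^t\|_F^2,\ \|\bW^{t+1}-\bW^t\|_F^2,\ \|\bB^{t+1}-\bB^t\|_F^2$, but still carries the undesirable $\|\bU^t-\bU^{t-1}\|_F^2$ term.

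Finally, the potential-function modification shifts this stray term: the difference $\tilde{\mathcal{L}}_{\boldsymbol{\rho}}^{t+1}-\tilde{\mathcal{L}}_{\boldsymbol{\rho}}^t$ adds $(\tfrac{C_0C_1\rho_\mu^2}{\rho_\lambda}+C_1\rho_\mu)(\|\bU^{t+1}-\bU^t\|_F^2-\|\bU^t-\bU^{t-1}\|_F^2)$, exactly cancelling the $\|\bU^t-\bU^{t-1}\|_F^2$ contribution and doubling the weight on $\|\bU^{t+1}-\bU^t\|_F^2$. Collecting the $\bU$-coefficient then gives $-\rho_\lambda/2+C_0C_1/\rho_\lambda+C_1/\rho_\mu+2C_0C_1\rho_\mu^2/\rho_\lambda+2C_1\rho_\mu$, matching \eqref{tildeL}. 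The only real obstacle is book\-keeping: the $\bmu$-bound from Lemma~\ref{lemma3} feeds into \emph{both} the $\tfrac{1}{\rho_\lambda}\|\blam^{t+1}-\blam^t\|_F^2$ and the $\tfrac{1}{\rho_\mu}\|\bmu^{t+1}-\bmu^t\|_F^2$ terms, so each of its contributions must be tracked along two separate paths, which is precisely what produces the doubled cross-step coefficients and the mixed constants $C_0C_1/\rho_\lambda$ and $C_0C_1\rho_\mu^2/\rho_\lambda$ in the final inequality.
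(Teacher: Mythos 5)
Your proposal is correct and follows essentially the same route as the paper's proof: strong convexity of the four block subproblems (with moduli $\rho_\mu$, $2$, $\xi$, $\rho_\lambda$) for the primal descent, the dual-update identities to express the ascent as $\tfrac{1}{\rho_\lambda}\|\blam^{t+1}-\blam^t\|_F^2+\tfrac{1}{\rho_\mu}\|\bmu^{t+1}-\bmu^t\|_F^2$, Lemma~\ref{lemma3} applied along the two paths you describe, and the potential-function shift to absorb the $\|\bU^t-\bU^{t-1}\|_F^2$ remainder. The coefficient bookkeeping you outline reproduces the paper's constants exactly.
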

\begin{proof}
We first note that the objective functions of the {\color{red}$\bV$}-, $\bW$-, $\bB$-, and $\bU$-subproblems in (29a)\,--\,(29d) are all strongly convex, with modulus $\rho_\mu$, $2$, $\frac{\xi}{2}$, and $\rho_\lambda$, respectively.  Hence, similar to the proof of Lemma \ref{ALbound}, we have \vspace{-0.2cm}
 
 {\small$$
 \begin{aligned}
& \mathcal{L}_{\boldsymbol{\rho}}^{t+1}-\mathcal{L}_{\boldsymbol{\rho}}^t\\
 \leq&\hspace{-0.05cm} -\frac{\rho_{\mu}}{2}\|{\color{black}\bV^{t+1}}\hspace{-0.05cm}-\hspace{-0.05cm}{\color{black}\bV^t}\|_F^2-\frac{\xi}{4}\|\bB^{t+1}\hspace{-0.05cm}-\hspace{-0.05cm}\bB^t\|_F^2-\hspace{-0.05cm}\|\bW^{t+1}-\bW^t\|_F^2\\
 &-\frac{\rho_\lambda}{2}\|\bU^{t+1}-\bU^t\|_F^2+\frac{\|\blam^{t+1}-\blam^t\|_F^2}{\rho_{\lambda}} + \frac{\|\bmu^{t+1}-\bmu^t\|_F^2}{\rho_{\mu}}\\
\leq &-\frac{\rho_\mu}{2}\|{\color{black}\bV^{t+1}}-{\color{black}\bV^t}\|_F^2-(\frac{\xi}{4}-\frac{C_0}{\rho_\lambda})\|\bB^{t+1}-\bB^t\|_F^2\\
&-(1-\frac{C_0+C_0C_1}{\rho_\lambda}-\frac{C_1}{\rho_\mu})\|\bW^{t+1}-\bW^t\|_F^2\\
&-(\frac{\rho_\lambda}{2}-\frac{C_0C_1}{\rho_\lambda}\hspace{-0.05cm}-\hspace{-0.05cm}\frac{C_1}{\rho_\mu}-\frac{C_0C_1\rho_\mu^2}{\rho_\lambda}-C_1\rho_\mu)\|\bU^{t+1}-\bU^t\|_F^2\\
&+(\frac{C_0C_1\rho_\mu^2}{\rho_\lambda}+C_1\rho_\mu)\|\bU^t-{\bU}^{t-1}\|_F^2,
 \end{aligned}$$}
 \hspace{-0.35cm} where the first inequality follows from the strong convexity of each subproblem and the update rules of the Lagrange multipliers, and the second inequality is obtained by applying Lemma 3. Then Lemma 4 follows immediately by adding $({C_0C_1\rho_\mu^2}/{\rho_\lambda}+C_1\rho_\mu)(\|\bU^{t+1}-{\bU}^{t}\|_F^2-\|\bU^t-{\bU}^{t-1}\|_F^2)$ to both sides of the above expression.

\end{proof}
\emph{Proof of Theorem 2: }It is straightforward to see that we can choose sufficiently large $c_0$ and $\rho_0$ in Theorem 2 such that each coefficient in the r.h.s. of Lemma 4 is positive.  Then, by summing over \eqref{tildeL} from $t=0$ to $t=T$ and noting the lower boundedness of $\tilde{L}_{\rho}^t$, we can show that \vspace{-0.2cm}

{\small$$
\begin{aligned}
&\|{\color{black}\bV^{t+1}-\bV^t}\|_F\to 0, ~\|\bW^{t+1}-\bW^t\|_F\to 0,\\
&\|\bB^{t+1}-\bB^t\|_F\to 0,~\|\bU^{t+1}-\bU^{t}\|_F\to 0.
\end{aligned}$$}
\hspace{-0.15cm}Following similar steps as in Appendix \ref{app:AB}, we can get the desired result. 

{\section{Solution to (40)}
In this appendix, we derive the optimal solution to problem (40).  For notational simplicity, let $v:=v^t$. According to the KKT condition, its solution can be expressed as 
$$\mathbf{w}_k^*=\left(\bQ+\alpha^*\mathbf{I}_N\right)^{-1}\left(\mathbf{c}_k+\frac{\tau}{2v}\bw_k^t\right),~k=1,2,\dots, K,$$
where 
\begin{equation}\label{alpha:eta}
\alpha^*=v\mu_0(\mu_0\|\bW^*\|_F^2+P_c)+\frac{\tau}{2v}+\eta^*,
\end{equation} and $\eta^*$ is the Lagrangian multiplier associated with the transmit power constraint $\|\bW\|_F^2\leq P_T$.  
%Introducing function  
%$$\alpha(\eta,\xi)=  v\mu_0(\mu_0\xi^2+P_c)+\frac{\tau}{2}+\eta,$$and 
Let 
\begin{equation}\label{eq:Walpha}
\begin{aligned}
\bW(\alpha)=\left(\bQ+\alpha\mathbf{I}_N\right)^{-1}&\left(\mathbf{C}+\frac{\tau}{2v}\bW^t\right),
\end{aligned}
\end{equation}
where $\mathbf{C}=[\mathbf{c}_1,\mathbf{c}_2,\dots, \mathbf{c}_K].$ The Lagrangian multiplier $\eta^*$ is determined according to the complementary slackness condition:
\begin{equation}\label{slackness}
\big(\|\bW(\alpha)\|_F^2-P_T\big)\eta=0,
\end{equation}
together with the feasibility conditions  
$$\eta\geq 0\text{ and }\|\bW(\alpha)\|_F^2\leq P_T.$$
As discussed below (18), $\bW(\alpha)$ defined in \eqref{eq:Walpha} can be expressed as 
$$\|\bW(\alpha)\|_F^2=\sum_{n=1}^N\frac{\|\boldsymbol{\phi}_n\|^2}{(d_n+\alpha)^2},$$
where $\boldsymbol{\phi}_n^T$ is the $n$-th row of the matrix 
$\bU_Q^\dagger(\bC+\tau\bW^t/(2v))$, and $\bQ=\bU_Q\bD_Q\bU_Q^\dagger$ is the eigenvalue decomposition of $\bQ$ with $\bD_Q=\text{diag}(d_1,d_2,\dots,d_N)$. Using this, and combining \eqref{alpha:eta} and \eqref{slackness} together with the feasibility conditions, we obtain that $(\eta^*,\alpha^*)$ is the solution to the following equations:
\begin{subequations}\label{solution}
\begin{align}
&\alpha=v\mu_0^2\sum_{n=1}^N\frac{\|\boldsymbol{\phi}_n\|^2}{(d_n+\alpha)^2}+v\mu_0P_c+\frac{\tau}{2v}+\eta,\label{condition:a}\\
&\left(\sum_{n=1}^N\frac{\|\boldsymbol{\phi}_n\|^2}{(d_n+\alpha)^2}-P_T\right)\eta=0,\label{condition:b}\\
&\sum_{n=1}^N\frac{\|\boldsymbol{\phi}_n\|^2}{(d_n+\alpha)^2}\leq P_T,~\eta\geq 0.
\end{align}
\end{subequations}
We next  solve the above equations to obtain $(\eta^*,\alpha^*)$.
 
\emph{1) Case 1: $\eta=0$}. We first consider the case where the transmit power constraint is inactive. Define
\begin{equation}\label{def:Falpha}
F(\alpha):=\alpha-v\mu_0^2\sum_{n=1}^N\frac{\|\boldsymbol{\phi}_n\|^2}{(d_n+\alpha)^2}-v\mu_0P_c-\frac{\tau}{2v}.
\end{equation}
With $\eta=0$, \eqref{condition:a} and \eqref{condition:b} reduce to $F(\alpha)=0$. Note that $F(\alpha)$ is monotonically increasing in $\alpha\geq 0$ with $F(0)<0$ and $\lim_{\alpha\to\infty}F(\alpha)=\infty$. Therefore, there exists a unique positive solution to $F(\alpha)=0$, denoted by $\alpha_0> 0$, which can be efficiently solved by one-dimensional bisection. If 
\begin{equation}\label{Walpha0}
\|\bW(\alpha_0)\|_F^2\leq P_T,
\end{equation} we conclude that $$\eta^*=0 \text{ and } \alpha^*=\alpha_0.$$ Otherwise, we turn to Case 2.

\emph{2)  Case 2: $\eta>0$.}  In this case, the transmit power constraint is active, and  \eqref{condition:a} and \eqref{condition:b} reduce to 
\begin{subequations}
\begin{align}
&\alpha=v\mu_0^2P_T+v\mu_0P_c+\frac{\tau}{2v}+\eta,\label{condition:a2}\\
&\sum_{n=1}^N\frac{\|\boldsymbol{\phi}_n\|^2}{(d_n+\alpha)^2}=P_T.\label{condition:b2}
\end{align}\end{subequations}
Since \eqref{Walpha0} fails, i.e.,  $\|\bW(\alpha_0)\|_F^2>P_T$, the solution to \eqref{condition:b2} satisfies $\alpha^*> \alpha_0\geq 0$, which can be efficiently obtained via a one-dimensional bisection search.  Then from \eqref{condition:a2}, 
$$\eta^*=\alpha^*- v\mu_0^2P_T-v\mu_0P_c-\frac{\tau}{2v}.$$ 
To show that $(\eta^*,\alpha^*)$ is the solution to \eqref{solution}, it remains to prove that $\eta^*> 0$. Using \eqref{condition:b2} and the definition in \eqref{def:Falpha}, $\eta^*$ can be expressed as $\eta^*=F(\alpha^*)$. Since $\alpha^*>\alpha_0$ and $F(\alpha)$ is monotonically increasing, we can conclude that 
$$\eta^*=F(\alpha^*)> F(\alpha_0)=0.$$
Therefore, $(\eta^*,\alpha^*)$ is the solution to \eqref{solution}.}
% Let $\bQ=\bU\bD\bU^{\dagger}$ be the SVD of $\bQ$, $\bar{\bW}=\bU\bW$, and $\bar{\mathbf{c}}_k=\bU{\mathbf{c}_k}$, problem (40) can be equivalent transformed to  
\section{Convergence Behavior of the Proposed Algorithms}
\setcounter{figure}{7}
{\color{black}\begin{figure}
\subfigure[{\color{black}Objective value (sum-rate).}]{\includegraphics[width=0.23\textwidth]{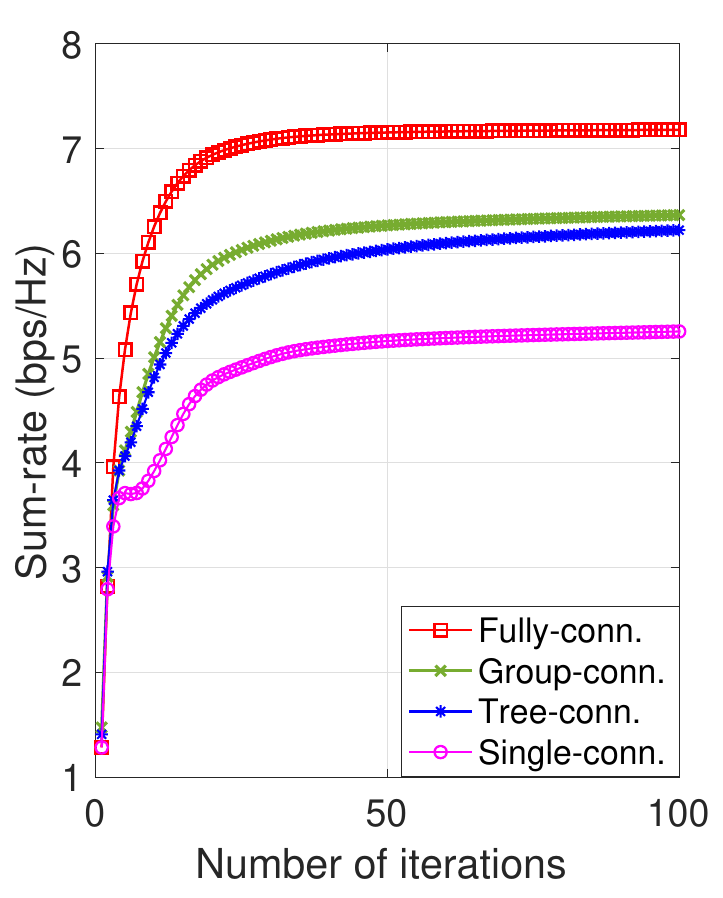}}
\subfigure[{\color{black}Constraint violation.}]{\includegraphics[width=0.23\textwidth]{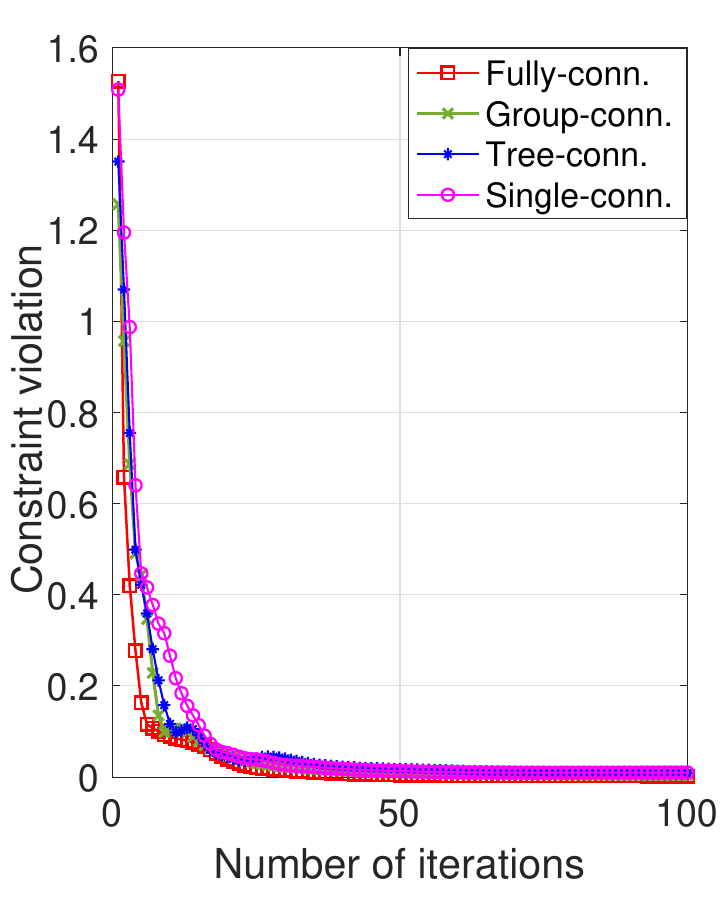}}
\centering
\caption{{\color{black}Convergence behavior of the pp-ADMM algorithm for the sum-rate maximization problem ($N=K=4, M=32, G_s=4$).}  }
\label{fig:converge}
\end{figure}

\begin{figure}
\subfigure[{\color{black}Objective value (transmit power).}]{\includegraphics[width=0.23\textwidth]{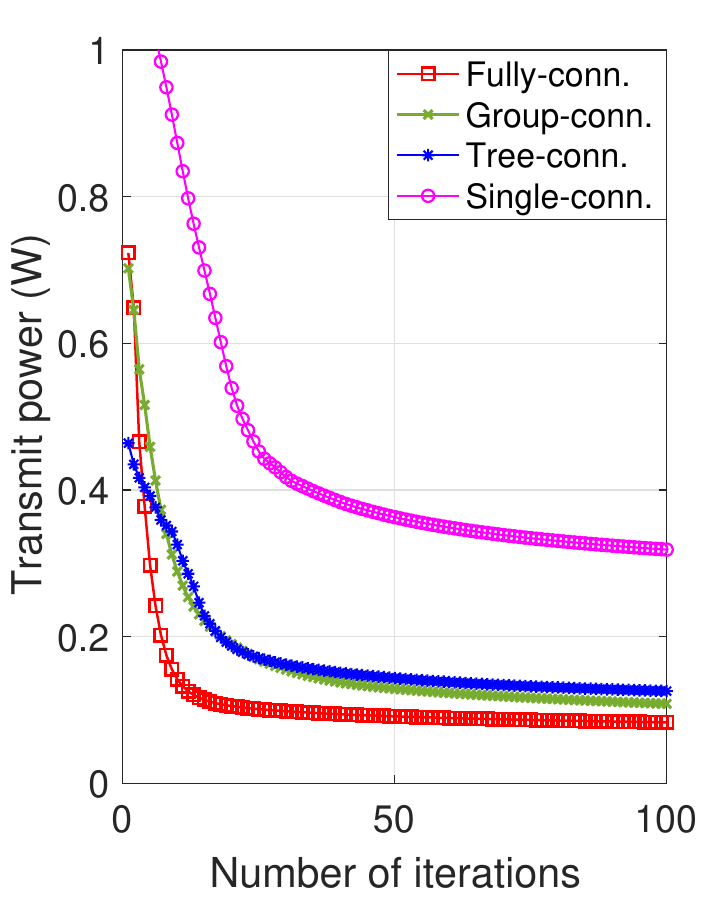}}
\subfigure[{\color{black}Constraint violation.}]{\includegraphics[width=0.23\textwidth]{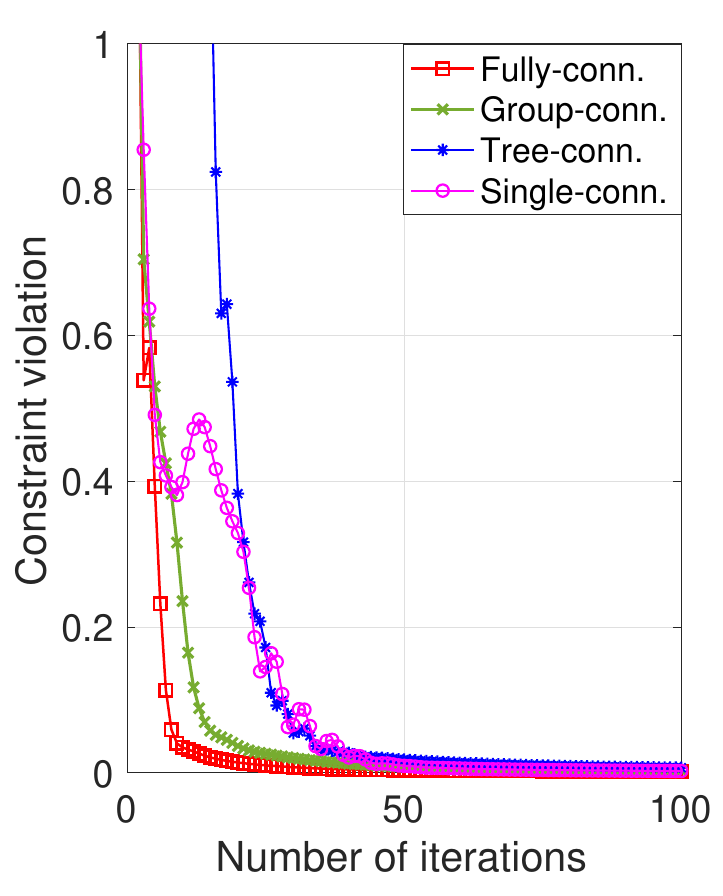}}
\centering
\caption{{\color{black}Convergence behavior of the pp-ADMM algorithm for the transmit power minimization problem  ($N=K=4, M=32, G_s=4, \Gamma_k=10$ dB). }}
\label{fig:converge2}
\end{figure}
In this appendix, we numerically verify the convergence of the proposed pp-ADMM algorithms for the sum-rate maximization problem and the transmit power minimization problem, respectively. For these two problems, the constraint violations are measured by $\|(\mathbf{I}-\mathrm{i}Z_0\bB)\bU-(\mathbf{I}+\mathrm{i}Z_0\bB)\bH\|_F$ and $\|\bV-\bU^\dagger\bG\bW\|_F+\|(\mathbf{I}-\mathrm{i}Z_0\bB)\bU-(\mathbf{I}+\mathrm{i}Z_0\bB)\bH\|_F$, respectively. As shown in the figures, for all considered architectures, the proposed pp-ADMM algorithm converges rapidly. The objective values (i.e., sum-rate and transmit power) converge and the constraint violations vanish within $100$ iterations,  ensuring the computational efficiency of the proposed algorithm.}

\vspace{-0.05cm}
  \bibliographystyle{IEEEtran}
\bibliography{BDRIS}

@ARTICLE{generalmodel,
  author={Nerini, Matteo and Shen, Shanpu and Li, Hongyu and Di Renzo, Marco and Clerckx, Bruno},
  journal={IEEE Trans. Wireless Commun.}, 
  title={A Universal Framework for Multiport Network Analysis of Reconfigurable Intelligent Surfaces}, 
  year={2024},
  volume={23},
  number={10},
  pages={14575-14590},
  month={Oct.},
  keywords={Reconfigurable intelligent surfaces;Impedance;Analytical models;Transmitters;Receivers;Mutual coupling;Wireless communication;Admittance parameters;impedance parameters;multiport network analysis;reconfigurable intelligent surface;scattering parameters},
  doi={10.1109/TWC.2024.3416825}}

@ARTICLE{PDD,
  author={Zhou, Yuyan and Liu, Yang and Li, Hongyu and Wu, Qingqing and Shen, Shanpu and Clerckx, Bruno},
  journal={IEEE Trans. Wireless Commun.}, 
  title={Optimizing Power Consumption, Energy Efficiency, and Sum-Rate Using Beyond Diagonal {RIS}—A Unified Approach}, 
  year={2024},
  month={Jul.},
  volume={23},
  number={7},
  pages={7423-7438},
  keywords={Impedance;Transmission line matrix methods;Array signal processing;Wireless communication;Scattering parameters;Quality of service;Network topology;Beyond diagonal (BD)-reconfigurable intelligent surface (RIS);power minimization;energy efficiency},
  doi={10.1109/TWC.2023.3341262}}

@ARTICLE{twostage,
  author={Fang, Tianyu and Mao, Yijie},
  journal={IEEE Commun. Lett.}, 
  title={A Low-Complexity Beamforming Design for Beyond-Diagonal {RIS} Aided Multi-User Networks}, 
  year={2024},
  volume={28},
  number={1},
  pages={203-207},
  keywords={Array signal processing;Symmetric matrices;Optimization;Linear programming;Computer architecture;Scattering;Matrix decomposition;Beyond diagonal reconfigurable intelligent surface;multi-user multi-antenna communications},
  doi={10.1109/LCOMM.2023.3333411}}

@ARTICLE{DFRC,
  author={Wang, Bowen and Li, Hongyu and Shen, Shanpu and Cheng, Ziyang and Clerckx, Bruno},
  journal={IEEE Trans. Commun.}, 
  title={A Dual-Function Radar-Communication System Empowered by Beyond Diagonal Reconfigurable Intelligent Surface}, 
  month={Mar.},
  year={2025},
  volume={73},
  number={3},
  pages={1501-1516}
  }

@ARTICLE{OFDM,
  author={Li, Hongyu and Nerini, Matteo and Shen, Shanpu and Clerckx, Bruno},
  journal={IEEE Trans. Wireless Commun.}, 
  title={Beyond Diagonal Reconfigurable Intelligent Surfaces in Wideband {OFDM} Communications: Circuit-Based Modeling and Optimization}, 
  month={Apr.},
  year={2025},
  volume={24},
  number={4},
  pages={3623-3636}}

@ARTICLE{tree,
  author={Nerini, Matteo and Shen, Shanpu and Li, Hongyu and Clerckx, Bruno},
  journal={IEEE Trans. Wireless Commun.}, 
  title={Beyond Diagonal Reconfigurable Intelligent Surfaces Utilizing Graph Theory: Modeling, Architecture Design, and Optimization}, 
  month={Aug.},
  year={2024},
  volume={23},
  number={8},
  pages={9972-9985}}

@ARTICLE{mutualcoupling,
  author={Li, Hongyu and Shen, Shanpu and Nerini, Matteo and Di Renzo, Marco and Clerckx, Bruno},
  journal={IEEE Commun. Lett.}, 
  title={Beyond Diagonal Reconfigurable Intelligent Surfaces With Mutual Coupling: Modeling and Optimization}, 
 month={Apr.},
  year={2024},
  volume={28},
  number={4},
  pages={937-941}}

@ARTICLE{grouping,
  author={Li, Hongyu and Shen, Shanpu and Clerckx, Bruno},
  journal={IEEE Trans. Vehicular Tech.}, 
  title={A Dynamic Grouping Strategy for Beyond Diagonal Reconfigurable Intelligent Surfaces With Hybrid Transmitting and Reflecting Mode}, 
  month={Dec.},
  year={2023},
  volume={72},
  number={12},
  pages={16748-16753}}

@ARTICLE{group_conn,
  author={Li, Hongyu and Shen, Shanpu and Clerckx, Bruno},
  journal={IEEE Trans.  Wireless Commun.}, 
  title={Beyond Diagonal Reconfigurable Intelligent Surfaces: From Transmitting and Reflecting Modes to Single-, Group-, and Fully-Connected Architectures}, 
  month={Apr.},
  year={2023},
  volume={22},
  number={4},
  pages={2311-2324}}

@ARTICLE{SIM,
  author={Nerini, Matteo and Clerckx, Bruno},
  journal={IEEE Commn. Lett.}, 
  title={Physically Consistent Modeling of Stacked Intelligent Metasurfaces Implemented With Beyond Diagonal {RIS}},
  month={Jul.},
  year={2024},
  volume={28},
  number={7},
  pages={1693-1697}}

@ARTICLE{RSMA,
  author={Li, Hongyu and Shen, Shanpu and Clerckx, Bruno},
  journal={IEEE Trans. Wireless Commun.}, 
  title={Synergizing Beyond Diagonal Reconfigurable Intelligent Surface and Rate-Splitting Multiple Access}, 
  month={Aug.},
  year={2024},
  volume={23},
  number={8},
  pages={8717-8729}}

@ARTICLE{grouping2,
  author={Nerini, Matteo and Shen, Shanpu and Clerckx, Bruno},
  journal={IEEE Commun. Lett.}, 
  title={Static Grouping Strategy Design for Beyond Diagonal Reconfigurable Intelligent Surfaces}, 
  month={Jul.},
  year={2024},
  volume={28},
  number={7},
  pages={1708-1712}}

@ARTICLE{lossy,
  author={Nerini, Matteo and Ghiaasi, Golsa and Clerckx, Bruno},
  journal={IEEE Trans. Commun.}, 
  title={Localized and Distributed Beyond Diagonal Reconfigurable Intelligent Surfaces With Lossy Interconnections: Modeling and Optimization}, 
  month={Sept.},
  year={2025},
  volume={73},
  number={9},
  pages={8140-8154}}

@ARTICLE{forest,
  author={Nerini, Matteo and Clerckx, Bruno},
  journal={IEEE Commun. Lett.}, 
  title={Pareto Frontier for the Performance-Complexity Trade-Off in Beyond Diagonal Reconfigurable Intelligent Surfaces}, 
  month={Oct.},
  year={2023},
  volume={27},
  number={10},
  pages={2842-2846}}

@ARTICLE{discrete,
  author={Nerini, Matteo and Shen, Shanpu and Clerckx, Bruno},
  journal={IEEE Trans. Vehicular Tech.}, 
  title={Discrete-Value Group and Fully Connected Architectures for Beyond Diagonal Reconfigurable Intelligent Surfaces}, 
  month={Dec.},
  year={2023},
  volume={72},
  number={12},
  pages={16354-16368}}

@ARTICLE{coverage,
  author={Li, Hongyu and Shen, Shanpu and Clerckx, Bruno},
  journal={IEEE J. Sel. Areas  Commun.}, 
  title={Beyond Diagonal Reconfigurable Intelligent Surfaces: A Multi-Sector Mode Enabling Highly Directional Full-Space Wireless Coverage}, 
  month={Aug.},
  year={2023},
  volume={41},
  number={8},
  pages={2446-2460}}

@ARTICLE{closeform,
  author={Nerini, Matteo and Shen, Shanpu and Clerckx, Bruno},
  journal={IEEE Trans. Wireless Commun.}, 
  title={Closed-Form Global Optimization of Beyond Diagonal Reconfigurable Intelligent Surfaces}, 
  month={Feb.},
  year={2024},
  volume={23},
  number={2},
  pages={1037-1051}}

@INPROCEEDINGS{RSMA0,
  author={Fang, Tianyu and Mao, Yijie and Shen, Shanpu and Zhu, Zhencai and Clerckx, Bruno},
  booktitle={IEEE Int. Conf. Commun. Workshop.}, 
  title={Fully Connected Reconfigurable Intelligent Surface Aided Rate-Splitting Multiple Access for Multi-User Multi-Antenna Transmission}, 
  month={May},
  year={2022},
address={Seoul, Korea},
  pages={675-680}}

@ARTICLE{BDRIS,
  author={Shen, Shanpu and Clerckx, Bruno and Murch, Ross},
  journal={IEEE Trans. Wireless. Commun.}, 
  title={Modeling and Architecture Design of Reconfigurable Intelligent Surfaces Using Scattering Parameter Network Analysis}, 
  month={Feb.},
  year={2022},
  volume={21},
  number={2},
  pages={1229-1243},
  keywords={Impedance;Wireless communication;Scattering;Surface impedance;Scattering parameters;Transmitting antennas;Symmetric matrices;Network analysis;reconfigurable intelligent surface;reflection coefficient;scattering parameter/matrix},
  doi={10.1109/TWC.2021.3103256}}

@ARTICLE{channel,
  author={Li, Hongyu and Shen, Shanpu and Zhang, Yumeng and Clerckx, Bruno},
  journal={IEEE Trans. Signal Process.}, 
  title={Channel Estimation and Beamforming for Beyond Diagonal Reconfigurable Intelligent Surfaces}, 
  year={2024},
  volume={72},
  number={},
  pages={3318-3332}}

@ARTICLE{FP,
  author={Shen, Kaiming and Yu, Wei},
  journal={IEEE Trans. Signal Process.}, 
  title={Fractional Programming for Communication Systems—Part {I}: Power Control and Beamforming}, 
  year={2018},
  month={May},
  volume={66},
  number={10},
  pages={2616-2630}}

@ARTICLE{FP2,
  author={Shen, Kaiming and Yu, Wei},
  journal={IEEE Trans. Signal Process.}, 
  title={Fractional Programming for Communication Systems—Part {II}: Uplink Scheduling via Matching}, 
  year={2018},
    month={May},
  volume={66},
  number={10},
  pages={2631-2644}}

@article{ADMM,
	author = {Hong, Mingyi and Luo, Zhi-Quan and Razaviyayn, Meisam},
	journal = {SIAM J. Optim.},
	number = {1},
	pages = {337-364},
	title = {Convergence Analysis of Alternating Direction Method of Multipliers for a Family of Nonconvex Problems},
	volume = {26},
	year = {2016}}

@article{ADMM2,
	author = {Wang, Yu and Yin, Wotao and Zeng, Jinshan},
	journal = {J.  Sci. Comput.},
	number = {1},
	pages = {29--63},
	title = {Global Convergence of {ADMM} in Nonconvex Nonsmooth Optimization},
	volume = {78},
	year = {2019}}

@ARTICLE{survey,
  author={Liu, Ya-Feng and Chang, Tsung-Hui and Hong, Mingyi and Wu, Zheyu and Man-Cho So, Anthony and Jorswieck, Eduard A. and Yu, Wei},
  journal={IEEE J. Sel. Areas Commun.}, 
  title={A Survey of Recent Advances in Optimization Methods for Wireless Communications}, 
  month={Nov.},
  year={2024},
  volume={42},
  number={11},
  pages={2992-3031}}

@article{boyd2011distributed,
  title={Distributed optimization and statistical learning via the alternating direction method of multipliers},
  author={Boyd, Stephen and Parikh, Neal and Chu, Eric and Peleato, Borja and Eckstein, Jonathan and others},
  journal={Found. Trends Mach. Learn.},
  volume={3},
  number={1},
  pages={1--122},
  year={2011}
}

@book{microwavebook,
  title={Microwave Engineering},
  author={D.M. Pozar},
  year={2011},
  publisher={John Wiley \& Sons}
}

@ARTICLE{RIS1,
  author={Wu, Qingqing and Zhang, Rui},
  journal={IEEE Trans. Wireless Commun.}, 
  title={Intelligent Reflecting Surface Enhanced Wireless Network via Joint Active and Passive Beamforming}, 
  month={Nov.},
  year={2019},
  volume={18},
  number={11},
  pages={5394-5409}}

@ARTICLE{RIS_tutorial,
  author={Wu, Qingqing and Zhang, Shuowen and Zheng, Beixiong and You, Changsheng and Zhang, Rui},
  journal={IEEE Trans. Commun.}, 
  title={Intelligent Reflecting Surface-Aided Wireless Communications: A Tutorial}, 
  month={May},
  year={2021},
  volume={69},
  number={5},
  pages={3313-3351}}

@ARTICLE{RIS_sumrate,
  author={Guo, Huayan and Liang, Ying-Chang and Chen, Jie and Larsson, Erik G.},
  journal={IEEE Trans. Wireless Commun.}, 
  title={Weighted Sum-Rate Maximization for Reconfigurable Intelligent Surface Aided Wireless Networks}, 
  month={May},
  year={2020},
  volume={19},
  number={5},
  pages={3064-3076}}

@ARTICLE{RIS_survey2,
  author={Liu, Yuanwei and Liu, Xiao and Mu, Xidong and Hou, Tianwei and Xu, Jiaqi and Di Renzo, Marco and Al-Dhahir, Naofal},
  journal={IEEE Commun. Surveys Tuts.},  
  title={Reconfigurable Intelligent Surfaces: Principles and Opportunities}, 
  month={3rd Quart.},
  year={2021},
  volume={23},
  number={3},
  pages={1546-1577}}

@ARTICLE{WMMSE,
  author={Shi, Qingjiang and Razaviyayn, Meisam and Luo, Zhi-Quan and He, Chen},
  journal={IEEE Trans. Signal Process.}, 
  title={An Iteratively Weighted MMSE Approach to Distributed Sum-Utility Maximization for a {MIMO} Interfering Broadcast Channel}, 
  month={Sept.},
  year={2011},
  volume={59},
  number={9},
  pages={4331-4340}}

@ARTICLE{sumrate2,
  author={Papandriopoulos, John and Evans, Jamie S.},
  journal={IEEE Trans. Inf. Theory}, 
  title={{SCALE}: A Low-Complexity Distributed Protocol for Spectrum Balancing in Multiuser {DSL} Networks}, 
  month={Aug.},
  year={2009},
  volume={55},
  number={8},
  pages={3711-3724}}

@ARTICLE{duality,
  author={Yu, Wei and Lan, Tian},
  journal={IEEE Trans. Signal Process.}, 
  title={Transmitter Optimization for the Multi-Antenna Downlink With Per-Antenna Power Constraints}, 
  month={Jun.},
  year={2007},
  volume={55},
  number={6},
  pages={2646-2660}}

@ARTICLE{SOCP,
  author={Wiesel, A. and Eldar, Y.C. and Shamai, S.},
  journal={IEEE Trans. Signal Process.}, 
  title={Linear precoding via conic optimization for fixed {MIMO} receivers}, 
  month={Jan.},
  year={2006},
  volume={54},
  number={1},
  pages={161-176}}

@article{bilinear1,
  title={{ADMM} for multiaffine constrained optimization},
  author={Gao, Wenbo and Goldfarb, Donald and Curtis, Frank E},
  journal={Optim. Methods Softw.},
  volume={35},
  number={2},
  pages={257--303},
  year={2020}
}

@article{bilinear2,
title = {Bilinear constraint based {ADMM} for mixed Poisson-Gaussian noise removal},
journal = {Inverse Probl. Imag.},
volume = {15},
number = {2},
pages = {339-366},
year = {2021},
author = {Jie Zhang and Yuping Duan and Yue Lu and Michael K. Ng and Huibin Chang},
keywords = {Mixed Poisson-Gaussian noise, total variation, alternating direction method of multipliers, bilinear constraint, convergence}
}

@ARTICLE{duplex,
  author={Li, Hongyu and Clerckx, Bruno},
  journal={IEEE Trans. Commun.}, 
  title={Non-Reciprocal Beyond Diagonal {RIS}: Multiport Network Models and Performance Benefits in Full-Duplex Systems}, 
  volume={73},
  number={11},
  pages={12221-12234},
  month={Nov.},
  year={2025}}

@article{OFDM2,
      title={Maximizing Spectral and Energy Efficiency in Multi-user {MIMO OFDM} Systems with {RIS} and Hardware Impairment}, 
      author={Mohammad Soleymani and Ignacio Santamaria and Aydin Sezgin and Eduard Jorswieck},
      year={2024},
      url={https://arxiv.org/abs/2401.11921}, 
}

@misc{UAV,
      title={Joint Computation and Communication Resource Optimization for Beyond Diagonal {UAV-IRS} Empowered {MEC} Networks}, 
      author={Asad Mahmood and Thang X. Vu and Wali Ullah Khan and Symeon Chatzinotas and Björn Ottersten},
      year={2024},
            url={https://arxiv.org/abs/2311.07199}, 
}

@ARTICLE{mutualcoupling2,
  author={Nerini, Matteo and Li, Hongyu and Clerckx, Bruno},
  journal={IEEE Trans. Wireless Commun.}, 
  title={Global Optimal Closed-Form Solutions for Intelligent Surfaces With Mutual Coupling: Is Mutual Coupling Detrimental or Beneficial?}, 
  year={2026},
  volume={25},
  number={},
  pages={3201-3214}}

@ARTICLE{OFDM3,
  author={{\"O}zlem Tu{\u g}fe Demir and Emil Bj{\"o}rnson},
  journal={IEEE Wireless Commun. Lett.}, 
  title={Wideband Channel Capacity Maximization With Beyond Diagonal {RIS} Reflection Matrices}, 
  month={Oct.},
  year={2024},
  volume={13},
  number={10},
  pages={2687-2691}}

@ARTICLE{attack,
  author={Wang, Haoyu and Han, Zhu and Swindlehurst, A. Lee},
  journal={IEEE Open J. Commun. Soc.}, 
  title={Channel Reciprocity Attacks Using Intelligent Surfaces With Non-Diagonal Phase Shifts}, 
  year={2024},
  volume={5},
  pages={1469-1485}}

@ARTICLE{BDRIS_survey,
  author={Li, Hongyu and Shen, Shanpu and Nerini, Matteo and Clerckx, Bruno},
  journal={IEEE Commun. Mag.}, 
  title={Reconfigurable Intelligent Surfaces {2.0}: Beyond Diagonal Phase Shift Matrices}, 
  month={Mar.},
  year={2024},
  volume={62},
  number={3},
  pages={102-108}}

@ARTICLE{yang,
  author={Zhao, Yang and Li, Hongyu and Clerckx, Bruno and Franceschetti, Massimo},
  journal={IEEE Trans. Signal Process.}, 
  title={{MIMO} Channel Shaping and Rate Maximization Using Beyond-Diagonal {RIS}}, 
  year={2025},
  volume={73},
  number={},
  pages={4397-4414},
  keywords={Reconfigurable intelligent surfaces;Scattering;Array signal processing;Optimization;Manifolds;Wireless sensor networks;Vectors;Transceivers;Matrix decomposition;Integrated circuit modeling;MIMO;RIS;channel shaping;rate maximization;singular value analysis;manifold optimization}}

@ARTICLE{FP3,
  author={Shen, Kaiming and Yu, Wei and Zhao, Licheng and Palomar, Daniel P.},
  journal={IEEE/ACM Trans. Netw.}, 
  title={Optimization of {MIMO} Device-to-Device Networks via Matrix Fractional Programming: A Minorization–Maximization Approach}, 
  month={Oct.},
  year={2019},
  volume={27},
  number={5},
  pages={2164-2177}}

@ARTICLE{graphtit,
  author={Wu, Zheyu and Clerckx, Bruno},
  journal={IEEE Trans. Inf. Theory}, 
  title={Beyond-Diagonal {RIS} in Multiuser {MIMO}: Graph Theoretic Modeling and Optimal Architectures With Low Complexity}, 
  month={Nov.},
  year={2025},
  volume={71},
  number={11},
  pages={8506-8523}}

@INPROCEEDINGS{discretepgd,
  author={Liu, Huikang and Yue, Man-Chung and So, Anthony Man-Cho and Ma, Wing-Kin},
  booktitle={Proc. IEEE Int. Workshop Signal Process. Adv. Wireless Commun.}, 
  title={A discrete first-order method for large-scale {MIMO} detection with provable guarantees}, 
  month={Jul.},
  year={2017},
  pages={1-5}}

@article{lawler1966branch,
  title={Branch-and-bound methods: A survey},
  author={Lawler, Eugene L and Wood, David E},
  journal={Oper. Res.},
  volume={14},
  number={4},
  pages={699--719},
  year={1966},
  publisher={INFORMS}
}

@ARTICLE{RIS_MM,
  author={Pan, Cunhua and Ren, Hong and Wang, Kezhi and Xu, Wei and Elkashlan, Maged and Nallanathan, Arumugam and Hanzo, Lajos},
  journal={IEEE Trans. on Wireless Commun.}, 
  title={Multicell {MIMO} Communications Relying on Intelligent Reflecting Surfaces}, 
  month={Aug.},
  year={2020},
  volume={19},
  number={8},
  pages={5218-5233}}

@ARTICLE{Zhang_Statistical,
  author={Zhang, Huan and Ma, Shaodan and Shi, Zheng and Zhao, Xin and Yang, Guanghua},
  journal={IEEE Trans.  Wireless Commun.}, 
  title={Sum-Rate Maximization of {RIS}-Aided Multi-User {MIMO} Systems With Statistical {CSI}}, 
  month={Jul.},
  year={2023},
  volume={22},
  number={7},
  pages={4788-4801}}

@ARTICLE{Zhou_robust,
  author={Zhou, Gui and Pan, Cunhua and Ren, Hong and Wang, Kezhi and Renzo, Marco Di and Nallanathan, Arumugam},
  journal={IEEE Wirless. Commun. Lett.}, 
  title={Robust Beamforming Design for Intelligent Reflecting Surface Aided {MISO} Communication Systems}, 
  month={Oct.},
  year={2020},
  volume={9},
  number={10},
  pages={1658-1662}}

@ARTICLE{Hong_robust,
  author={Hong, Sheng and Pan, Cunhua and Ren, Hong and Wang, Kezhi and Chai, Kok Keong and Nallanathan, Arumugam},
  journal={IEEE Trans. Wireless Commun.}, 
  title={Robust Transmission Design for Intelligent Reflecting Surface-Aided Secure Communication Systems With Imperfect Cascaded {CSI}}, 
  month={Apr.},
  year={2021},
  volume={20},
  number={4},
  pages={2487-2501}}

\end{document}